\documentclass[a4paper]{report}
\usepackage[utf8]{inputenc}
\usepackage[T1]{fontenc}
\usepackage{MyRJ}
\usepackage{amsmath,amssymb,array}
\usepackage{booktabs}
\usepackage{xspace}
\usepackage{cleveref}

\usepackage{amsmath, amsthm, amssymb}
\newtheorem{proposition}{Proposition}
\newtheorem{remark}{Remark}
\newtheorem{Example}{Example}

\usepackage{listings} 

\lstdefinestyle{mystyle}
{
    language = R,
    basicstyle=\footnotesize,
   numbers=left,
    backgroundcolor = {\color{backcolour}},
    stringstyle = {\color{codegreen}},
     identifierstyle={\color{green!40!black}},
    keywordstyle = {\color{magenta}},
    keywordstyle = [2]{\color{blue}},
    keywordstyle = [3]{\color{yellow}},
    keywordstyle = [4]{\color{teal}},
    otherkeywords = {in,=,@,$, subsample, spli, net, seq2,gather,li,ncomb,gramian, project, colMeans},
    morekeywords = [2]{in,=,@,$},
    morekeywords = [3]{},
    literate={Bsplines}{{\color{green!40!black} Bsplines \color{black}}}{7}
   }
\lstset{style=mystyle}

\definecolor{codegreen}{rgb}{0,0.6,0}
\definecolor{backcolour}{rgb}{0.95,0.95,0.92}

\definecolor{darkblue}{HTML}{2C326E}
\definecolor{mygreen}{rgb}{0,0.6,0}
\definecolor{mygray}{rgb}{0.5,0.5,0.5}
\definecolor{mymauve}{rgb}{0.58,0,0.82}

\begin{document}

\begin{article}
\title{\sc{\tt Splinets} -- splines through the Taylor expansion, their support sets and 
orthogonal bases\vspace{1cm}}
\author{Krzysztof Podg\'orski \\
Department of Statistics\\
Lund University
}

\maketitle

\abstract{
A new representation of splines that targets efficiency in the analysis of functional data is implemented. 
The efficiency is achieved through two novel features:  using the recently introduced orthonormal spline bases, the so-called {\it splinets} and accounting for the spline support sets in the proposed spline object representation. 
The recently-introduced orthogonal splinets are evaluated by {\it dyadic orthogonalization} of the $B$-splines.
The package is built around the {\it Splinets}-object that represents a collection of splines.
It treats splines as mathematical functions and contains information about the support sets and the values of the derivatives at the knots that uniquely define these functions.
Algebra and calculus of splines utilize the local Taylor expansions at the knots within the support sets. 
Several orthonormalization procedures of the $B$-splines are implemented including the recommended dyadic method leading to the splinets. 
The method bases on a dyadic algorithm that can be also viewed as the efficient method of diagonalizing a band matrix. 
The locality of the $B$-splines in terms of the support sets is, to a great extend, preserved in the corresponding splinet. 
This together with implemented algorithms utilizing locality of the supports provides a valuable computational tool for the functional data analysis. 
The benefits are particularly evident when the sparsity in the data plays important role. 
Various diagnostic tools are provided allowing to maintain stability of the computations. 
Finally, the projection operation to the space of splines is implemented that facilitates functional data analysis. 
An example of simple functional analysis of the data using the tools in the package is presented.
The functionality of the package extends beyond the splines to piecewise polynomial functions, although the splines are its focus.  
}

\section{Introduction}
\label{sec:intro}
In functional analysis, it is often desired that functions considered are continuous or even differentiable up to a certain order. 
From this perspective, spline functions given over a set of knots form convenient finite-dimensional functional spaces.  
There exist many R-packages that handle splines, see \cite{Perperoglou:2019aa}. 
However, none of them treats consistently as elements of functional spaces with explicitly evaluated orthogonal bases. 
The proposed package, {\tt Splinets},  approaches to splines exactly like this by using an object that represents a set of functional splines and providing efficient orthogonal bases as such objects.
This focus on functional form of splines and functional analysis approach to them makes {\tt Splinets} particularly suitable for functional data analysis.  
The care was taken to make this functional treatment efficient through first, accounting for support sets of spline, then by using efficient and novel orthogonal bases. 

For a given set of knots and a smoothness order, a spline between two subsequent knots is equal to a polynomial that is smoothly connected at the knots to the polynomials over the neighboring intervals. 
The order of smoothness at the knots is equal to the number of derivatives that are continuous at these knots, including the zero order derivative, i.e. the function itself.
A selection of an order that is higher than zero makes splines a smoother alternative to the piecewise constant functions.
Given a smoothness order and knots, the splines form a finite dimensional functional space and one can consider a suitable basis of functions that spans it.  
The $B$-splines are the most popular bases of splines \cite{Boor1978APG, schumaker2007spline}. 
Their locality expressed by their support, i.e. the sets over which they are non-zero is one of their main features. 
In order to utilize this feature in numerical implementations, one would have to keep track of the support sets of splines. 
In the presented implementation of splines, the object stores the information about the support sets.
More specifically, the spline object contains a sequence of disjoint intervals such that their union is equal to the support set. 
The object stores also values of the derivatives at the knots which allows to use the Taylor expansion at the knots to evaluate splines at any point. 
Since the Taylor representation is local with respect to knots it is characterized by numerical stability in the computations.

More specifically, the main object {\tt Splinets} implemented through the S4 system for the OOP in {\sf R} is
 defined through {\tt setClass} function with the following slots
{\footnotesize \begin{verbatim}
representation(
           knots="vector", smorder="numeric", equid="logical", type = "character",
           supp="list", der="list", taylor = "matrix", type = "character", epsilon="numeric"
                    ),
         \end{verbatim}
         }
\noindent which represents a collection of splines,  all built over the same knots given in {\tt knots}, of the smoothness order $k$ given in {\tt smorder}.
Further {\tt supp}  is the list of matrices having row-wise pairs of the endpoints of the intervals, the union of which constitutes the support set of a particular spline,  and the flag {\tt equid} informs about the equally placed knots, for which the computation can be significantly accelerated. 
The matrices of the derivatives at the knots inside the support sets are given in the list {\tt der} of 
 matrices, where an element in the list is referring to a particular spline in our collection, the length of the list corresponds to the number of splines in the object.  
Descriptions of other fields are given in the {\tt Splinets}-package but are not crucial for this presentation.

While the locality of the representation expressed by the support sets are making the package suitable for analysis of sparse data, its most important feature  is efficiently using the spline bases. 
The standard $B$-bases are evaluated and the efficiency is achieved by explicitly using the support in implemented algebra and calculus of splines. 
There is, however, one problem with the $B$-splines bases that adds a computational burden when they are used to decompose a function and for functional data analysis. 
Namely, the $B$-splines are not orthogonal. 
Since any basis in a Hilbert space can be orthogonalized, it is to the point to consider orthonormalization of the $B$-splines,  \cite{nguyen2015construction,goodman2003class,cho2005class}.
In \cite{Liu2019SplinetsE}, a new natural orthogonalization method for the $B$-splines has been introduced. 
The orthogonalization was argued to be the most appropriate since it, firstly, preserves most from the original structure of the $B$-splines and, secondly, obtains computational efficiency both in the basis element evaluations and in the spectral decomposition of a functional signal.

Although fundamentally different, the orthogonalization method was inspired by one-sided and two-sided orthogonalization discussed in \cite{mason1993orthogonal}.
Since the constructed basis spreads a net of splines rather  than a sequence of them, we coin the term {\it splinet} when referring to such a base.  
It is formally shown that the splinets, similarly to the $B$-splines, feature locality with a small size of the total support. 
If the number of knots over which the splines of a given order are considered is $n$, then
the total support size of the $B$-splines is on the order $O(1)$ with respect to $n$, while the corresponding splinet has the total support size on the order of $\log n$ which is only slightly bigger.
On the other hand, the previously discussed orthogonalized bases have the total support size of the order $O(n)$, where $n$ stands for the number of knots which is also the number of basis functions.
Moreover, if one allows for negligible errors in the orthonormalization, then the total support size no longer will depend on $n$, i.e. becomes constant and thus achieving the rate of the original $B$-splines.

The functionality of the package extends beyond the splines since the {\tt Splinets}-object can be used to represent any piecewise polynomial function of a given order. 
All the relevant functions work properly even if the smoothness at the knots is not preserved. 
Various methods of correcting a piecewise polynomial functions to make them splines are implemented, including the orthogonal projection to the space of splines. 
Nevertheless, the focus of both the package and this presentation is on the splines. 

The organization of the material is as follows. We start with the functional representation of splines used in the package implementation. 
Then we describe the generic functions used to build spline objects from the piecewise polynomial functions. 
This is accomplished by correcting the values of the derivatives to assure the continuity at the knots.
Using these correction methods, a simple random spline generator is obtained. 
In the next section, we turn to a discussion of algebraic and functional operations on splines including integral, derivatives, and inner products. 
The main section is about the bases of splines, where the orthonormal spline bases are at the center of the discussion. 
Here, we promote the orthonormal bases that are obtained from the $B$-splines by efficient dyadic algorithms and  referred to as {\it splinets}. 
The presentation of the important implementation of the projection of functions to the space of splines by the means of the developed spline bases follows. 
The paper is concluded with two examples that utilized the package. 
In the first, we show how the package can be used to obtain an alternative efficient orthonormalization of the $B$-spline basis in the non-dyadic case.
In the second, the tools in the package are used to perform a functional data analysis by means of the splines using a functional dataset that is also a part of the package. 

\section{Mathematical foundations of a functional representation of splines}
Splines are piecewise polynomials of a given smoothness order with continuous derivatives, up to this order (exclusive) at the points they interconnect which are called {\it internal knots}. 
The domain of a spline will be referred to as its {\it range} and  is assumed across the paper to be a closed finite interval. 
Additional knots called the {\it terminal knots} are the endpoints of the spline range. 
For a given set of knots, the space of splines is finite-dimensional with the inner product of the Hilbert space of the square-integrable functions. 

Due to the continuity requirements, the behavior of a spline between two given knots is necessarily affected by the form of polynomials at the neighboring between-knots subintervals. 
Since the between-knots intervals with terminal knots have only one neighboring between-knots interval, the influence of values over other intervals is not the same.   
To mitigate this biased terminal knots effect, it is natural and, as it will be seen, also mathematically elegant to introduce the zero boundary conditions at the terminal knots for all the derivatives except the highest order one. 
It can be shown that to remove the zero boundary effect, one has to consider splines over the knots obtained by extending a certain number of the knots from both the ends of the complete set of the knots. 
More specifically, the number of knots that has to be added at each end is equal to the order of the splines. 
Often the knots are added by replicating the terminal knots although there are some serious disadvantages of such an approach.
All this is elaborated in full detail below in Proposition~\ref{prop:dim} and in the remark that follows it. 

The most natural and computationally stable way of evaluating values of a given spline is through the Taylor expansion at the closest knot. 
For this, it is convenient to have all derivatives at a knot directly accessible. 
To achieve this, the proposed implementation of a functional spline uses an object that holds the matrix of the derivatives at the knots. 
Such a matrix needs to satisfy certain conditions to guarantee that the derivatives at the knots are continuous. 
In this section, we recap mathematical properties of splines that are fundamental for the implementation.

\subsection{Splines with zero-boundary conditions at the endpoints}
The splines involve knots at which the polynomials smoothly connect. 
A set of such knots is represented as a vector $\boldsymbol \xi$ of $n+2$ ordered values. 
There are two alternative but in a certain sense equivalent requirements on the behavior of a spline at the endpoints of its range.
In the first one, no boundary conditions are imposed. 
The main problem in this unrestricted approach is that the polynomials at both the ends of the considered range do not `sense' the same restrictions from the neighbors as the polynomial residing further from both the endpoints. 
This is due to the fact that at the ends the `neighbors'  are only present from one side.
Another approach that is favored in this work is putting the zero boundary restrictions on the derivatives at the endpoints.
 It is  mathematically equivalent to the first one in a limiting sense when the $k$ initial knots and the $k$ terminal knots converge to the beginning and the end of the range, respectively.  
 Moreover and most importantly, the approach is structurally elegant and because of it easier to implement.
For all these reasons, we used it in our package. In what follows, we elaborate some mathematical detail and notational conventions.

Through the rest of the paper, we impose on a spline and all its derivatives of the order smaller than the spline smoothness order  the value of zero  at both the  endpoints of the range. 
In this case, if we consider knots $\boldsymbol \xi=(\xi_{0},\dots, \xi_{n+1})$
it is important to assume that  $n\ge k$, in order to have enough knots to define at least one non-zero spline with the $2k$ zero-boundary conditions at the endpoints.
Indeed, if $n=k-1$, then we have $k+1$-knots yielding $k$ between knot intervals. 
On each such interval  a spline is equal to a polynomial of order $k$.
The dimension of the space of such piecewise polynomial functions is $k(k+1)$. However, at each internal knot there are $k$ equations to make derivative up to order $k$ (excluding) continuous. 
This reduces the initial unrestricted polynomials dimension by $(k-1)k$ dimensions to $2k$, but there are $2k$ equations for the derivatives to be zero at the endpoints.
We conclude that the dimension of the spline space is eventually reduced to zero, meaning that there is only a function trivially equal to zero in this space.

From now on $\mathcal S^{\boldsymbol \xi}_{ k}$ stands for the $n-k+1$ dimensional space of the $k$-smoothed splines with the zero boundary conditions at the terminal knots of in the ordered knots given in vector $\boldsymbol \xi$.
Whenever, showing the dependence on either $k$ or $\boldsymbol \xi$ or both is not important, they will be dropped from the notation and thus, for example, $\mathcal S$ stands for $\mathcal S^{\boldsymbol \xi}_{ k}$ if both $k$ and $\boldsymbol \xi$ are clear from the context. 

The requirement that $n\ge k$ allows to obtain non-trivial linear space, however, in order to retrieve the flexibility of the unrestricted splines one needs even more knots. 
It is explained in the next result whose practical and theoretical consequences are discussed in the remark following the result. 
\begin{proposition}
\label{prop:dim}
Let $\boldsymbol \zeta=(\zeta_0,\dots,\zeta_{n'+1})$ be the knots considered for the $k+n'+1$ dimensional linear space of the splines of the $k$th order without zero boundary conditions. 
Further, let $n=n'+2k$ and 
$$
\boldsymbol \xi=(\xi_0,\dots,\xi_{n+1})=(\xi_0,\dots,\xi_{k-1},\zeta_0,\dots,\zeta_{n'+1},\xi_{k+n'+2},\dots,\xi_{n'+2k+1}).
$$
The dimension of the space of the splines of the $k$th order with the zero boundary conditions at the endpoints over $\boldsymbol \xi$ is equal to $k+n'+1$, i.e. the same as that of the unrestricted splines over $\boldsymbol \zeta$. 
Moreover, the space of the splines with the boundary conditions when restricted to $\boldsymbol \zeta$ coincides with the space of the unrestricted splines over $\boldsymbol \zeta$.
\end{proposition}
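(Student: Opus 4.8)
The plan is to compute the dimension by the same constraint-counting argument that was sketched informally above for the case $n=k-1$, but now carried out for general $n=n'+2k$, and then to identify the restricted spline space with the unrestricted one by an explicit extension/restriction argument. First I would set up the dimension count. Over the $n+2$ knots $\boldsymbol\xi$ there are $n+1$ between-knots intervals, on each of which a $k$th order spline is a polynomial of degree at most $k$, giving an ambient dimension of $(k+1)(n+1)$ for piecewise polynomials. At each of the $n$ internal knots we impose $k$ continuity equations (matching derivatives of orders $0,\dots,k-1$), and at the two terminal knots we impose $k$ zero-boundary equations each, for a total of $nk+2k$ equations. I would then argue these constraints are linearly independent — the continuity constraints at distinct knots involve disjoint pairs of polynomial pieces and the terminal constraints involve only the two extreme pieces, so the Jacobian is block-structured and plainly of full rank — whence the dimension is $(k+1)(n+1)-nk-2k = n+1-k = n'+k+1$, as claimed.

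For the second assertion I would argue both inclusions. Given $s\in\mathcal S^{\boldsymbol\xi}_k$, its restriction $s|_{[\zeta_0,\zeta_{n'+1}]}$ is a piecewise polynomial of order $k$ over $\boldsymbol\zeta$ with $k-1$ continuous derivatives at the internal $\zeta$-knots, hence an unrestricted spline over $\boldsymbol\zeta$; this gives a linear restriction map $R\colon\mathcal S^{\boldsymbol\xi}_k\to\mathcal S^{\boldsymbol\zeta}$ (unrestricted). Conversely, given an unrestricted spline $g$ over $\boldsymbol\zeta$, I would extend it to the left by solving, interval by interval over $[\xi_{k-1},\xi_{k-2}],\dots,[\xi_0,\xi_{-\text{n/a}}]$ — that is, over the $k$ added left intervals — for the unique polynomial piece of degree $\le k$ that matches the $k$ already-determined derivative values $g(\zeta_0),g'(\zeta_0),\dots,g^{(k-1)}(\zeta_0)$ at the right endpoint of each added interval and has its value and first $k-1$ derivatives equal to the running left-endpoint data, terminating with all of $g,g',\dots,g^{(k-1)}$ equal to $0$ at $\xi_0$. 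The point is that each added interval contributes $k+1$ polynomial coefficients and carries exactly $k$ matching conditions on the right plus, cumulatively across the $k$ intervals, $2k$ boundary-type conditions; a short bookkeeping check (the same count as in the dimension argument, localized to the extension) shows this triangular system is uniquely solvable. The analogous construction extends $g$ to the right. This produces a spline $E(g)\in\mathcal S^{\boldsymbol\xi}_k$ with $R(E(g))=g$, so $R$ is onto; since both spaces have dimension $n'+k+1$, $R$ is an isomorphism, and in particular $\mathcal S^{\boldsymbol\xi}_k$ restricted to $\boldsymbol\zeta$ is exactly $\mathcal S^{\boldsymbol\zeta}$.

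The main obstacle I anticipate is the extension step: one must verify that the zero-boundary conditions at $\xi_0$ (and $\xi_{n+1}$) are not over- or under-determined relative to the freedom in the $k$ added intervals — i.e. that extending by exactly $k$ knots on each side is both necessary and sufficient for a unique zero-matching extension to exist. I would handle this by reformulating the extension as a linear recursion on the vector of jet data $\bigl(g(\xi_i),\dots,g^{(k-1)}(\xi_i)\bigr)$ as $i$ decreases from $k-1$ to $0$: choosing the top coefficient (the order-$k$ term, the only derivative allowed to jump) on each interval is exactly $k$ free scalars over the $k$ intervals, and requiring the jet at $\xi_0$ to vanish is $k$ equations on the resulting affine map; invertibility of this $k\times k$ map follows because the associated homogeneous problem — a spline on $k$ intervals with zero jet at both ends — is forced to be identically zero by the very degree-versus-constraints count used at the start of the proof (the $n=k-1$ computation recalled in the text). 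Everything else is routine linear algebra and careful indexing.
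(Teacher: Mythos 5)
Your first step is exactly the paper's proof: the paper also takes the $(n+1)(k+1)$ polynomial coefficients, subtracts the $nk$ continuity conditions and the $2k$ boundary conditions, and reads off $n+1-k=n'+k+1$, and it stops there, treating the ``moreover'' clause as immediate. What you add is (a) an explicit concern for the linear independence of the constraints and (b) a restriction/extension argument for the second claim, reduced to inverting a $k\times k$ system for the free top coefficients on the $k$ added intervals. The structure is sound, and in fact a shorter variant is available once the dimension formula is in hand: restriction to $[\zeta_0,\zeta_{n'+1}]$ is injective on $\mathcal S^{\boldsymbol \xi}_k$ (a member vanishing there has vanishing $(k-1)$-jets at both ends of each block of $k$ added intervals, hence vanishes on those blocks too), and equality of dimensions then gives surjectivity without constructing the extension explicitly.

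The soft spot is precisely where you try to be more rigorous than the paper. Counting equations only bounds the dimension from below (dimension equals ambient minus rank, and rank is at most the number of constraints), so the ``degree-versus-constraints count'' can never by itself force the homogeneous problem -- an order-$k$ spline on $k$ intervals with vanishing $(k-1)$-jets at both ends -- to be trivial; citing the $n=k-1$ count from the text is circular, since that count suffers from the same caveat. Likewise, the full-rank justification is inaccurate: continuity constraints at adjacent internal knots share a polynomial piece (the pairs are not disjoint), and a left-to-right construction realizes an arbitrary left jet and arbitrary jumps but cannot also hit an arbitrary right-end jet, so the block structure is not ``plainly'' of full rank. What both places actually need is the minimal-support fact: a nonzero order-$k$ spline with vanishing $(k-1)$-jets at the two ends of its support spans at least $k+1$ intervals. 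It has an elementary proof you could insert: represent such a spline on $k$ intervals by its piecewise-constant $k$th derivative with values $c_0,\dots,c_{k-1}$; Taylor's formula with integral remainder turns the zero right jet into $\sum_i c_i \int_{I_i} p = 0$ for every polynomial $p$ of degree $<k$, and a nontrivial relation among these functionals would give a degree-$<k$ polynomial with a zero in each of $k$ disjoint intervals, hence zero; so all $c_i=0$ and the spline is a single polynomial with a $k$-fold zero at the left end, hence zero. With that lemma supplied, your $k\times k$ system is invertible, the constraints are independent, and your argument (which then does more than the paper's two-line count) goes through.
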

\begin{proof}
The result easily follows from the following count of the dimension of the splines with the imposed boundary conditions: there are $n+1$ intervals with polynomials having $k+1$ coefficients for the total $(n+1)(k+1)$ coefficients.
To count free coefficients,  one subtracts the $2k$ initial conditions and the $nk$  continuity conditions at the $n$ internal knots. This yields the dimension $n+1-k=n'+k+1$. 
\end{proof}
\begin{remark}
From a practical point of view, considering functional analysis using unrestricted splines of the $k$-th order over an interval  $[a,b]$ is equivalent to using the splines with the zero boundary conditions at the ends by adding $k$ knots below $a$ and $k$ knots above $b$. 
This is because by the above result, when the splines with the boundary conditions and defined over the extended set of knots are restricted in the argument to $[a,b]$, they produces the same linear space as the space of unrestricted splines over this interval.

It should be noted, however,  that the topological properties of the two spaces differ because the inner products will differ at the extended boundaries. 
One can however obtain the topological equivalence either by converging the limit of the added knots to $a$ for the lower knots and to $b$ for the upper knots or by modifying the inner product of the zero-boundary condition splines by taking the integral only over $[a,b]$. 

We conclude that using the splines with the zero-boundary conditions is by no means restrictive and from now on we consider only this case. 
\end{remark}

The splines can be represented in a variety ways. In \cite{Qin}, a general matrix representation was proposed that allows for efficient numerical processing of the operations on the splines. 
This was utilized in Zhou et al. (2012) \cite{Zhou} and Reed \cite{Redd} to represent and orthogonalize $B$-splines that was implemented in the {\sf R}-package \href{https://CRAN.R-project.org/package=orthogonalsplinebasis}{\it Orthogonal B-Spline Basis Functions}. 
In our approach we propose to represent a spline in a different way.
Namely, we focus on the values of the derivatives at knots and the support of a spline.  
The goal is to achieve a better numerical stability as well as to utilize the discussed efficiency of base splines having support only on  small portion of the considered domain.
Our approach constitutes the basis for spline treatment in the package {\tt Splinets} that accompanies this work.

The fundamental fact that we use here is that for a given order, say $k$, and a vector of knot points $\boldsymbol \xi=\left(\xi_0,\dots, \xi_{n+1}\right)$, a spline $S\in \mathcal S$ is uniquely defined by  the values of the first $0,\dots, k$ derivatives at the knots.
Here, by a natural convention that we use across the paper, the $0$ derivative is the function itself. 
The values of derivatives at the knots allow for the Taylor expansions at the knots but they cannot be taken arbitrarily due to the smoothness at the knots. 
Since our computational implementation of the spline algebra fundamentally depends on the relation between the matrix of derivatives values and the splines, the restrictions for the derivative matrix needs to be addressed in full detail.
\subsection{Matrix representation of splines}
The values of the derivatives at the knots are at the center of the Taylor expansion representation of a spline.  
Therefore the matrix of the derivative values has became the main component of an object belonging to our main class in the package.
However, due to discontinuity at knots of the highest order derivative (the order of the smoothness of the considered splines), there is some ambiguity in the representation of such a matrix. 
There are two natural representations, one is the symmetric with respect to the endpoints of the spline support and the other is the one-sided, for example the right-hand-sided (RHS) limits. 
There are some benefits each of these conventions and therefore we use both as described below. 

From now on, we assume that we consider a spline with the full support range, i.e. not vanishing on any subinterval of the entire range of knots. 
For any spline function $S\in \mathcal S_k^{\boldsymbol \xi}$, we consider $\mathbf s_j=(s_{0j},\dots, s_{n+1j})$ is an $n+2$-dimensional  vector (column) of values of the $j^{\rm th}$-derivative of $S$, $j=0,\dots, k$, at the knots given in vector $\boldsymbol \xi= \left(\xi_0,\dots, \xi_{n+1}\right)$ that, as a general convention for all vectors (also the convention in {\sf R}),  is treated as a $(n+2)\times 1$ column matrix. 
These columns are kept in a $(n+2)\times (k+1)$ matrix 
\begin{equation}
\label{eq:spmat}
\mathbf S\stackrel{def}{=}\left [ \mathbf s_0 \mathbf s_1 \dots  \mathbf s_k \right]
\end{equation} 

Since the derivative of the $k^{\rm th}$ order is not continuous at the knots while constant between them, one needs some convention how to define the values in $\mathbf s_k$.
In  the symmetric with respect to the endpoints of the support approach, one  considers the RHS limits for the LHS half of the knots and the LHS for the RHS half of the knots, where knots are considered only within the spline support range. 
This is the main convention that is adopted in the object and in the package since the symmetry is more natural for the splines with the zero boundary conditions on both the ends of the range.
Consequently, we assume that  the value of $s_{ik}$  is the RHS $k$th derivative at the knots $\xi_i$, $i \le n/2$ and  the LHS derivative at the knots $\xi_i$, $i \ge n/2 +1$.
If $n$ is odd, $n=2l+1$ for some integer $l$,  the undefined yet value of $s_{l+1k}$ is set to zero while the LHS and RHS values of the $k$-derivatives at $\xi_{l+1k}$ coincide with $s_{lk}$ and $s_{l+2k}$, respectively. 
We note also that if $n$ is even and $n=2l$, then $s_{lk}=s_{l+1k}$. 
\begin{remark}
We would like to point out that the one-sided convention, another way of treating the highest order derivative, is often more convenient computationally since it does not depend on the support range of a particular spline.
The value of the $k$th derivative at a knot is considered as the right
hand side limit except for the last knot $\xi_{n+1}$ where it is assumed to be equal to zero, as there are no values on the right hand side of $\xi_{n+1}$. 
The advantage of the one-sided approach is the isomorphic relation between functional space and the space of the derivative matrices can be easily retrieved independently of the spline support range. Therefore, this approach will be applied for defining different algebraic and analytic operations on splines, see the section on the algebra and calculus of splines.
The two forms of the matrices, symmetric and one-sided, being equivalent can be easily transformed forth and back and a convenient function
 {\tt sym2one()} is available in the package. The concepts are illustrated in Example~\ref{matrices}.
\end{remark}

In general,  $\mathbf S$ lies in the $(n+2)(k+1)$ dimensional linear space of the $(n+2)\times (k+1)$ matrices.
However, the matrices corresponding to legitimate splines occupy only a proper subspace that correspond to the $n-k+1$ dimensional space of splines. 
This restricted subspace can be expressed by certain linear relations that the entries of $\mathbf S$ need to satisfy.
To obtain them explicitly,  we note that $S\in \mathcal S$  over interval $(\xi_i,\xi_{i+1}]$, $i=0,\dots, n$, is given through its Taylor expansions
$$
S(t)
=\sum_{j=0}^{k-1} s_{i j} \frac{(t-\xi_{i})^j}{j!}+s_{i+\delta_i\,k}\frac{(t-\xi_{i})^k}{k!}
=\sum_{j=0}^{k-1} s_{i+1\, j} \frac{(t-\xi_{i+1})^j}{j!}+s_{i+\bar \delta_i\,k}\frac{(t-\xi_{i+1})^k}{k!},
$$
where $\delta_i=\mathbb I_{(n/2,n+1]}(i)$ and $\bar \delta_i=1-\delta_i$. Here and in what follows $\mathbb I_A$ is the indicator function of a set $A$. 
Similarly, the derivatives, $r=1,\dots,k$, are given through
\begin{align*}
S^{(r)}(t)&=\sum_{j=0}^{k-r-1} s_{i\, j+r} \frac{(t-\xi_{i})^j}{j!} +s_{i+\delta_i\,k}\frac{(t-\xi_{i})^{k-r}}{(k-r)!}
 =
\sum_{j=0}^{k-r-1} s_{i+1\, j+r} \frac{(t-\xi_{i+1})^j}{j!}+s_{i+\bar\delta_i\,k}\frac{(t-\xi_{i+1})^{k-r}}{(k-r)!}. 
\end{align*}

To express the symmetry, we divide the knots into the left ones $\boldsymbol \xi^{L}=(\xi_0,\dots, \xi_{l+1})$ and  introduce the following notation for the right half ones in the reverse order
\begin{equation}
\label{eq:RHSk}
\left(\xi_0^{R},\dots, \xi_{l+1}^{R}\right)=\left(\xi_{n+1},\dots, \xi_{n-l}\right),
\end{equation}
where $l=[n/2]$. We observe that for $n$ even we have $\xi_{l}=\xi^{R}_{l+1}$ and $\xi_{l+1}=\xi^{R}_{l}$, and for $n$ odd  $\xi_{l+1}=\xi^{R}_{l+1}$.
Let us also define $\mathbf S^{R}$ as the bottom half of the matrix $\mathbf S$ in the reverse order through
$$
s_{ij}^{R}=s_{n+1-i\,j}, ~~ i=0,\dots, l+1, ~~ j=0,\dots, k.
$$
The restrictive relations following from the above Taylor expansions can be split into  the LHS and the RHS knots, for  $i=0,\dots, l, ~~r=0,\dots, k-1$ as follows
\begin{equation}
\label{eq:addk-1nL}
\begin{split}
s_{i+1r}=\sum_{j=0}^{k-r} \frac{(\xi_{i+1}-\xi_{i})^{j}}{j!}s_{i\, j+r},& ~~~s^R_{i+1r}=\sum_{j=0}^{k-r} \frac{(\xi^R_{i+1}-\xi^R_{i})^{j}}{j!}s^R_{i~j+r}.
\end{split}
\end{equation}
We observe that for even $n$, the relations for the knots $\xi_{l+1}$ and $\xi^R_{l+1}$ are equivalent due to  $\xi_{l}=\xi^{R}_{l+1}$ and $\xi_{l+1}=\xi^{R}_{l}$. 
Consequently, the number of the above equations is equal to $2(l+1)k-k= nk + k$ so that the dimension of the space of the matrices satisfying them is equal to $(n+2)(k+1)-1-nk - k=n+k+1$ (we also account in this count for  $s_{l k}=s_{l+1\, k}$). From this number, one has to subtract $2k$ for the zero boundary conditions at both the ends yielding $n-k+1$ for the dimension of the space of the matrices, as required. 
Similar count can be made for the odd number $n=2(l+1)$.  
Namely, from (\ref{eq:addk-1nL}), we reduce the full dimension $(n+2)(k+1)$ by $2(l+1)k+1=nk+k+1$ (accounting here also for $s_{l+1 k}=0$) to obtain the reduced dimension equal to $n+k+1$ which then reduce by $2k$ due to the zero boundary conditions at the endpoints. 
From now on, the restricted space of matrices is identified with $\mathcal S$ and a matrix $\mathbf S \in \mathcal S$ is identified with a spline $S$.

The importance of the derived relations for numerical implementation of the spline objects is two-fold. 
Firstly, they can be used to define splines through specifying their derivatives values.
Secondly, in applications that are computationally intensive, where a lot of algebraic operations are performed on the entries of the matrix $\mathbf S$ often, due to round-off errors, the matrix entries may cease to satisfy (\ref{eq:addk-1nL}). In such a case, it maybe required to verify if the equations hold up to a required accuracy. 
If the accuracy is not achieved some corrections of computational results need to be addressed and the derived equations can be also utilized for the purpose. 

The equations can be written in the matrix notation using a $ (k+1)\times (k+1)$ lower triangular Toeplitz matrix  that for $\alpha>0$ is defined    through 
\begin{equation}
\label{eq:A}
\mathbf A_{\alpha}=\begin{bmatrix} 
1 & 0 &0 & 0&  \dots &0& 0 &0&0\,\,\,\,\\
\alpha  & 1 &0 & 0& \cdots&0 & 0 & 0&0\,\,\,\,\\
\frac{\alpha^2}2 & \alpha & 1 & 0& \dots & 0&  0 & 0&0 \,\,\,\,\\
\frac{\alpha^3}{3!} & \frac{\alpha^2}2 &\alpha & 1 &  \dots &0&0 & 0&0 \,\,\,\,\\
\vdots & \vdots  & \vdots & \vdots & \ddots &  \vdots & \vdots & \vdots\,\,\,\,\\
 \frac{\alpha^{k-3}}{(k-3)!}& \frac{\alpha^{k-4}}{(k-4)!} & \frac{\alpha^{k-5}}{(k-5)!}& \frac{\alpha^{k-6}}{(k-6)!} & \cdots  &1&0 & 0  &0\,\,\,\,\\
 \frac{\alpha^{k-2}}{(k-2)!}& \frac{\alpha^{k-3}}{(k-3)!} & \frac{\alpha^{k-4}}{(k-4)!}&  \frac{\alpha^{k-5}}{(k-5)!}  & \cdots&\alpha & 1 & 0 &0\,\,\,\,\\
 \frac{\alpha^{k-1}}{(k-1)!} & \frac{\alpha^{k-2}}{(k-2)!} &  \frac{\alpha^{k-3}}{(k-3)!} & \frac{\alpha^{k-4}}{(k-4)!} & \cdots   &\frac{\alpha^2}2 & \alpha &1 &0\,\,\,\,\\
\frac{\alpha^k}{k!} & \frac{\alpha^{k-1}}{(k-1)!} & \frac{\alpha^{k-2}}{(k-2)!} &  \frac{\alpha^{k-3}}{(k-3)!} & \cdots & \frac{\alpha^3}{3!}&\frac{\alpha^2}2 &\alpha &  1\,\,\,\,\\
\end{bmatrix}.
\end{equation}
 The matrix $\mathbf S$ can be split to $\mathbf S^L$ and  $\mathbf S^R$, where $\mathbf S^L=[\mathbf s_0 \dots \mathbf s_{l+1}]$ is made of the first $l+2$ rows of $\mathbf S$ and thus corresponding to $\boldsymbol \xi^L$  while  $\mathbf S^R=[s_{ij}^R]_{i=0,j=0}^{l+1,k}$ corresponds to $\boldsymbol \xi^{R}$, as defined before.
 Then (\ref{eq:addk-1nL}) can be written as 
 \begin{align}
 \label{eq:SLR}
 \mathbf S^L=\begin{bmatrix}
 \mathbf s_{0\cdot}\\
 \mathbf s_{0\cdot} \mathbf A_{\xi_1 - \xi_{0}}\\
  \mathbf s_{1\cdot}\mathbf A_{\xi_2 - \xi_{1}}\\
  \vdots \\
   \mathbf s_{i-1\, \cdot}\mathbf A_{\xi_i - \xi_{i-1}}\\
    \vdots \\
    \mathbf s_{l \cdot}\mathbf A_{\xi_{l+1}- \xi_{l}} 
 \end{bmatrix}
 +
  \left[\mathbf 0\, \mathbf 0 \, \dots \, \mathbf 0 \, \boldsymbol \Delta \mathbf s_k^L \right],&
 \hspace{3mm}
  \mathbf S^R=\begin{bmatrix}
 \mathbf S^R_{0\cdot}\\
 \mathbf S^R_{0\cdot} \mathbf A_{\xi^R_1 - \xi^R_{0}}\\
  \mathbf S^R_{1\cdot}\mathbf A_{\xi^R_2 - \xi^R_{1}}\\
  \vdots \\
   \mathbf S^R_{i-1\, \cdot}\mathbf A_{\xi^R_i - \xi^R_{i-1}}\\
    \vdots \\
    \mathbf S^R_{l \cdot}\mathbf A_{\xi^R_{l+1}- \xi^R_{l}} 
 \end{bmatrix}
 +
 \left[\mathbf 0\, \mathbf 0 \, \dots \, \mathbf 0 \,\, \boldsymbol \Delta \mathbf s_k^R \right],
 \end{align}
 where $(l+2)\times(l+2)$ matrix $\boldsymbol \Delta$ is given in
\begin{align*}
\boldsymbol \Delta&= 
\begin{bmatrix} 
0 & 0 & 0 &\dots &0\\
-1 &       1 &           0 &  \dots    & 0 \\ 
0  &     -1  &          1  &\ddots    &\vdots  \\
\vdots & \vdots  & \ddots & \ddots & 0 \\ 
0  &    0 &    \dots  &    -1        & 1 
\end{bmatrix}.
\end{align*}
For the equally spaced $\boldsymbol \xi$, we have even simpler relations
 \begin{align}
 \label{eq:SLRS}
 \mathbf S^L=\begin{bmatrix}
 \mathbf s_{0\cdot}\\
 \mathbf S_{0..l\,\cdot\,} \mathbf A_{\alpha}
 \end{bmatrix}
 +
 \left[\mathbf 0\, \mathbf 0 \, \dots \, \mathbf 0 \, \boldsymbol \Delta \mathbf s_k^L \right],&
 \hspace{3mm}
  \mathbf S^R=\begin{bmatrix}
 \mathbf S^R_{0\cdot}\\
 \mathbf S^R_{0..l\,\cdot} \mathbf A_{\alpha} 
 \end{bmatrix}
 +
 \left[\mathbf 0\, \mathbf 0 \, \dots \, \mathbf 0 \, \boldsymbol \Delta \mathbf s_k^R \right],
 \end{align}
for $\alpha=(\xi_{n+1}-\xi_0)/(n+1)$.

To understand better the introduced notation and the symmetric and one-sided matrix representations we present a simple example. 
\begin{Example}
\label{matrices}
We consider the cubic splines, i.e. $k=3$ and two examples of the matrices for equally spaced knots with $n=10$ and $n=11$. 
The matrices of the derivatives at the knots that are adopted in the {\tt Splinets}-object have the following forms
{\scriptsize
$$
\mathbf S_{10}=
 \begin{bmatrix}
     \mathbf {\color{red} 0.00} &   \mathbf {\color{red}  0.00}&   \mathbf {\color{red}  0.00} & -24.50 \\ 
  -0.00 & -0.10 & -2.23 & -75.91 \\ 
  -0.03 & -0.62 & -9.13 & 99.79 \\ 
  -0.11 & -1.03 & -0.06 & 2.52 \\ 
  -0.21 & -1.03 & 0.17 & 1.16 \\ 
  -0.30 & -1.01 & 0.28 &\mathbf{ \color{red}  0.30} \\ 
  -0.39 & -0.98 & 0.31 &\mathbf{\color{red}  0.30} \\ 
  -0.48 & -0.96 & 0.15 & -1.75 \\ 
  -0.56 & -0.95 & 0.20 & 0.53 \\ 
  -0.53 & 3.11 & 88.98 & 976.66 \\ 
  -0.11 & 3.58 & -78.67 & -1844.17 \\ 
     \mathbf {\color{red} 0.00} &   \mathbf {\color{red} 0.00}&   \mathbf {\color{red} 0.00}  & 865.37 \\ 
  \end{bmatrix},
  \,\,\,
  \mathbf S_{11}=
 \begin{bmatrix}
    \mathbf {\color{red} 0.00} &   \mathbf {\color{red} 0.00}&   \mathbf {\color{red} 0.00} & 833.39 \\ 
  0.08 & 2.89 & 69.45 & -1799.55 \\ 
  0.39 & 2.43 & -80.51 & 966.97 \\ 
  0.41 & -0.92 & 0.07 & 0.62 \\ 
  0.33 & -0.91 & 0.12 & -5.29 \\ 
  0.25 & -0.92 & -0.32 & -0.38 \\ 
  0.18 & -0.95 & -0.35  &\mathbf{\color{red}   0.00} \\ 
  0.10 & -0.98 & -0.43 & -0.98 \\ 
  0.01 & -0.99 & 0.29 & 8.73 \\ 
  -0.07 & -0.96 & 0.40 & 1.33 \\ 
  -0.11 & 0.41 & 32.49 & 385.06 \\ 
  -0.02 & 0.88 & -21.20 & -644.33 \\ 
       \mathbf {\color{red} 0.00} &   \mathbf {\color{red} 0.00}&   \mathbf {\color{red} 0.00}  & 254.43 \\ 
  \end{bmatrix}.
$$
}
In the first and the last rows, we observe the zero boundary conditions.  In the even number of knots case, the two highest derivatives at the center are equal to each other, while for the odd number of knots, the highest derivative column has zero at the center.    
The transformation by means of {\tt sym2one()} leads to the one-sided representations 
{\scriptsize$$
\mathbf S'_{10}=
 \begin{bmatrix}
     \mathbf {\color{red} 0.00} &   \mathbf {\color{red}  0.00}&   \mathbf {\color{red}  0.00} & -24.50 \\ 
  -0.00 & -0.10 & -2.23 & -75.91 \\ 
  -0.03 & -0.62 & -9.13 & 99.79 \\ 
  -0.11 & -1.03 & -0.06 & 2.52 \\ 
  -0.21 & -1.03 & 0.17 & 1.16 \\ 
  -0.30 & -1.01 & 0.28 & 0.30 \\ 
  -0.39 & -0.98 & 0.31 & -1.75 \\ 
  -0.48 & -0.96 & 0.15 & 0.53 \\ 
  -0.56 & -0.95 & 0.20 & 976.66 \\ 
  -0.53 & 3.11 & 88.98 & -1844.17 \\ 
  -0.11 & 3.58 & -78.67 & 865.37 \\ 
     \mathbf {\color{red} 0.00} &      \mathbf {\color{red} 0.00} &   \mathbf {\color{red}  0.00}&   \mathbf {\color{red}  0.00} \\ 
  \end{bmatrix},
  \,\,\,
  \mathbf S'_{11}=
 \begin{bmatrix}
    \mathbf {\color{red} 0.00} &   \mathbf {\color{red} 0.00}&   \mathbf {\color{red} 0.00} &833.39 \\ 
  0.08 & 2.89 & 69.45 & -1799.55 \\ 
  0.39 & 2.43 & -80.51 & 966.97 \\ 
  0.41 & -0.92 & 0.07 & 0.62 \\ 
  0.33 & -0.91 & 0.12 & -5.29 \\ 
  0.25 & -0.92 & -0.32 & -0.38 \\ 
  0.18 & -0.95 & -0.35 & -0.98 \\ 
  0.10 & -0.98 & -0.43 & 8.73 \\ 
  0.01 & -0.99 & 0.29 & 1.33 \\ 
  -0.07 & -0.96 & 0.40 & 385.06 \\ 
  -0.11 & 0.41 & 32.49 & -644.33 \\ 
  -0.02 & 0.88 & -21.20 & 254.43 \\  
        \mathbf {\color{red} 0.00} &      \mathbf {\color{red} 0.00} &   \mathbf {\color{red}  0.00}&   \mathbf {\color{red}  0.00} \\
  \end{bmatrix}.
$$
}
The left- and right-hand-side halves are in this example as follows
{\scriptsize
\begin{align*}
\mathbf S^L_{10}=
 \begin{bmatrix}
     \mathbf {\color{red} 0.00} &   \mathbf {\color{red}  0.00}&   \mathbf {\color{red}  0.00} & -24.50 \\ 
  -0.00 & -0.10 & -2.23 & -75.91 \\ 
  -0.03 & -0.62 & -9.13 & 99.79 \\ 
  -0.11 & -1.03 & -0.06 & 2.52 \\ 
  -0.21 & -1.03 & 0.17 & 1.16 \\ 
  -0.30 & -1.01 & 0.28 &\mathbf{ \color{red}  0.30} \\ 
  -0.39 & -0.98 & 0.31 &\mathbf{\color{red}  0.30} \\ 
  \end{bmatrix},
  \,\,\,&
  \mathbf S^L_{11}=
 \begin{bmatrix}
     \mathbf {\color{red} 0.00} &   \mathbf {\color{red} 0.00}&   \mathbf {\color{red} 0.00} & 833.39 \\ 
  0.08 & 2.89 & 69.45 & -1799.55 \\ 
  0.39 & 2.43 & -80.51 & 966.97 \\ 
  0.41 & -0.92 & 0.07 & 0.62 \\ 
  0.33 & -0.91 & 0.12 & -5.29 \\ 
  0.25 & -0.92 & -0.32 & -0.38 \\ 
  0.18 & -0.95 & -0.35  &\mathbf{\color{red}   0.00} \\ 
  \end{bmatrix},
 \\
 \mathbf S^R_{11}=
 \begin{bmatrix}
     \mathbf {\color{red} 0.00} &   \mathbf {\color{red}  0.00}&   \mathbf {\color{red}  0.00} & 865.37 \\ 
  -0.11 & 3.58 & -78.67 & -1844.17 \\ 
  -0.53 & 3.11 & 88.98 & 976.66 \\ 
  -0.56 & -0.95 & 0.20 & 0.53 \\ 
  -0.48 & -0.96 & 0.15 & -1.75 \\ 
  -0.39 & -0.98 & 0.31 &\mathbf{\color{red}  0.30} \\ 
  -0.30 & -1.01 & 0.28  &\mathbf{\color{red}  0.30} \\ 
  \end{bmatrix},
  \,\,\,&
 \mathbf S^R_{11}=
 \begin{bmatrix}
     \mathbf {\color{red} 0.00} &   \mathbf {\color{red} 0.00}&   \mathbf {\color{red} 0.00} & 254.43 \\ 
  -0.02 & 0.88 & -21.20 & -644.33 \\ 
  -0.11 & 0.41 & 32.49 & 385.06 \\ 
  -0.07 & -0.96 & 0.40 & 1.33 \\ 
  0.01 & -0.99 & 0.29 & 8.73 \\ 
  0.10 & -0.98 & -0.43 & -0.98 \\ 
  0.18 & -0.95 & -0.35   &\mathbf{\color{red}   0.00} \\ 
  \end{bmatrix}.
\end{align*}
}
\end{Example}

\subsection{Implementation of class {\it 'Splinets'} in the object oriented programming}
The efficiency gains by using our methods can be fully utilized only if a proper approach to the spline calculus is taken. 
In particular, one must utilize small supports of the considered spline bases. 
If two splines are having in common only a small portion of their supports, then the inner product between them needs to be evaluated only over this common support. 
Thus for splines  with small support  evaluations become computationally less demanding even if the number of knots and thus of basis splines is large. 
For this reason, we have implemented a spline object that contains information about the spline support and we use this information in computational procedures. 

Outside of the support the values of the derivatives are zero and to utilize this in efficient computations, any spline function $S$ of order $k$ over knots in $\boldsymbol \xi=(\xi_0,\dots, \xi_{n+1})$ is identified as 
$$
S=\left\{k, \boldsymbol \xi,\mathcal I, \mathbf s_0,\mathbf s_1, \dots, \mathbf s_k\right\},
$$
where $\mathcal I=\{(i_1, i_1+m_1+1), \dots,(i_ N,i_N+m_N+1) \}$ is a sequence of ordered pairs of indexes in $\{1,\dots,n+2\}$ representing the intervals, the union of which is the support of a spline, i.e. the minimial closed set outside  of which spline vanishes. It implies that the knots outside the support would have the corresponding entries in $\mathbf S$ equal to zero and thus there is no need to include them in the matrix of derivatives. 
The matrix $\mathbf S$ is thus divided into  $N$ blocks $\mathbf S_r$ row-wise of the sizes $(m_r+2)\times k$, $r=1,\dots, N$.
Thus the columns in the $r$-block $\mathbf s_0^{(r)},\mathbf s_1^{(r)}, \dots, \mathbf s_k^{(r)}$ are 
$
m_r+2
$ dimensional column vectors of values of the derivatives of $S$ at the knots of support, $\xi_{i_r+l}$, $l=0,\dots, m_r+1$.
It is important to point that the block  $\mathbf S_r$ is kept in the symmetric around center format as described in the previous section, except there $n$ has to be replaced by $m_r$ and $(\xi_0,\dots, \xi_{n+1})$ by $(\xi_{i_r},\dots,\xi_{i_r+m_r+1})$. 

For the so defined object the most elementary function is verification if the object satisfy condition of a spline with zero-boundary conditions for which (\ref{eq:SLR}) hold. 
It is implemented in {\tt is.spline()} method for this class. 
Two functions {\tt gather()} and {\tt subsample()} are grouping two {\tt Splinets} objects into one and subsampling from an {\tt Splinets} object to obtain another object, respectively. 
Another basic function is  {\tt evaluate()}, which evaluates splines in a {\tt Splinets}-object at a given vector of argument values within the range of knots through the Taylor expansion. 
Finally, the generic function {\tt plot()} utilizes spline evaluation to plot graphs through $k\cdot N$ points in-between knots, where $k$ is the order of a spline and $N$ is the parameter passed by the method. 
  
\subsection{Support sets}
One of the most important of our implementation of the spline algebra is controlling the locality of a given spline by keeping information about its support sets. 
We envision that this feature will be particularly useful for extension of the package to higher dimensions that we plan to do in the future, but even in the current version it plays its role.  
The main reason why it is important is that the orthonormalization we implemented to obtain the splinets is optimal in respect to the total support size. 
Thus our goal is to utilize this feature to improve efficiency in computations -- the operations  outside of the support set simply are not needed and thus not carried out.  
 In the background of the all analytical and algebraical operations on {\tt Splinet}-objects, the package carries out the evaluation of the support sets of the outputs. 
 This evaluation is based on the set algebra operations on the support sets which are kept in the field {\tt supp} of the object. 
 
 All operations for maintaining the correct information about the support sets are carried in the background but it may happen that the actual support set is not matching the evaluated one. 
 A typical example, when such a problem may occur is addition of two splines. 
 Generally,  the sum of two splines has the support sitting on the union of the individual supports. 
 However, if we add $S$ to $-S$ the support is an empty set and thus the general rule does not apply. 
 For this reason, the package has a function {\tt exsupp} that extracts the support from a {\tt Spline}-object.
 This function should be used whenever the accurate support sets are needed and there is a suspicion of deviation in the object from the actual support set representation. 
\section{Building functional splines}
The class of {\tt spline} allows for quite arbitrarily general {\tt fields} and the method {\tt is.spline()} checks if the defined object is indeed a spline. 
Since, it is not trivial to correctly define a proper matrix of the derivative at the knots, the function that `corrects' a given matrix so it follows (\ref{eq:SLR}) is of interest and thus implemented in the package. 
Using such a function provides the simplest way of defining a proper spline object. 
It can be also used to correct the values of the derivatives at the knots due to a roundup error for a large number of spline evaluations. 
Another way to build a proper spline, is to randomly select a spline object. In the package, we actually have a flexible method of generating random spline objects. 
Finally, one can obtain a set of splines by utilizing a basis of splines, this method is presented later on in the paper where a number of different spline base are discussed. 
In what follows, the first two methods of obtaining splines are discussed in further detail.

\subsection{Generating an individual non-random spline}
For given knots and an order of smoothness,  $n-k+1$ is the dimension of the spline space.
The problem of defining a spline reduces to  obtaining matrices $\mathbf S$ such that  (\ref{eq:SLR}) is satisfied, by, for example, setting $n-k+1$ entries and the rest  evaluated using the relation (\ref{eq:addk-1nL})  or, equivalently, (\ref{eq:SLR}). 
Although, the relations are linear, the objects in question are matrices and solving the linear equation for matrices is not as straightforward as solving linear equations for vectors. 
Here we discuss three special important cases when the evaluation can be efficiently and transparently handled. 

\begin{description}
\item[The first row and the last column case (frlc):]
The first case is obtained by fixing values of the highest order derivative, i.e. the last column in $\mathbf S$. 
Consider a $(m+2) \times (k+1)$ submatrix   $\mathbf U=[u_{ij}]_{i,j=0}^{m+1,k}$ of $\mathbf S$ made of $m+2$ subsequent rows in the top half of $\mathbf S$.
Assume that the first row $\mathbf u_{0\,\cdot }$ and the last column $\mathbf u_{\cdot\, k}$  of $\mathbf U$ are known. 
Thus, in total, $m+k+1$ values of are known. 
In this case, all the remaining values are the coordinates of vectors $\mathbf u_{i\,0..k-1}$, $i=1,\dots, m+1$, which  can be recursively computed using, for $i=1,\dots, m+1$, the relations
\begin{align}
\mathbf u_{i\,0.. k-1}&=\mathbf u_{i-1\,\cdot}~\left[\mathbf A_{\xi_i-\xi_{i-1}}\right ]_{ \cdot\, 0.. k-1}=\left[\mathbf u_{i-1\cdot}\mathbf A_{\xi_i - \xi_{i-1}}\right]_{0.. k-1},
\end{align}
We note that in these relations $u_{m,k}$ is not present, so its value does not affect the other computed values in the matrix $\mathbf U$.
\item[The first row and the first column case (frfc):]
Another important special case is evaluation of the entries of $\mathbf U$ is when, instead of the last column entries (the values of the $k$-th derivative), the first column ones (the values of the spline) are known, i.e. $\mathbf u_{\cdot 0}$ is available together with   the first row $\mathbf u_{0\,\, 0..k-1 }$ (except the $k$th derivative) and we want to evaluate the rest of the entries of $\mathbf U$ except $u_{m,k}$, which is not involved in the Taylor expansion formula based on the considered knots. 
We can assume that we also know the value $u_{0 k}$ since 
$$
u_{10}=\mathbf u_{0\, \cdot}\left[\mathbf A_{\xi_{1}- \xi_{0}}\right]_{0..k\, 0}
$$
from which we can evaluate $u_{0 k}$ through the explicit formula 
$$
u_{0k}=\frac{1}{\left[\mathbf A_{\xi_{1} - \xi_{0}}\right]_{k~0}}
\left(u_{10}-\mathbf u_{0\,0..k-1}\left[\mathbf A_{\xi_{1}- \xi_{0}}\right]_{0..k-1\, 0}\right).
$$
However, by having an entry of the last column one can follow the evaluation of the next row as in the previous case. 
In other words,  to evaluate  the coordinates of vectors $\mathbf u_{i\,1..k}$, $i=1,\dots, {m}$ and $\mathbf u_{m+1,1..k-1}$ we follow recurrent relations
\begin{equation}
\begin{split}
\mathbf u_{i\,1..k-1}&=\mathbf u_{i-1~\cdot}\left[\mathbf A_{\xi_i - \xi_{i-1}}\right]_{\cdot\, 1..k-1}\, ,\, i=1,\dots,m+1 ,\\
u_{ik} &=\frac{1}{\left[\mathbf A_{\xi_{i+1} - \xi_{i}}\right]_{k~0}}\left(u_{i+1\,0}-\mathbf u_{i\,0..k-1}\left[\mathbf A_{\xi_{i+1}- \xi_{i}}\right]_{0..k-1\, 0}\right)
\, ,\, i=1,\dots,m.
\end{split}
\end{equation}
\item[The first row and the last row case (frlr):] 
The final case we consider is when the first and the last rows in $\mathbf U=[u_{ij}]_{i,j=0}^{m+1,k}$ are set. 
We note that $u_{m+1\,k}$ correspond to the $k$th derivative on the interval $[\xi_{m+1},\xi_{m+2})$ and thus on can set it to an arbitrary number without having any effect on the behavior of the spline on $[\xi_0,\xi_{m+1})$. 
Thus disregarding this irrelevant entry, the dimension of the space of the matrices $\mathbf U$ that corresponds to the space of splines is $k+m+1$. 
It implies that by specifying the first and the last row (with $u_{m+1\,k}$ not contributing to the dimension of the splines) we have an overspecified matrix if $m<k$, the uniquely specified matrix if $m=k$ and an underspecified matrix if $m>k$.
To assure that other entries are defined, we consider only  $m=0,\dots, k$ and note that if $m<k$, then $k-m$ entries in the last (or the first) row do not need to be used. 
As we will see, it is convenient to assume that $\mathbf u_{m+1\, 0..k-m-1}$ is not used and these entries can be evaluated based on the other values.

Consider first $m=0$, in which the case the entire $2 \times k+1$ matrix $\mathbf U$ is defined but  the entries must satisfy additional $k$ conditions
$$
\mathbf u_{1\,0..k-1}=\mathbf u_{0\cdot}\left[\mathbf A_{\xi_1 - \xi_{0}}\right]_{\cdot\,0..k-1},
$$
or, otherwise, it does not correspond to a valid spline. One can correct this by changing the values on the left-hand-side (LHS) of the above to the one obtained from the right-hand-side (RHS). 
This corrects the underlying spline object to match the matrix restrictions.

For $m=1$, the middle row $u_{1\cdot}$ needs to be evaluated, which is done in two steps. 
First, as before, we evaluate 
\begin{equation}
\label{EqOne}
\mathbf u_{1\,0..k-1}=\mathbf u_{0\cdot}\left[\mathbf A_{\xi_1 - \xi_{0}}\right]_{\cdot\,0..k-1},
\end{equation}
then the final value $u_{1 k}$ is obtained by solving
\begin{align*}
\label{EqTwo}
u_{2\,\, k-1}&=
\mathbf u_{1\cdot}\left[\mathbf A_{\xi_2 - \xi_{1}}\right]_{\cdot\, \,k-1}=
\begin{bmatrix} u_{1\, k-1}
u_{1\, k}
\end{bmatrix} 
\left[\mathbf A_{\xi_2 - \xi_{1}}\right]_{k-1..k\,\, k-1}\\
&=
u_{1\, k-1}
\left[\mathbf A_{\xi_2 - \xi_{1}}\right]_{k-1\,\, k-1}
+
u_{1\, k}
\left[\mathbf A_{\xi_2 - \xi_{1}}\right]_{k\,\, k-1}\\
&=
\mathbf u_{0\cdot}\left[\mathbf A_{\xi_1 - \xi_{0}}\right]_{\cdot\,k-1}
\left[\mathbf A_{\xi_2 - \xi_{1}}\right]_{k-1\,\, k-1}
+
u_{1\, k}
\left[\mathbf A_{\xi_2 - \xi_{1}}\right]_{k\,\, k-1}\\
&=
\mathbf u_{0\cdot}\left[\mathbf A_{\xi_1 - \xi_{0}}\right]_{\cdot\,k-1}
+
u_{1\, k}
\left[\mathbf A_{\xi_2 - \xi_{1}}\right]_{k\,\, k-1},
\end{align*}
where the last equation follows from the fact that $\mathbf A _{\xi_2 - \xi_{1}}$ has ones on the diagonal. 
The explicit solution is 
\begin{align*}
u_{1 k}=\left(u_{2\,\, k-1}-\mathbf u_{0\cdot}\left[\mathbf A_{\xi_1 - \xi_{0}}\right]_{\cdot\,k-1}
\right)/\left[\mathbf A_{\xi_2 - \xi_{1}}\right]_{k\,\, k-1}
.
\end{align*}
Additional $k-1$  conditions need to be satisfied in order for the entries of $\mathbf U$ to represent a valid spline
$$
\mathbf u_{2\, 0..k-2}=\mathbf u_{1\cdot}\left[\mathbf A_{\xi_2 - \xi_{1}}\right]_{\cdot\, \,0\dots k-2}.
$$
Again, if these equations are not satisfied, one can correct them by changing the LHS values to the RHS ones.  
This is permissible since the values on the LHS have not been used for the evaluation of other entries in the matrix. 

For a general $m\le k-1$, we have to evaluate the $m$ rows $\mathbf u_{1..m\,\, \cdot}$ given that the first and the last, i.e. $\mathbf u_{0 \cdot}$ and $\mathbf u_{m+1 \cdot }$, are given. 
It is sufficient to provide equations needed to evaluate $\mathbf u_{1..m\, k}$ and  use the first-row-and-the-last-column case described above to evaluate all other entries. 
We use (\ref{EqOne}) to obtain $\mathbf u_{1\,0..k-1}$, then we consider
\begin{align*}
\mathbf u_{2\, \,k-m.. k-1 }&= \mathbf u_{1\cdot}\left[\mathbf A_{\xi_2 - \xi_{1}}\right]_{\cdot\, \,k-m.. k-1}
=\mathbf u_{1\, \,k-m.. k}\left[\mathbf A_{\xi_2 - \xi_{1}}\right]_{k-m.. k\, \,\,k-m.. k-1}\\
&=\mathbf u_{1\, \,k-m.. k-1}\left[\mathbf A_{\xi_2 - \xi_{1}}\right]_{k-m.. k-1\, \,\,k-m.. k-1} 
+
u_{1 k}\left[\mathbf A_{\xi_2 - \xi_{1}}\right]_{k\, \,\,k-m.. k-1}
\\
&=
\mathbf u_{0\, \,k-m.. k-1}
\left[\mathbf A_{\xi_1 - \xi_{0}}\right]_{k-m.. k-1\, \,\,k-m..k-1}
\left[\mathbf A_{\xi_2 - \xi_{1}}\right]_{k-m.. k-1\, \,\,k-m.. k-1} \\
&+
u_{0k}
\left[\mathbf A_{\xi_1 - \xi_0}\right]_{k \, \,\,k-m.. k-1} 
\left[\mathbf A_{\xi_2 - \xi_{1}}\right]_{k-m.. k-1\, \,\,k-m.. k-1}
+
u_{1 k}\left[\mathbf A_{\xi_2 - \xi_{1}}\right]_{k\, \,\,k-m.. k-1}.
\end{align*}
Using  matrices $\mathbf A^{(j)}$ of sizes $m\times m$, and vectors $\mathbf c^{(j)}$ of  sizes $1 \times m$, defined by
\begin{align*}
\mathbf A^{(j)}&=\left[\mathbf A_{\xi_j - \xi_{j-1}}\right]_{k-m.. k-1 \, \,\,k-m.. k-1},\\
\mathbf c^{(j)}&=\left[\mathbf A_{\xi_j - \xi_{j-1}}\right]_{k \, \,\,k-m.. k-1},
\end{align*}
 we write in a compact manner
\begin{align*}
\mathbf u_{1\, \,k-m.. k-1 }&=
\mathbf u_{0\, \,k-m.. k-1}\mathbf A^{(1)}
+
u_{0 k}\mathbf c^{(1)},\\
\mathbf u_{2\, \,k-m.. k-1 }&=
\mathbf u_{0\, \,k-m.. k-1}\mathbf A^{(1)}\mathbf A^{(2)}
+
u_{0 k}\mathbf c^{(1)}\mathbf A^{(2)}
+
u_{1 k}\mathbf c^{(2)}.
\end{align*}

Similarly, for $j=2,\dots, m+1$:
\begin{align*}
\mathbf u_{j\, \,k-m.. k-1 }&= \mathbf u_{j-1\cdot}\left[\mathbf A_{\xi_j - \xi_{j-1}}\right]_{\cdot\, \,k-m.. k-1}
=\mathbf u_{j-1\, \,k-m.. k}\left[\mathbf A_{\xi_j - \xi_{j-1}}\right]_{k-m.. k\, \,\,k-m.. k-1}\\
&=\mathbf u_{j-1\, \,k-m.. k-1}\mathbf A^{(j)} 
+
u_{j-1 k}\mathbf c^{(j)}
.
\end{align*}
Using $m \times m$ matrices $\mathbf B^{(r\,j)}=\prod_{l=r}^j\mathbf A^{(l)}$, $r=1,\dots,j$, $\mathbf B^{(j+1\,j)}=\mathbf I_m$, we obtain non-recurrent relations
\begin{equation}
\label{nonrec}
\mathbf u_{j\, \,k-m.. k-1 }
=
\mathbf u_{0\, \,k-m.. k-1}\mathbf B^{(1j)}
+\sum_{r=1}^{j}
u_{r-1k}\mathbf c^{(r)}\mathbf B^{(r+1\,j)}.
\end{equation}
For $j=m+1$ they can be written as
\begin{align*}
\mathbf u_{m+1\, \,k-m.. k-1 }&=\mathbf u_{0\, \,k-m.. k-1}\mathbf B^{(1\,m+1)}
+\sum_{r=1}^{m+1}
u_{r-1k}\mathbf c^{(r)}\mathbf B^{(r+1\,m+1)}
\\
&=\mathbf u_{0\, \,k-m.. k-1}\mathbf B^{(1\, m+1)}
+u_{0 k} \mathbf c^{(1)}\mathbf B^{(2\, m+1)}
+\mathbf 
u_{1..m\,\, k}^{\top}\mathbf C^{(m)}\\
&=\left(
\mathbf u_{0\, \,k-m.. k-1}\mathbf A^{(1)}
+u_{0 k} \mathbf c^{(1)}
\right)
\mathbf B^{(2\, m+1)}
+
\mathbf 
u_{1..m\,\, k}^{\top}\mathbf C^{(m)}\\
&=
\mathbf u_{0\, \,k-m.. k}\mathbf D^{(m)}
+
\mathbf 
u_{1..m\,\, k}^{\top}\mathbf C^{(m)},
\end{align*}
where  
$$
\mathbf D^{(m)}=\left[\mathbf A_{\xi_1 - \xi_{0}}\right]_{k-m.. k\, \,\,k-m.. k-1}\mathbf B^{(2\, m+1)}
$$ 
 is an $(m+1)\times m$ matrix, while $m \times m$  matrix $\mathbf C^{(m)}$ is the row-wise concatenation of $m$ row $1\times m$ matrices $\mathbf c^{(r)}\mathbf B^{(r+1\,m+1)}$, $r=2,\dots,m+1$.
In these equations only $\mathbf u_{1..m\, k}$ is unknown thus the problem reduces to finding the inverse $\mathbf E^{(m)}$ of $\mathbf C^{(m)}$ and evaluating
\begin{equation}
\label{final}
\mathbf 
u_{1..m\,\, k}^{\top}=
\left( 
\mathbf u_{m+1\, \,k-m.. k-1} -
\mathbf u_{0\, \,k-m.. k}\mathbf D^{(m)}
\right) \mathbf E^{(m)}.
\end{equation}
Whenever $m\le k$, the vector $\mathbf u_{1..m\,\, k}$ is computable from the above. 
When it is combined with the first row $\mathbf u_{0\, \cdot}$ (or the last row $\mathbf u_{m+1\, \cdot}$) all the remaining values can be computed as described above in the first-row-last-column case. 
Since $\mathbf u_{m+1\, 0..k-m-1}$ is not used in the above equations their value have to agree with the values that has been evaluated. 
If they are not they are corrected by setting them to the computed values.

The special case of equally spaced knots simplifies significantly the computations. 
Indeed, in this case we can set $\mathbf A$ to the common value of $\mathbf A^{(j)}$, $j=1,\dots,m+1$, and $\mathbf c$ to $\mathbf c^{(j)}$'s, i.e. 
$$
\mathbf A =\left[\mathbf A_{\xi_1 - \xi_0}\right]_{k-m.. k-1 \, \,\,k-m.. k-1} ,~~ \mathbf c = \left[\mathbf A_{\xi_1 - \xi_0}\right]_{k \, \,\,k-m.. k-1}.
$$  
Then (\ref{nonrec}) becomes
$$
\mathbf u_{j\, \,k-m.. k-1 }
=
\mathbf u_{0\, \,k-m.. k-1}\mathbf A^{j}
+\sum_{r=1}^{j}
u_{r-1k}\mathbf c\mathbf A^{j-r},
$$
and the matrices in (\ref{final}) become
\begin{align*}
\mathbf D^{(m)}&=
\left[\mathbf A_{\xi_1 - \xi_{0}}\right]_{k-m.. k\, \,\,k-m.. k-1}\mathbf A^{ m},\\
 \mathbf E^{(m)}&=
\begin{bmatrix}
\mathbf c {\mathbf A}^{m-1}\\
\mathbf c {\mathbf A}^{m-2}\\
\vdots\\
\mathbf c {\mathbf A}\\
\mathbf c\\
\end{bmatrix}^{-1}.
\end{align*}

The special case of $m=k$ is particularly important, since it leads to the unique matrix specification. 
We summarize our findings for this case in the result below. 
 \end{description}
 
 \begin{proposition}
 Let $\mathbf U=[u_{ij}]_{i,j=0}^{k+1,k}$ be a $(k+2)\times (k+1)$ matrix that corresponds to the matrix of derivatives at knots $\xi_0,\dots, \xi_{k+1}$ of a $k$th order of spline defined over these knots, where the right-hand-side derivative is considered as  the $k$th order derivative.
 
  If its first row $\mathbf u_{0 \cdot}$ and the last row $\mathbf u_{k+1 \cdot}$ are both given, then all the remaining entries are uniquely defined. In particular, the last column $\mathbf u_{\cdot k+1}$ is given through 
\begin{equation}
\label{final2}
\mathbf 
u_{1..k\,\, k}^{\top}=
\left(
\mathbf u_{k+1\, \,0.. k-1 }-\mathbf u_{0\, \,0.. k}\mathbf D
\right)
\mathbf E,
\end{equation}
where 
\begin{align*}
\mathbf E&=\begin{bmatrix}
\mathbf c^{(2)} \prod_{l=3}^{k+1}{\mathbf A}^{(l)}\\
\mathbf c^{(3)} \prod_{l=4}^{k+1}{\mathbf A}^{(l)}\\
\vdots\\
\mathbf c^{(k)} {\mathbf A}^{(k+1)}\\
\mathbf c^{(k+1)}
\end{bmatrix}^{-1},\\
\mathbf D&=\left[\mathbf A_{\xi_1 - \xi_{0}}\right]_{0.. k\, \,\,0.. k-1}\prod_{l=2}^{k+1}{\mathbf A}^{(l)},
\end{align*}
are $ k \times k $ and $ (k+1) \times k $ matrices, respectively, in which, for $j=1,\dots, k+1$, 
\begin{align*}
\mathbf A^{(j)}&=\left[\mathbf A_{\xi_j - \xi_{j-1}}\right]_{0.. k-1 \, \,\,0.. k-1},\\
\mathbf c^{(j)}&=\left[\mathbf A_{\xi_j - \xi_{j-1}}\right]_{k \, \,\,0.. k-1}.
\end{align*}
All the remaining entries are recursively defined, for $i=1,\dots, k$ through
\begin{align*}
\mathbf u_{i\,0.. k-1}&=\mathbf u_{i-1\,\cdot}~\left[\mathbf A_{\xi_i-\xi_{i-1}}\right ]_{ \cdot\, 0.. k-1}=\left[\mathbf u_{i-1\cdot}\mathbf A_{\xi_i - \xi_{i-1}}\right]_{0.. k-1}.
\end{align*}
 \end{proposition}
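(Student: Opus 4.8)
The statement is exactly the $m=k$ instance of the \emph{frlr} discussion just above, so the plan is to quote that discussion and then supply the one point it leaves open: that the $k\times k$ matrix $\mathbf C^{(k)}$ whose inverse $\mathbf E$ appears in \eqref{final2} is invertible.

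I would first make the reduction precise. Restricting the validity relations \eqref{eq:addk-1nL} (equivalently \eqref{eq:SLR}) to the top block over $\xi_0,\dots,\xi_{k+1}$, a matrix $\mathbf U=[u_{ij}]_{i,j=0}^{k+1,k}$ is the derivative array of an order-$k$ spline over these knots precisely when $\mathbf u_{i\,0..k-1}=[\mathbf u_{i-1\,\cdot}\,\mathbf A_{\xi_i-\xi_{i-1}}]_{0..k-1}$ for $i=1,\dots,k+1$, while $u_{k+1\,k}$, being the $k$th derivative on $[\xi_{k+1},\xi_{k+2})$ outside the range governed by these knots, is free and plays no role. Propagating these recurrences downward from the top row shows that $\mathbf U$ is bijectively and freely parametrised by $\mathbf u_{0\,\cdot}$ together with $u_{1k},\dots,u_{kk}$ — a total of $2k+1$ parameters, matching the dimension of the order-$k$ splines without boundary conditions over those $k+2$ knots, cf.\ Proposition~\ref{prop:dim}. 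Carrying out the computation behind \eqref{nonrec} and \eqref{final} with $m=k$ collapses to the single identity
$$
\mathbf u_{k+1\,0..k-1}=\mathbf u_{0\,0..k}\,\mathbf D+\mathbf u_{1..k\,k}^{\top}\mathbf C^{(k)},
$$
with $\mathbf D$ and $\mathbf C^{(k)}=\mathbf E^{-1}$ exactly as in the statement. Hence, once $\mathbf C^{(k)}$ is invertible, prescribing $\mathbf u_{0\,\cdot}$ and $\mathbf u_{k+1\,0..k-1}$ determines $\mathbf u_{1..k\,k}^{\top}$ uniquely through \eqref{final2}, and the recurrences then determine every remaining entry.

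The crux is thus the invertibility of $\mathbf C^{(k)}$, and the hard part is precisely this; I would establish it by a zero-spline argument rather than a direct determinant computation, since the rows $\mathbf c^{(r)}\prod_{l=r+1}^{k+1}\mathbf A^{(l)}$ carry different numbers of unit lower-triangular factors and no single manipulation triangularises the matrix. Suppose $\mathbf v=(v_1,\dots,v_k)$ satisfies $\mathbf v^{\top}\mathbf C^{(k)}=\mathbf 0$, and let $S$ be the order-$k$ spline over $\xi_0,\dots,\xi_{k+1}$ with $\mathbf u_{0\,\cdot}=\mathbf 0$ and $u_{jk}=v_j$ for $j=1,\dots,k$ (which exists by the free parametrisation above). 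The displayed identity forces $\mathbf u_{k+1\,0..k-1}=\mathbf 0$ as well. Its Taylor polynomial at $\xi_0$ then vanishes, so $S\equiv 0$ on $[\xi_0,\xi_1]$ and therefore $S,S',\dots,S^{(k-1)}$ vanish at $\xi_1$; they also vanish at $\xi_{k+1}$. Consequently $S$ restricted to $[\xi_1,\xi_{k+1}]$ is an order-$k$ spline over the $k+1$ knots $\xi_1,\dots,\xi_{k+1}$ obeying all $2k$ zero boundary conditions, i.e.\ an element of $\mathcal S^{(\xi_1,\dots,\xi_{k+1})}_k$ — a space of dimension $n'-k+1$ with $n'=k-1$, that is, the trivial space, as in the degenerate count preceding Proposition~\ref{prop:dim}. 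So $S\equiv 0$ on $[\xi_1,\xi_{k+1}]$ too, whence $v_j=u_{jk}=S^{(k)}(\xi_j^{+})=0$ for every $j$ and $\mathbf C^{(k)}$ is invertible. Assembling the pieces gives existence (take $\mathbf u_{1..k\,k}^{\top}$ from \eqref{final2} and run the stated recursion for $\mathbf u_{i\,0..k-1}$) and uniqueness (any valid $\mathbf U$ with the given first and last rows satisfies the displayed identity, which pins down $\mathbf u_{1..k\,k}^{\top}$, after which the recurrences pin down the rest). The only way this could fail is a degenerate knot spacing; under the standing assumption of strictly increasing knots each $\mathbf A_{\xi_j-\xi_{j-1}}$ is unit lower-triangular and the argument runs verbatim.
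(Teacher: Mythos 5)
Your derivation of the identity $\mathbf u_{k+1\,0..k-1}=\mathbf u_{0\,0..k}\,\mathbf D+\mathbf u_{1..k\,k}^{\top}\mathbf C^{(k)}$ and hence of \eqref{final2} is exactly the $m=k$ specialization of the frlr computation leading to \eqref{nonrec} and \eqref{final}; the paper offers no separate proof of this proposition (it is presented as a summary of that computation), so on this part you and the paper coincide, including the free parametrisation by the first row and $u_{1k},\dots,u_{kk}$ that underlies both existence and uniqueness. Where you genuinely go beyond the paper is the nonsingularity of $\mathbf C^{(k)}=\mathbf E^{-1}$: the paper simply speaks of ``finding the inverse $\mathbf E^{(m)}$ of $\mathbf C^{(m)}$'' without arguing that it exists, whereas you prove it by a zero-spline argument --- a vector in the left kernel would generate a spline that vanishes identically on $[\xi_0,\xi_1]$ and satisfies all $2k$ zero boundary conditions over the $k+1$ knots $\xi_1,\dots,\xi_{k+1}$, a space shown to be trivial by the degenerate count ($n=k-1$) preceding Proposition~\ref{prop:dim}, so the kernel vector must vanish. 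This argument is correct (it leans on the paper's triviality assertion for $n=k-1$, which indeed holds, e.g.\ by the minimal-support property of $C^{k-1}$ splines of degree $k$, and on strictly increasing knots so that each $\mathbf A_{\xi_j-\xi_{j-1}}$ is well defined), and what it buys is a complete justification of the ``uniquely defined'' claim; what the paper's route buys is brevity, at the cost of leaving the invertibility of $\mathbf C^{(k)}$ implicit.
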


To illustrate different features of the matrix derivative adjustments, we put all three methods to the task of `correcting' a randomly chosen piecewise cubic polynomial function shown in Figure~\ref{SpOb}~{(\it Top)}.
We observe that the first two methods, while providing quite accurate approximation close to zero, destabilize quickly -- the method based on matching the highest order derivative biases away due to over-smoothing, while deviation of the one that matches the values of the function at the knots  increases variability.  
On the other hand, the last method obtained by matching the derivatives at the two endpoints produces quite stable smoother `correction' of the initial function. 

It is clear that in the above consideration, the knots do not need start from $\xi_0$ and the order can be reversed from the right-to-left instead the left-to-right with all necessary but simple adjustments. All these properties can be used in many different ways to build the splines with a matrix of the derivatives  satisfying  (\ref{eq:SLR}). 
In our package, we implemented three basic approaches. 
Namely, the center-row-last-column (CR-LC), the center-row-first-column (CR-FC), and the regular row match (RRM).
All three methods are design for the case when the number of internal knots in the support is at least $2*k+2$. 
The cases with a smaller number of the knots in the support can be treated through the functional basis approach and function {\tt project()} that is discussed later in this work.
In what follows, we describe these constructions and the method for the spline objects that is implemented in the package.

\begin{widefigure}
\vspace{-1cm}
\begin{center}
\includegraphics[width=0.42\textwidth]{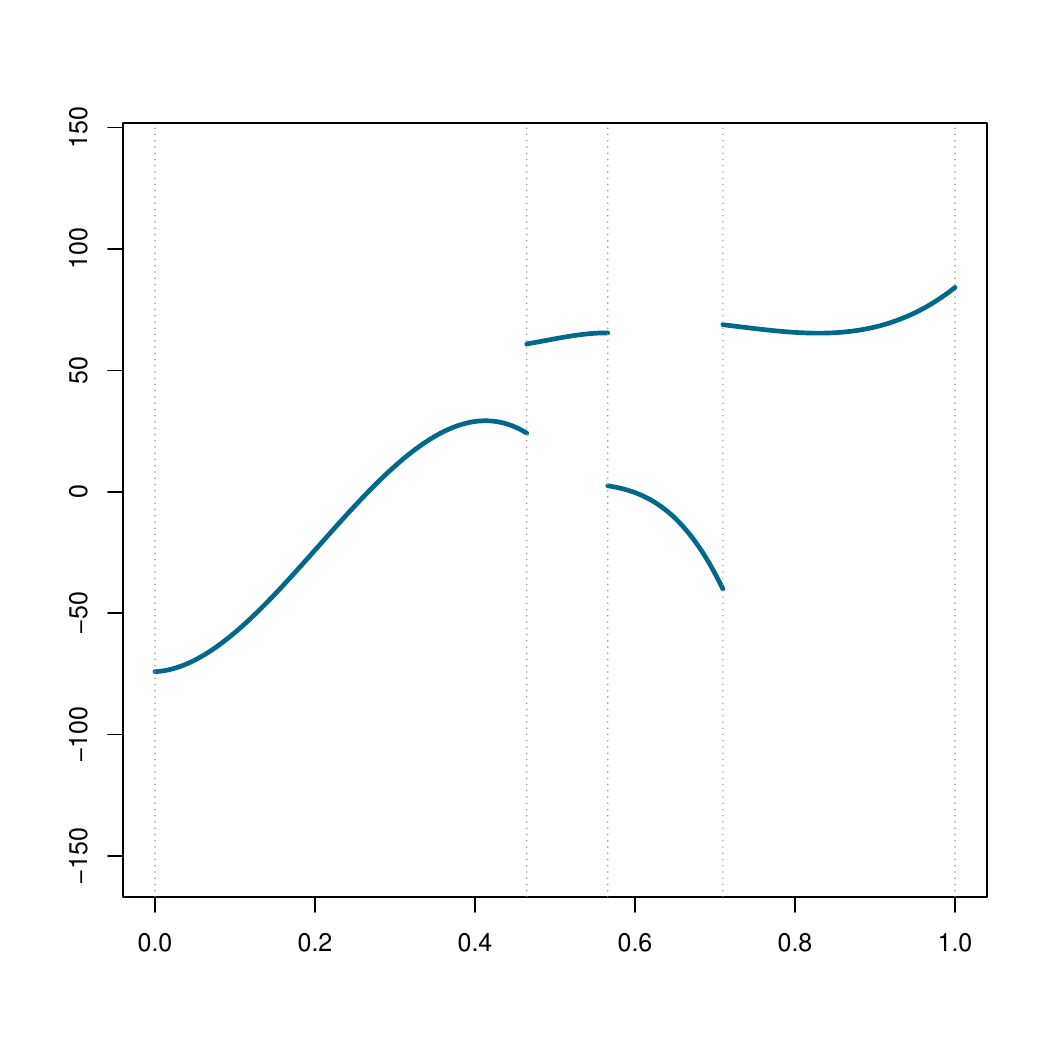}
\includegraphics[width=0.42\textwidth]{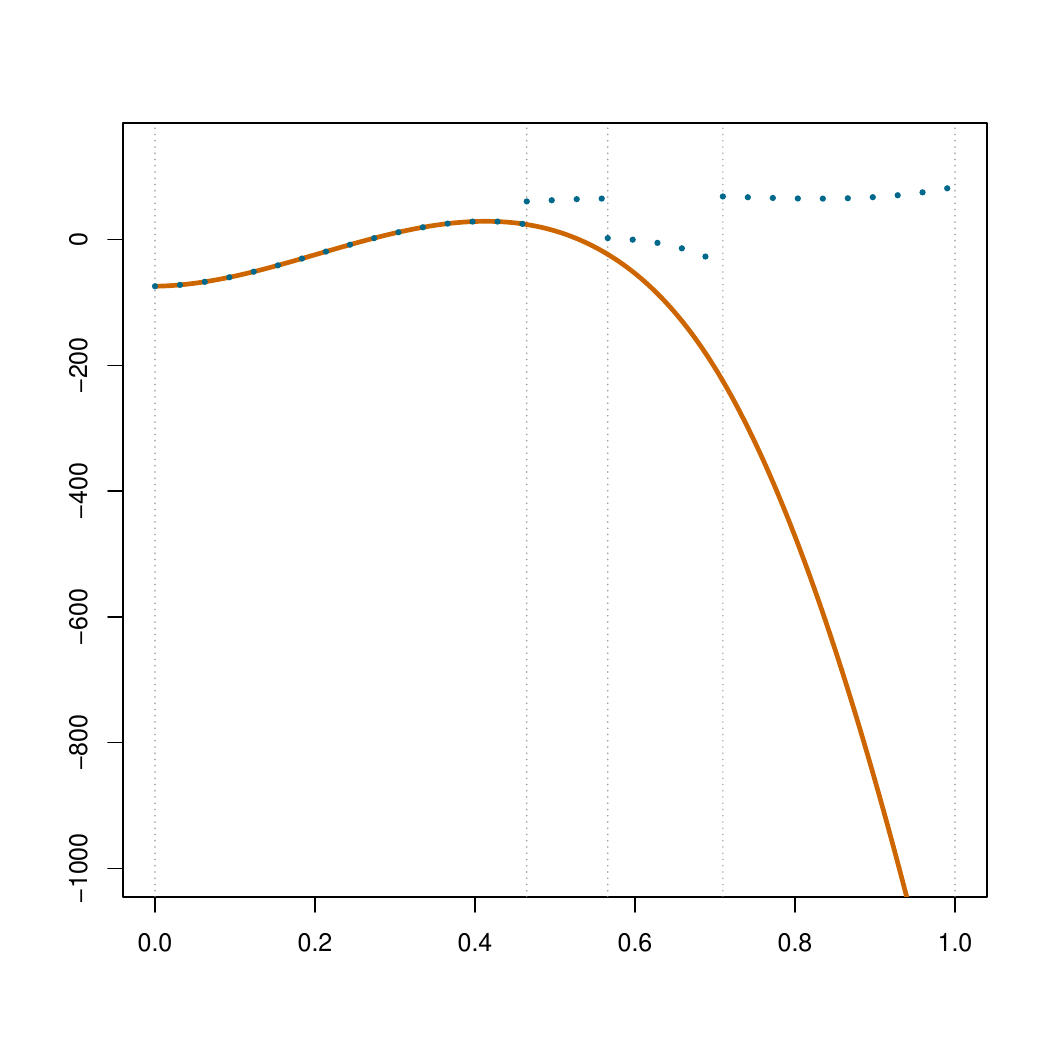}
\vspace{-1cm}
\\
\includegraphics[width=0.42\textwidth]{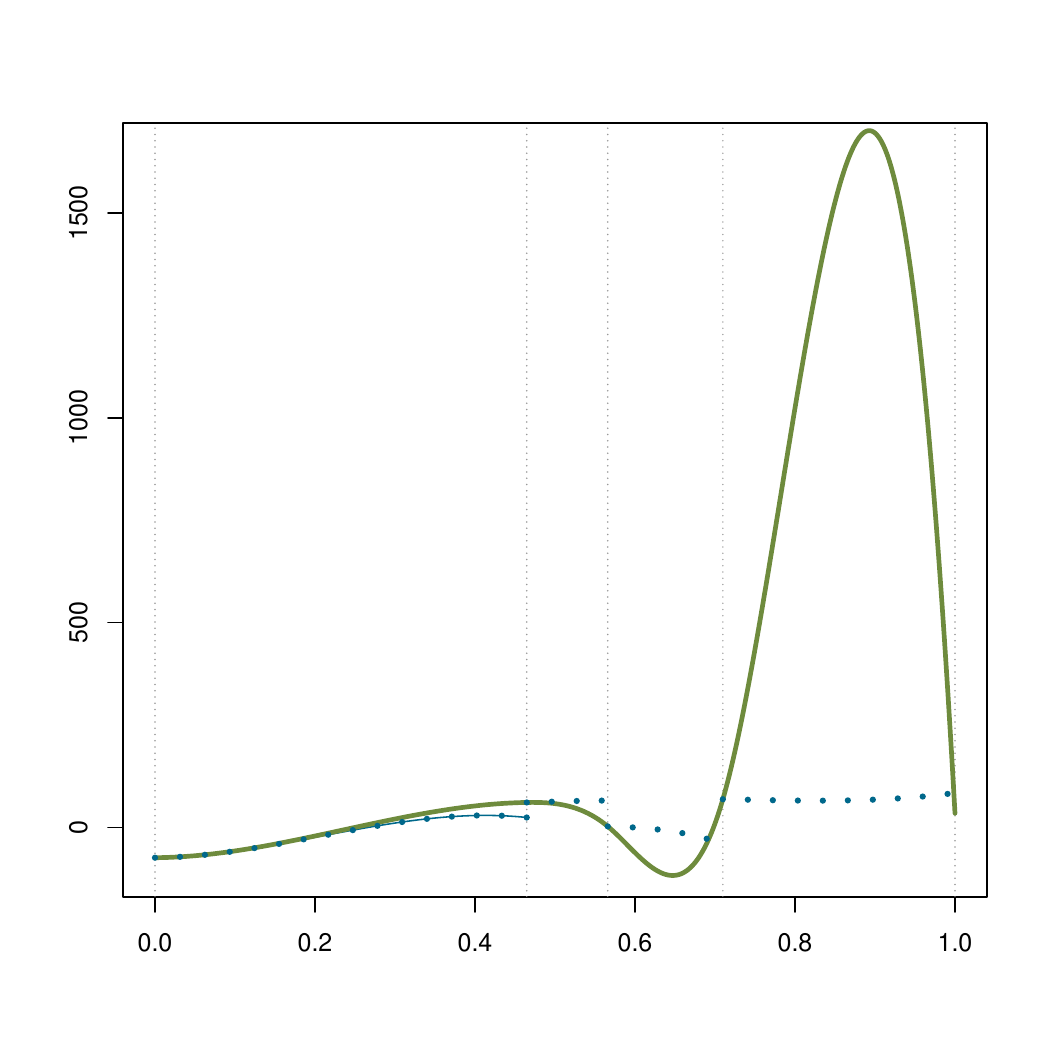}
\includegraphics[width=0.42\textwidth]{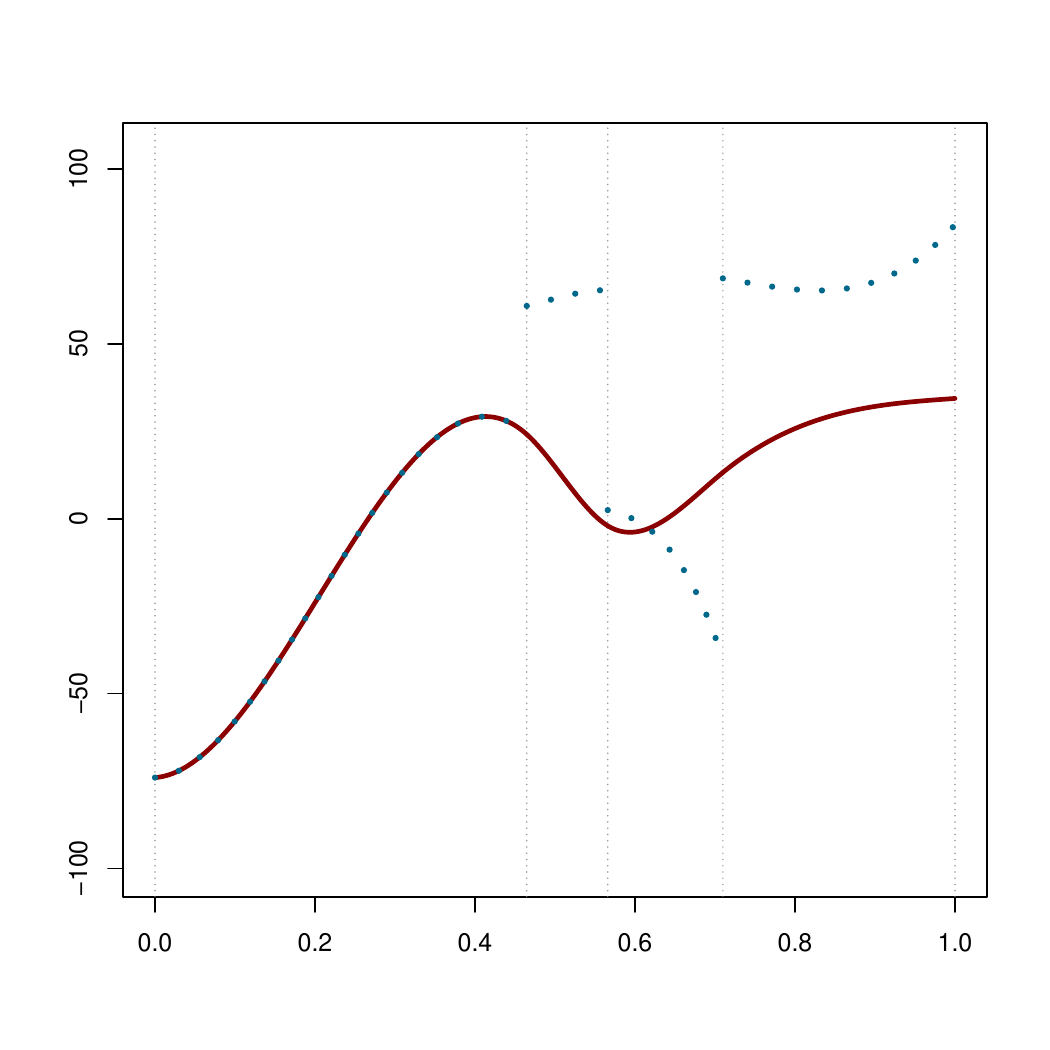} 
\vspace{-1cm}
\\
\includegraphics[width=0.42\textwidth]{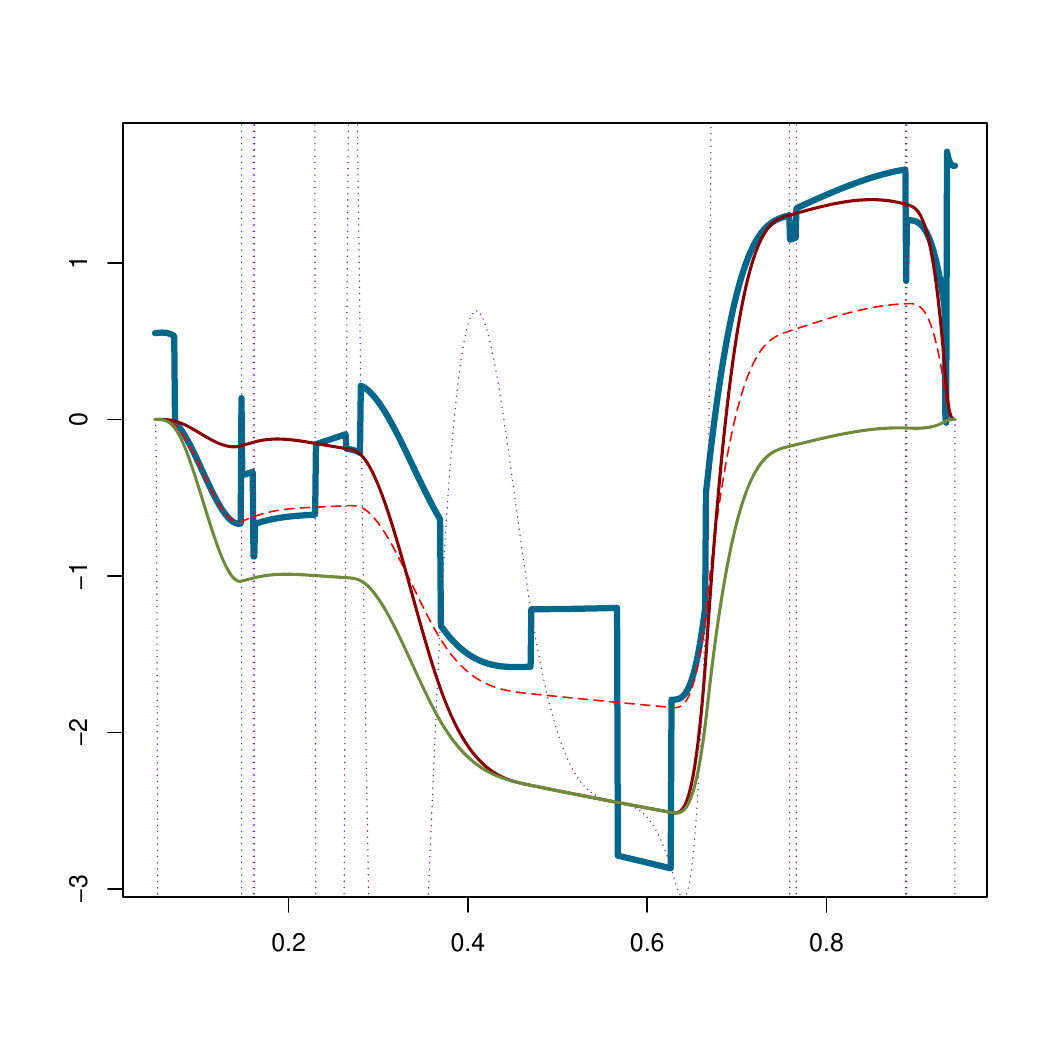}
\includegraphics[width=0.42\textwidth]{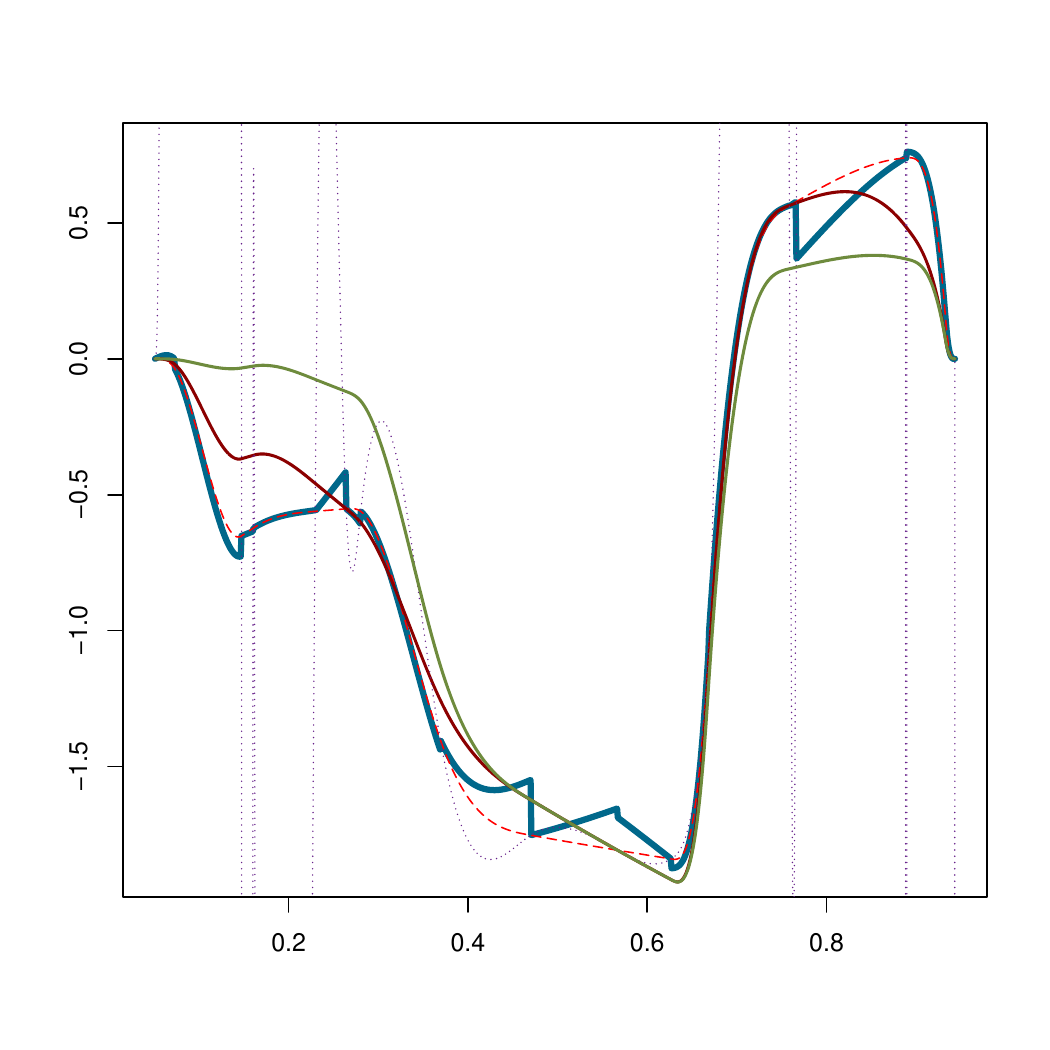}
\vspace{-1cm}
\end{center}
\caption{\small {\bf Top-Middle} -- The methods used in the direct spline building. {\it Top-Left:} The spline object that is not a spline but only a piecewise cubic polynomial without continuity at knots. {\it Top-Right:} A valid spline object obtained by matching the highest derivative at knots. {\it Middle-Left:} A spline obtained by matching the values of the function at knots. {\it  Middle-Right:} The most stable `correction' of the derivative matrix by matching derivatives at the endpoints (the RHS match is using the values that correspond to a RHS continuation of the function, so they do not match the function on the graph). 
{\bf Bottom} -- A spline given by a red dashed line is distorted and then reconstructed. In the LHS graph, all values at the knots are distorted by a random noise, while in the RHS graph only values of the derivatives are distorted.   The thick broken deep-sky-blue  line on both the graphs represents the resulting distorted piecewise polynomial function. The thin dotted lines going out of the bounds illustrate the CR-FC method, which tries to follow the values of the function at the knots. The RRM method represented by the dark-red lines seems to be the most accurate. The third line, olive-green, represents the CR-LC method.  }
\label{SpOb}
\end{widefigure}

\begin{description}
\item[The CR-LC method] In this method, we assume first that $k+1$ knots at both the initial and the terminal end point are excluded from considerations, so that we reduce to $n-2k$ knots. 
The reason for excluding these knots are the zero boundary conditions that uniquely determine the spline once the values of $\mathbf S$ at the $k+2^{\rm nd}$ 
knot from each endpoint are determined together with the value of the highest derivative over the interval between  the $k+1^{\rm st}$ knot and the $k+2^{\rm nd}$ 
knot.  It  will be discussed later in further detail. 

For these remaining knots, there are $n-2k-1$ values of the $k$th order derivative over the intervals between the knots.
In this approach, we set them to given values in the input. 
These leaves $k$ more independent values to be set as the dimension of the spline space is $n-k-1$. 
We set them through the derivatives at the center $\mathbf s^L_{l+1\,\, 0..k-1}$, where $l=[n/2]$ as before. 
It clearly also defines $\mathbf s^R_{l+1\,\, 0..k-1}$ (we use the Taylor expansion if $n$ is even, which is equivalent to applying `frlr' with $m=0$ to these two columns). 

It is convenient to reverse the order in the matrices  $\mathbf S^L_{k+2..l+1\,\,\cdot}$ and $\mathbf S^R_{k+2..l+1\,\,\cdot}$ and to consider `mirror' matrices $\widetilde{\mathbf S}^L=\widetilde{\mathbf S}^L_{0..l-k-1\,\,\cdot}$ and $\widetilde{\mathbf S}^R=\widetilde{\mathbf S}^R_{0..l-k-1\,\,\cdot}$, where also the values of the $k$th derivatives, i.e. $\tilde s^L_{i,k+1}$ and $\tilde s^R_{i,k+1}$,  are assigned so that they are over the interval corresponding to pairs  $(\widetilde{\mathbf S}^L_{i\,\,\cdot}, \widetilde{\mathbf S}^L_{i+1\,\,\cdot})$ and $(\widetilde{\mathbf S}^L_{i\,\,\cdot}, \widetilde{\mathbf S}^L_{i+1\,\,\cdot})$, respectively,  $i=0,\dots,l-k-1$. 
It is clear that by setting the values at the center knots as described above we have given the first rows in $\widetilde{\mathbf S}^L$ and $\widetilde{\mathbf S}^R$. 
Moreover, setting the $k$th derivatives over all the $n-2k$ knots (including over the interval between  the $k+1^{\rm st}$ knot and the $k+2^{\rm nd}$ 
knot from each end) is equivalent to setting the last columns in each of the two matrices $\widetilde{\mathbf S}^L$ and $\widetilde{\mathbf S}^R$.
By applying the `frlc' approach to these two matrices, we correct all the entries in these two matrices so that the matrix correspond to a spline. 
We emphasize that we also obtain the $k$th derivatives $s^L_{k,k+1}$ and $s^R_{k,k+1}$ at the $k+1^{\rm st}$ knot counting from each endpoint, respectively. 

Now, the values at these  knots  at the endpoints that were initially excluded, i.e.  $\mathbf S^L_{0..k+1\,\,\cdot}$ and $\mathbf S^R_{0..k+1\,\,\cdot}$, are handled through the `frlr' approach. Since our method is symmetric, it is enough to consider only one end-point, say the LHS. 
At this end, we have already the $k+1$ values of the derivatives at the $k+2$ knot (including the $k$th derivative over the interval between  the $k+1^{\rm st}$ knot and the $k+2^{\rm nd}$ 
knot)  and the $k$ values $\mathbf S_{0, 0..k-1}$, which are zero because of the zero boundary condition. 
Starting from the RHS and continuing to the LHS, one can obtain uniquely all the remaining entries in $\mathbf S^L_{0..k+1\,\,\cdot}$ using `frlr' approach for $\mathbf U = \widetilde{\mathbf S}_{0..k+1,\cdot}$, which is the mirror of $\mathbf S_{0..k+1,\cdot}$ with reassigned the $k$th derivatives. In fact, it guarantees the unique solution to the problem (the case $m=k$).

\item[The CR-FC method] The method is very similar to the previous one except this time we set {\it values} of the spline at $n-2k+2$ knots. 
This leaves $k-1$ more values to be set, which we do at the center except for the $k$th derivative. All other entries can be evaluated as described in the `frfc' approach for the matrix $\mathbf U$ set to  $\widetilde{\mathbf S}^L_{0..l-k\,\,\cdot}$ and $\widetilde{\mathbf S}^R_{0..l-k\,\,\cdot}$, respectively, and then evaluating the values at the $k+2$ knots at each endpoint  as described in the second part of the CR-LC method.

\item[The RRM method] This methods uses only the `frlr' approach which produces most stable results for large number of knots relatively to the order of the spline as illustrated in Figure~\ref{SpOb}.
The method starts the same way as the two previous methods by setting $k+1$ values at the center, dividing matrix $\mathbf S$ into $\mathbf S^L$ and $\mathbf S^R$ and, in the even $n$ case, settling first the values at the two central knots using `frlr'  with $m=0$, as described before. 
Having then the central values settled, it uses `frlr'  with $m=k$, starting from these central knots  through the successive groups of $k+1$ knots as long as the number of the evaluated rows in  $\widetilde{\mathbf S}^L$ and $\widetilde{\mathbf S}^R$, respectively,  does not exceed $l-k+1$, i.e. obtaining values $\widetilde{\mathbf S}^L_{0..j(k+1)\,\,\cdot}$ and $\widetilde{\mathbf S}^R_{0..j(k+1)\,\,\cdot}$, for $j$ being integer part of $(l-k)/(k+1)$. 
Then the `frlr' approach is used for $\widetilde{\mathbf S}^L_{j(k+1)..l-k+1\,\,\cdot}$ and $\widetilde{\mathbf S}^R_{j(k+1)..l-k+1\,\,\cdot}$, unless $j(k+1)=l-k+1$.
The last step for the final $k+2$ knots at each end is the same as in the two previous methods. 
\end{description}

In the package, these three methods are implemented in the method {\tt construct()}.
In Figure~\ref{SpOb}~{\it(Bottom)}, we present an illustration of  {\tt Splinets} objects as results of application of the functions {\tt construct()}.
There a spline has been distorted by addition of some random noise to its matrix of the derivatives. 
Then the three above described methods has been used to correct the matrix of derivative and construct valid splines. 
The methods RRM and CR-LC perform similarly with the RRM doing a better job when only the derivatives are distorted (right). 
The CR-FC that tries to follows the values of the distorted function performs badly at the boundaries but improves its performance at the center when the values of the function are not distorted (right). 

\subsection{Random spline generator}
In many applications, it is convenient to have an effective simulations of random functions. 
The package has implemented a simple approach through simulation of the error noise around the matrix of the derivatives at knots and then using our construction methods to obtain a valid spline.
Formally, the model from which simulations are made can be written in the terms of the matrices of the derivatives
$$
{\mathbf T}=\mathbf S+\mathbf C\left(\boldsymbol\epsilon(\boldsymbol \Sigma, \boldsymbol \Theta)\right),
$$
where ${\mathbf T}$ is the $(n+2)\times (k+1)$ matrix of derivatives for a generated random spline, $\mathbf S$ is the analogous matrix for its mean value spline, $\boldsymbol\epsilon(\boldsymbol \Sigma, \boldsymbol \Theta)$ is a $(n+2)\times (k+1)$ random matrix generated from the mean-zero matrix valued normal distribution with the covariances $\boldsymbol \Sigma$  and $\boldsymbol \Theta$, $(n+2)\times(n+2)$ and  $(k+1)\times(k+1)$, respectively, non-negatively defined matrices. Namely, for a matrix $\mathbf Z$ of iid standard normal variables
$$
\boldsymbol\epsilon(\boldsymbol \Sigma, \boldsymbol \Theta)=\boldsymbol \Sigma^{1/2}\mathbf Z \boldsymbol \Theta^{1/2}.
$$
Finally, $C(\cdot)$ is a chosen correction method  of an arbitrary $(n+2)\times (k+1)$ matrix so it satisfies the conditions required for it to be a matrix of derivative for a spline as described in the previous sections. 
In the package implementation, this is one of the three methods: RRM, CR-LC, CR-FC. 
In Figure~\ref{RanSp}, we see samples random functions generated using the package implementation {\tt rspline()}, using different variances and different spline correction methods.
\begin{widefigure}[t!]
\vspace{-1cm}
\begin{center}
\includegraphics[width=0.42\textwidth]{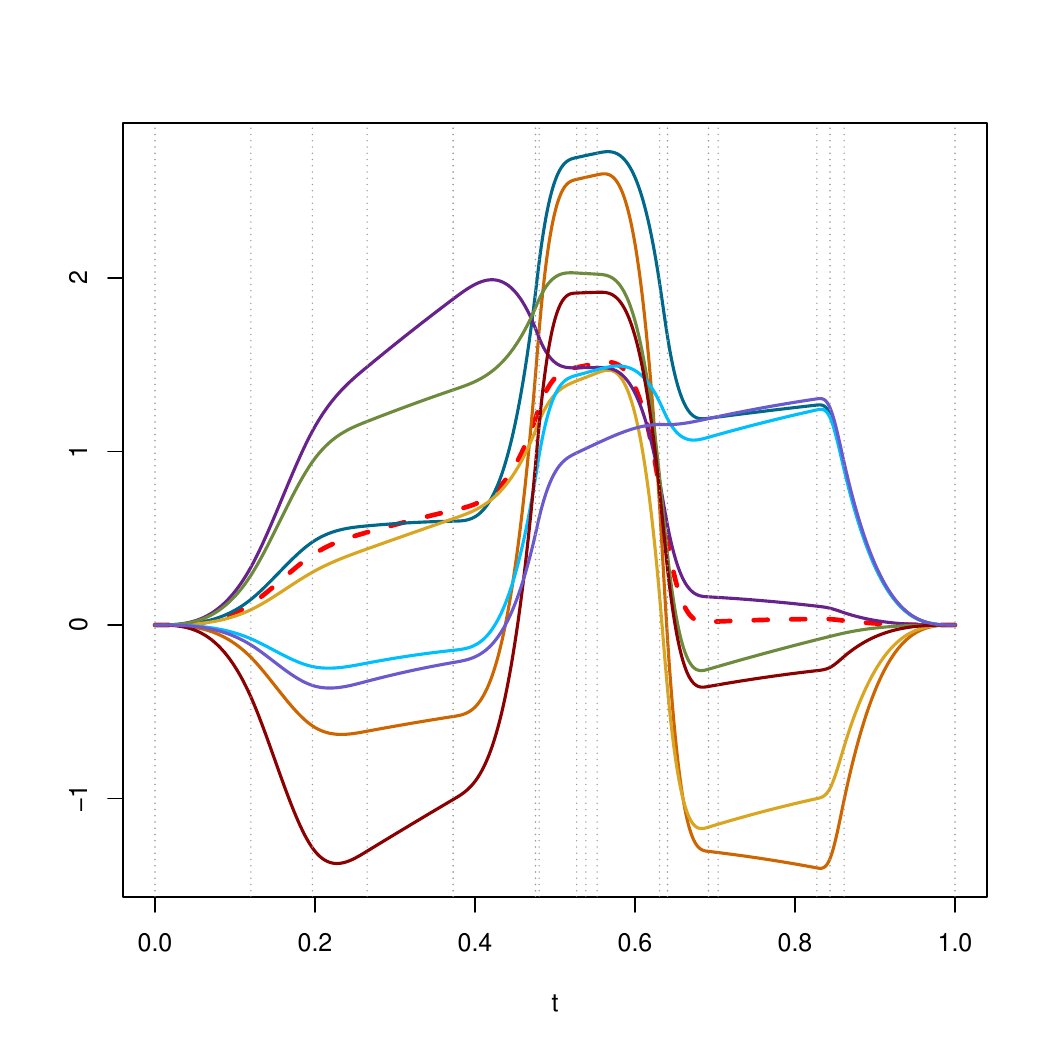}
\includegraphics[width=0.42\textwidth]{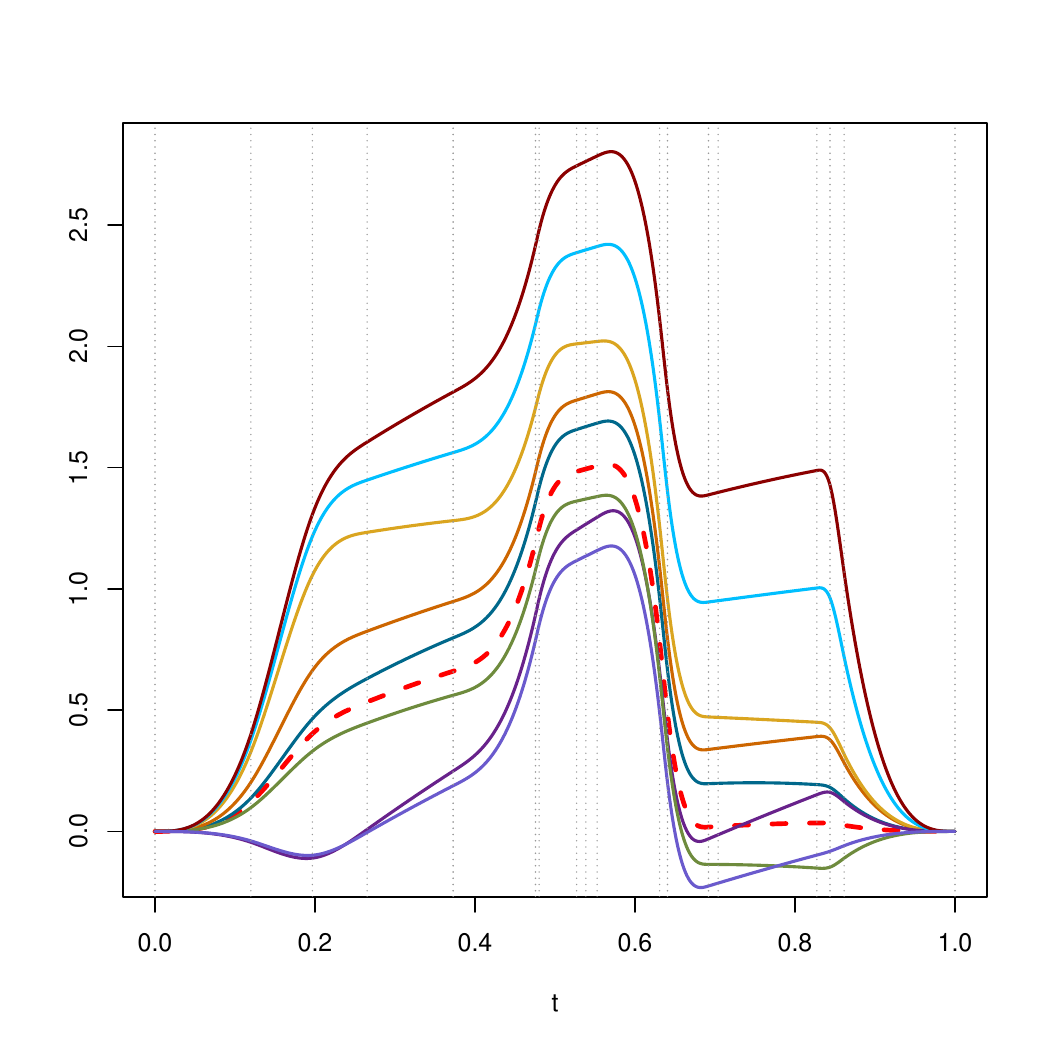}
\vspace{-1cm} \\
\includegraphics[width=0.42\textwidth]{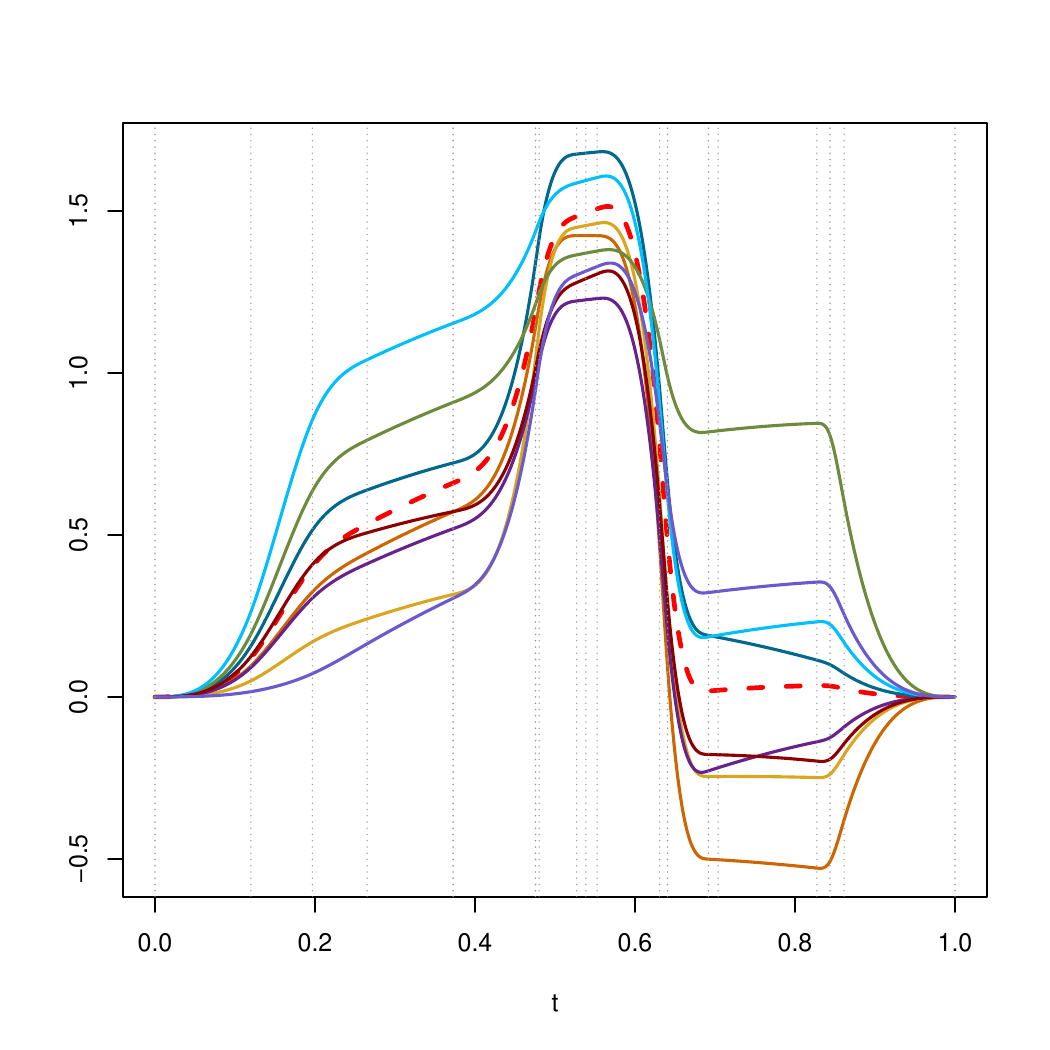}
\includegraphics[width=0.42\textwidth]{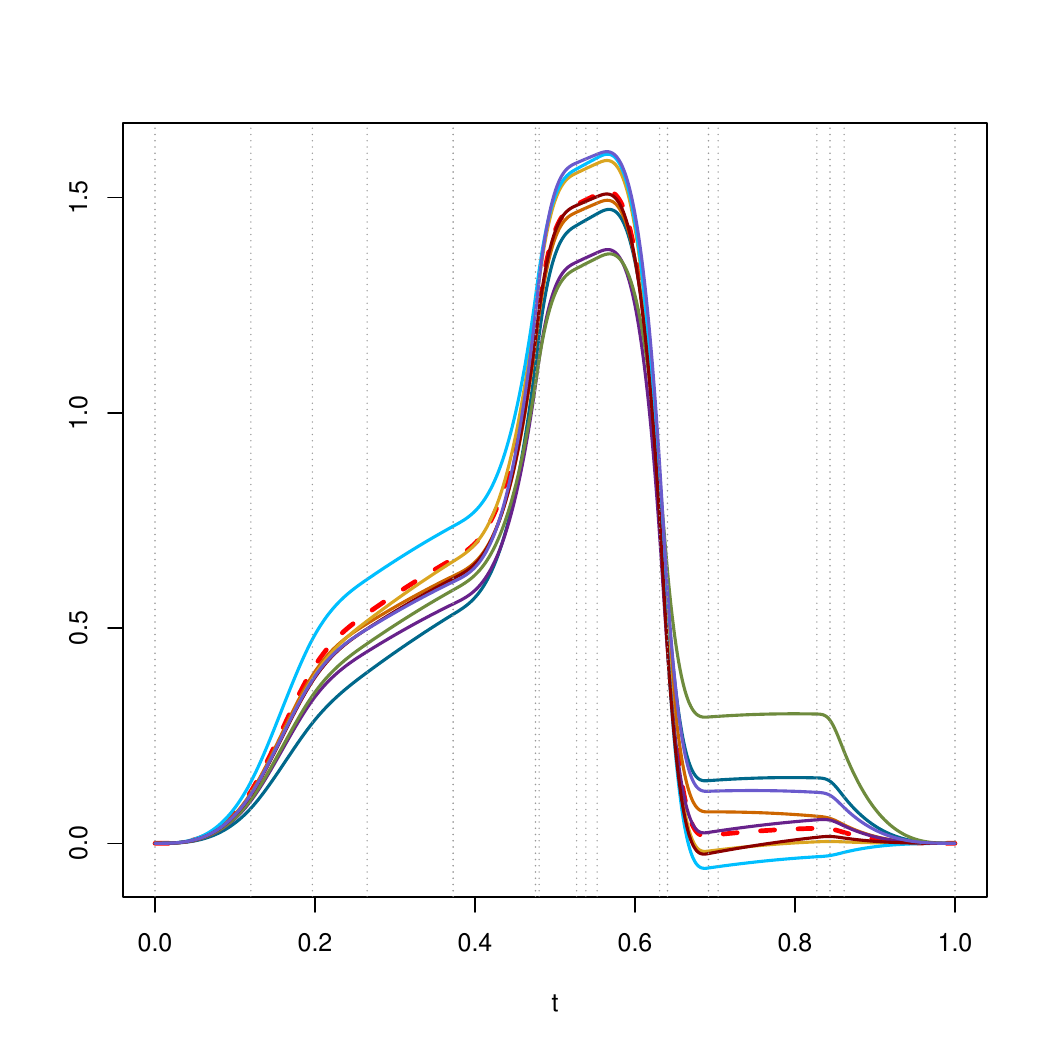}
\vspace{-1cm}
\end{center}
\caption{\small Random samples of splines. The dashed lines in the middle of the packs of random splines represent the mean value spline which is the same for all the cases. 
The vertical dashed lines are passing through the knot locations. {\it Left-Top}: Independent splines with standard normal variables in the matrix of the derivatives, the RRM correction method.  
{\it Left-Bottom}: The same but through the CR-LC correction method.
\ {\it Right}: Using smaller variances for knots closer to zero.  }
\label{RanSp}
\end{widefigure}
\section{Basic operations on splines}\label{sec:oper}
Splines are elements of a Hilbert space and thus their linear combination and the inner products are readily defined. 
In the representation of the package splines are kept in the main object  $ \widetilde{\mathcal{S}}$  as a sequence $\left\{ \mathcal{S}_1,\dots, \mathcal{S}_d \right\}$ of quadruples
$$
\mathcal{S}_j = \left\{k, \boldsymbol \xi, \mathcal{I}_j, \mathbf{S}_j\right\},
$$
where, $k$ is the common smoothness order, the common knots are $\boldsymbol \xi = (\xi_0, \xi_1,\dots, \xi_{n+1})$.
The derivative matrix $\mathbf{S}_j$, $j=1,\dots,d$ for each spline is a sequence of  matrices defining the derivative over components of the support set given through pairs of the indices in $\mathcal I_j$.
Here, for simplicity, we assume the one-sided representation of each matrix in $\mathbf{S}_j$ and we identify $\mathcal I_j$ with a support set in the range of the splines.
\subsection{Embedding a spline into higher dimension spaces of splines}
One of the most interesting aspects of the spline spaces are their agility following from different choices of the knots. 
However, this is not that often utilized with most of the focus typically being on the splines under the fixed set of knots. 
In the proposed package, we  provide tools to fully explore the properties of the splines under different choices of knots. 
 Any spline of a given order remains a spline of the same order if one considers it on a larger set of knots than the original one.
However, this changes {\tt Splinets} representation of the so refined spline. 
It is thus important to have a function that embeds a given spline into the bigger space of splines residing on a refined set of knots. 
In the package the function \code{refine()} does the task allowing conveniently represent splines from smaller spaced in the bigger more refined spaces.  
This function will be utilized in the final section of this work, where projections to spline spaces are discussed.  
\subsection{Linear combination}
The linear combination of splines could be easily implemented if the splines have full supports.
In this case, a linear combination of splines correspond simply to taking the same linear combination of the matrices of the values of the derivatives at the knots.
Thus for two splines, 
$\mathcal{S} = \left\{k, \boldsymbol \xi, (1, n+2), \mathbf{S}\right\}$ and $\widetilde{\mathcal{S}} = \left\{k, \boldsymbol \xi, (1, n+2), \widetilde{\mathbf{S}}\right\}$, 
their linear combination with coefficients $\alpha$ and $\widetilde{\alpha}$  has the matrix of the derivatives given by 
$$
\mathbf W=\alpha \mathbf S+\widetilde{\alpha}\widetilde{\mathbf S}.
$$
One of the important features of the package, is utilization of the support sets.  
Thus for general  $\mathcal{I}$ and $\widetilde{\mathcal{I}}$, we utilize them as follows.
First, we note that the support of the linear combination is at most $\mathcal J=\mathcal I \cup \widetilde{\mathcal I}$.
Then the matrix of the derivatives over this support for the linear combination is given by $\mathbf{W}$ which is obtained from
\begin{eqnarray*}
{\mathbf{W}}_{\mathcal{I}_1 \cdot} &=& \alpha \mathbf{S}_{\mathcal{I}_1 \cdot} \\
{\mathbf{W}}_{\mathcal{I}_2 \cdot} &=& \alpha \mathbf{S}_{\mathcal{I}_2 \cdot}+\widetilde{\alpha}\widetilde{\mathbf{S}}_{\mathcal{I}_2 \cdot}\\
{\mathbf{W}}_{\mathcal{I}_3 \cdot} &=& \widetilde{\alpha}\widetilde{\mathbf{S}}_{\mathcal{I}_3\cdot}
\end{eqnarray*}
where $\mathcal I_1=\mathcal I \setminus \mathcal I_2$, $\mathcal I_2=\mathcal I \cap \widetilde{\mathcal I}$, and $\mathcal I_3= \widetilde{\mathcal I} \setminus \mathcal I_2$
and subindexing the matrices by the support components stands for considering these parts of them that correspond to the respective components of the support. 

More general, denote the operation of linear transformation as $\ell(\widetilde{\mathcal{S}}, \mathbf{P})$, where $\mathbf{P} = \{p_{ij}\}$ is a $m \times d$ transformation matrix. 
The linear transformation operator define a map from an input {\tt Splinets}-object of size $d$ to the same order {\tt Splinets}-object of size $m$. 
The derivative matrix of the $i$th spline in the output can be calculated in the full support case as
$$
\sum_{j=1}^d p_{ij} \mathbf{S}_{j}.
$$ 
Similarly to the case of a linear combination of two splines,  the support sets can be utilized to improve the computational efficiency.
The operation of the linear combination of splines is implemented in {\tt lincomb()}.

\subsection{Derivative and integral}
If $t\in [\xi_{i},\xi_{i+1}]$, $i=0,\dots, n$, then 
$$
S(t)=\sum_{l=0}^k  \frac{(t-\xi_{i})^l}{l!} s_{i,l}.
$$
The derivative at this point is 
$$
S'(t)=\sum_{l=1}^k  \frac{(t-\xi_{i})^l}{l!} s_{i,l}.
$$
Thus the derivative matrix of the first derivative of a spline is given by
$$
\mathcal{S}(S') = \left\{k-1, \boldsymbol \xi, \mathcal{I}, \mathbf{s}_1,..., \mathbf{s}_k \right\}.
$$
This simple evaluation of the spline derivative is implemented in {\tt deriva()}.

The indefinite integral of a spline within each interval $[\xi_i, \xi_{i+1}]$, is given by
\begin{equation}
	\label{eq:indefInt}
	\int S(t) dt = \sum_{l=1}^k  \frac{(t-\xi_{i})^{l}}{l!} s_{i,l-1} + c_i,
\end{equation}
where $i = 0,\dots,n$ and $c_i$ is an arbitrary real number. 
There are many ways to determine a specific inverse derivative function by specifying $c_i$'s. 
We choose the one that preserve the spline smoothness but will not necessary yield zero boundary condition on the RHS terminal knot. 

Set $c_0 = 0$, for $i = 1,\dots,n+1$. 
To achieve the continuity at the knots $c_i$ is iteratively calculated by
$$
c_i = [\mathbf{S}]_{i-1,.} \mathbf{A}^*_{\xi_{i+1}-\xi_{i}},
$$
where $\mathbf{A}^*_{\xi_{i+1}-\xi_{i}}$ is slightly different from \cref{eq:A}.
Now, it is a column vector of the Taylor coefficients from order $1$ to order $k+1$. 
$$
A_{\alpha}^* = \left(\alpha, \frac{\alpha^2}{2!},\dots, \frac{\alpha^{(k+1)}}{(k+1)!}\right)^T.
$$
One can notice that the resulting splines do not necessarily satisfy the boundary condition.
The boundary condition of the output is only guaranteed when the definite integral of the input spline over the entire range of knots vanishes.
For example, the derivative of a spline that satisfies the boundary condition has this property.
In this sense the operation constitute the inverse of the derivative. 
More specifically, if $\mathcal I$ is the integration operator and $\mathcal D$ is the differentiation operator, then 
$$
\mathcal I \mathcal D = \mathcal D \mathcal I=\mathbf I, 
$$
where $\mathbf I$ is the identity operator on $\mathcal S_k^{\boldsymbol \xi}$, and 
$$
\mathcal D \mathcal S_k^{\boldsymbol \xi}= \mathcal S_{k-1}^{\boldsymbol \xi},\, \, \mathcal I \mathcal S_k^{\boldsymbol \xi} \varsupsetneq  \mathcal S_{k+1}^{\boldsymbol \xi}.
$$
This operation is implemented as a function {\tt integra()}. 

The definite integral of a spline can be calculated through \cref{eq:indefInt}. The integral of spline $S(t)$ within interval $[\xi_i, \xi_{i+1}]$ is
$$
\int_{\xi_{i}}^{\xi_{i+1}} S(t)dt = [\mathbf{S}]_{i,.} [\mathbf{A}^*_{\xi_{i+1}-\xi_{i}}]_{.,0}  
$$
The definite integral of a spline is implemented in {\tt dintegra()}. 
\subsection{Inner product}
Finally, we demonstrate how the topology induced by the inner product in the space of splines can be expressed in the terms of the coefficients of the matrices. 
The calculation of mutual inner products of a set of splines is implemented in {\tt gramian()}. 
The algorithm for calculating inner product is established in the following proposition.  
\begin{proposition}
\label{prop:topo}
Let $\tilde{\mathcal S}$ be the $ n + k + 1$ dimensional space  of $(n+2)\times (k+1)$ matrices $\mathbf S$ as in \cref{eq:spmat} satisfying \cref{eq:addk-1nL}. 
For $\mathbf S,\tilde{\mathbf S} \in \tilde{\mathcal S}$, let us define the inner product 
$$
 \langle \mathbf S, \tilde{\mathbf S}  \rangle 
 =
\begin{pmatrix} 1 & \frac 12 & \dots & \frac 1{2k+1} \end{pmatrix} 
\sum_{i=0}^n  
(\mathbf A_{\xi_{i+1}-\xi_{i}}^{*T} \cdot \mathbf S_{i\cdot})*({\mathbf A_{\xi_{i+1}-\xi_{i}}^{*T} \cdot \tilde{\mathbf S}}_{i\cdot}),
$$
Here we use the following notations and conventions:  for two  $r\times 1$ vectors $\mathbf v$ and $\mathbf w$,  their convolution is  a  $(2r-1)\times 1$ vector defined by 
$$
\mathbf v * \mathbf w=
\left(\sum_{m=(p-r+1)\vee 1}^{p\wedge r} v_{p-m+1}w_{m}
\right)_{p=1}^{2r-1}
,
$$ 
 while $\mathbf v \cdot \mathbf w$ is coordinate-wise multiplication of vectors. Moreover, for a matrix $\mathbf X$,  its  $i^{ th}$ row is denoted by $\mathbf X_{i\cdot}$. 
 
Then $\tilde{\mathcal S}$ equipped with this inner product is isomorphic with the space of splines of the $k$th order spanned over the knots $\xi_0,\dots, \xi_{n+1}$ equipped with the standard inner product of the square integrable functions.  
\end{proposition}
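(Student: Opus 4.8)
The statement asserts an \emph{isometric} isomorphism, so the plan is first to produce a linear bijection between $\tilde{\mathcal S}$ and the $(n+k+1)$-dimensional space $\mathcal P$ of order-$k$ splines over $\xi_0,\dots,\xi_{n+1}$, and then to check that it intertwines the bilinear form defined above with the $L^2$ inner product. Throughout I would work with the one-sided representation of the derivative matrix, as in the rest of this section, so that $s_{ik}$ is the constant value of $S^{(k)}$ on $(\xi_i,\xi_{i+1})$ and the Taylor rule below is valid on the closed interval $[\xi_i,\xi_{i+1}]$. The bijection is $\Phi(\mathbf S)=S$, where $S$ is the piecewise polynomial given on $[\xi_i,\xi_{i+1}]$ by $S(t)=\sum_{l=0}^{k}\frac{(t-\xi_i)^l}{l!}s_{il}$, $i=0,\dots,n$. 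The relations \cref{eq:addk-1nL} are precisely what forces the polynomial built from $\xi_i$ and the one built from $\xi_{i+1}$ to agree, together with their first $k-1$ derivatives, at each internal knot, so $\Phi(\mathbf S)$ is a genuine spline of order $k$; $\Phi$ is manifestly linear, and it is injective because $\Phi(\mathbf S)=0$ forces every derivative value, hence every entry of $\mathbf S$, to vanish. Since $\dim\tilde{\mathcal S}=n+k+1=\dim\mathcal P$ (the count recalled in the text and in the proof of \cref{prop:dim}), $\Phi$ is automatically onto, hence an isomorphism of linear spaces.

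Next I would verify the isometry by a direct, interval-by-interval computation. For $\mathbf S,\tilde{\mathbf S}\in\tilde{\mathcal S}$ with images $S,\tilde S$ one has $\langle S,\tilde S\rangle_{L^2}=\sum_{i=0}^{n}\int_{\xi_i}^{\xi_{i+1}}S(t)\tilde S(t)\,dt$, and on $[\xi_i,\xi_{i+1}]$ I would substitute $u=t-\xi_i$ and write $\alpha_i=\xi_{i+1}-\xi_i$. Then $S$ and $\tilde S$ restrict to polynomials in $u$ of degree at most $k$ with coefficient sequences $(s_{il}/l!)_{l}$ and $(\tilde s_{im}/m!)_{m}$; the coefficient of $u^p$ in their product is the discrete convolution of these two sequences; and integrating monomials by $\int_0^{\alpha_i}u^p\,du=\alpha_i^{p+1}/(p+1)$ gives
$$
\int_{\xi_i}^{\xi_{i+1}}S(t)\tilde S(t)\,dt \;=\; \sum_{l=0}^{k}\sum_{m=0}^{k}\frac{\alpha_i^{\,l+m+1}}{(l+m+1)\,l!\,m!}\,s_{il}\,\tilde s_{im}.
$$
It then remains to recognise this double sum as the $i$-th term of the expression in the statement: the coordinatewise product $\mathbf A^{*T}_{\alpha_i}\cdot\mathbf S_{i\cdot}$ rescales the row $\mathbf S_{i\cdot}$ by the powers of $\alpha_i$ stored in $\mathbf A^*_{\alpha_i}$, convolving the two rescaled rows collects in position $p$ exactly the contributions with $l+m=p-1$, and the final left-multiplication by $\bigl(1,\tfrac12,\dots,\tfrac1{2k+1}\bigr)$ supplies the missing factor $1/(l+m+1)$. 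Summing over $i$ then yields $\langle\mathbf S,\tilde{\mathbf S}\rangle=\langle S,\tilde S\rangle_{L^2}$, so $\Phi$ is the desired isometric isomorphism.

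I do not expect a genuine conceptual obstacle: the linear isomorphism is already essentially contained in the earlier discussion of \cref{eq:addk-1nL} and \cref{prop:dim}, and the analytic content reduces to the single monomial-integration identity above together with the elementary fact that polynomial multiplication corresponds to convolution of coefficient vectors. The one step that needs care is the last identification, where one must check that the powers of $\alpha_i$ and the factorials carried by $\mathbf A^*_{\alpha_i}$, the index ranges in the convolution, and the harmonic weights $1/p$ all combine correctly; this is bookkeeping rather than mathematics, but it is exactly where an off-by-one or a misplaced power of $\alpha_i$ would most easily creep in, so I would pin it down explicitly — for instance by first verifying the identity on the single-interval monomial case $S(t)=(t-\xi_i)^a/a!$, $\tilde S(t)=(t-\xi_i)^b/b!$ and then extending by bilinearity.
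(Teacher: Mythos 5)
Your proposal is correct and follows essentially the same route as the paper's proof: the paper likewise takes for granted the identification of $\mathbf S$ with the unique spline having those derivative values (your explicit linearity/injectivity/dimension argument just spells this out) and then verifies the isometry exactly as you do, by expanding $\int_{\xi_i}^{\xi_{i+1}} S\tilde S$ via the Taylor polynomials, integrating monomials, and reindexing the double sum as a convolution with the weights $1,\tfrac12,\dots,\tfrac1{2k+1}$. The only caution concerns the bookkeeping you deferred: the paper closes the computation by splitting $(\xi_{i+1}-\xi_i)^{l+1}$ symmetrically, i.e.\ rescaling row entries by $\alpha_i^{j+1/2}/j!$, whereas the literal definition of $\mathbf A^*_{\alpha_i}$ given in the integral subsection has entries $\alpha_i^{j+1}/(j+1)!$, with which the powers and factorials do not match exactly — so the compact formula in the statement must be read with that adjustment, a looseness in the paper's notation rather than a flaw in your argument.
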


\begin{proof}
For a given set of knots $\boldsymbol \xi$ let $S$ and $\tilde S$ be the (unique) splines such that $\mathcal S_0 (S)= \mathbf S$ and $\mathcal S_0 (\tilde S)=\tilde{\mathbf S}$.
Then 
\begin{align*}
\langle S, \tilde{S}\rangle 
&= 
\sum_{i=0}^n \int_{\xi_{i}}^{\xi{i+1}} S(t) \tilde S(t)~ dt\\
&=
\sum_{i=0}^n \int_{\xi_{i}}^{\xi{i+1}} 
\sum_{j=0}^k s_{ij} \frac{(t-\xi_i)^j}{j!}
\sum_{j=0}^k \tilde s_{ij} \frac{(t-\xi_i)^j}{j!}
~ dt
\\
&=
\sum_{i=0}^n  
\sum_{j,r=0}^k \frac{s_{ij}\tilde s_{ir}}{j!r!} \int_{\xi_{i}}^{\xi{i+1}} (t-\xi_i)^{j+r} ~dt
\\
&=
\sum_{i=0}^n  
\sum_{j,r=0}^k \frac{s_{ij}\tilde s_{ir}}{j!r!} \frac{(\xi_{i+1}-\xi_i)^{j+r+1}}{j+r+1}
\\
&=
\sum_{i=0}^n  
\sum_{l=0}^{2k}\frac{(\xi_{i+1}-\xi_i)^{l+1}}{l+1}\sum_{m=(l-k)\vee 0}^{l\wedge k} \frac{s_{il-m}\tilde s_{im}}{(l-m)!m!}
\\
&=
\sum_{i=0}^n  
\sum_{l=0}^{2k}
\frac{1}{l+1}
\sum_{m=(l-k)\vee 0}^{l\wedge k} 
\frac{
 s_{il-m}(\xi_{i+1}-\xi_i)^{l-m+1/2}
}{
(l-m)!
}
\frac{
\tilde s_{im}(\xi_{i+1}-\xi_i)^{m+1/2}
}{
m!
},
\end{align*}
which shows the isometry property of the mapping $\mathcal S_0$. 
\end{proof}

\section{Bases of splines and their orthogonalizations}
As we have seen in the previous sections, the direct approach to building splines requires a lot of care and often can be cumbersome. 
A much better way to build splines is through functional bases of splines. 
There are many possible choices of such bases but the most popular are the $B$-splines. 
Despite having many advantages, the $B$-splines do not constitute an orthogonal basis. 
Our main contribution is to implement an optimal orthogonalization of the $B$-splines introduced in \cite{Liu2019SplinetsE}.
The presentation of this spline basis benefits from organizing them in the form of a net. 
In this framework, the derived orthogonal bases of splines are referred to as the splinets. 


\begin{figure}[t!]
  \centering
  \raisebox{1cm}{
  \includegraphics[width=0.43\textwidth]{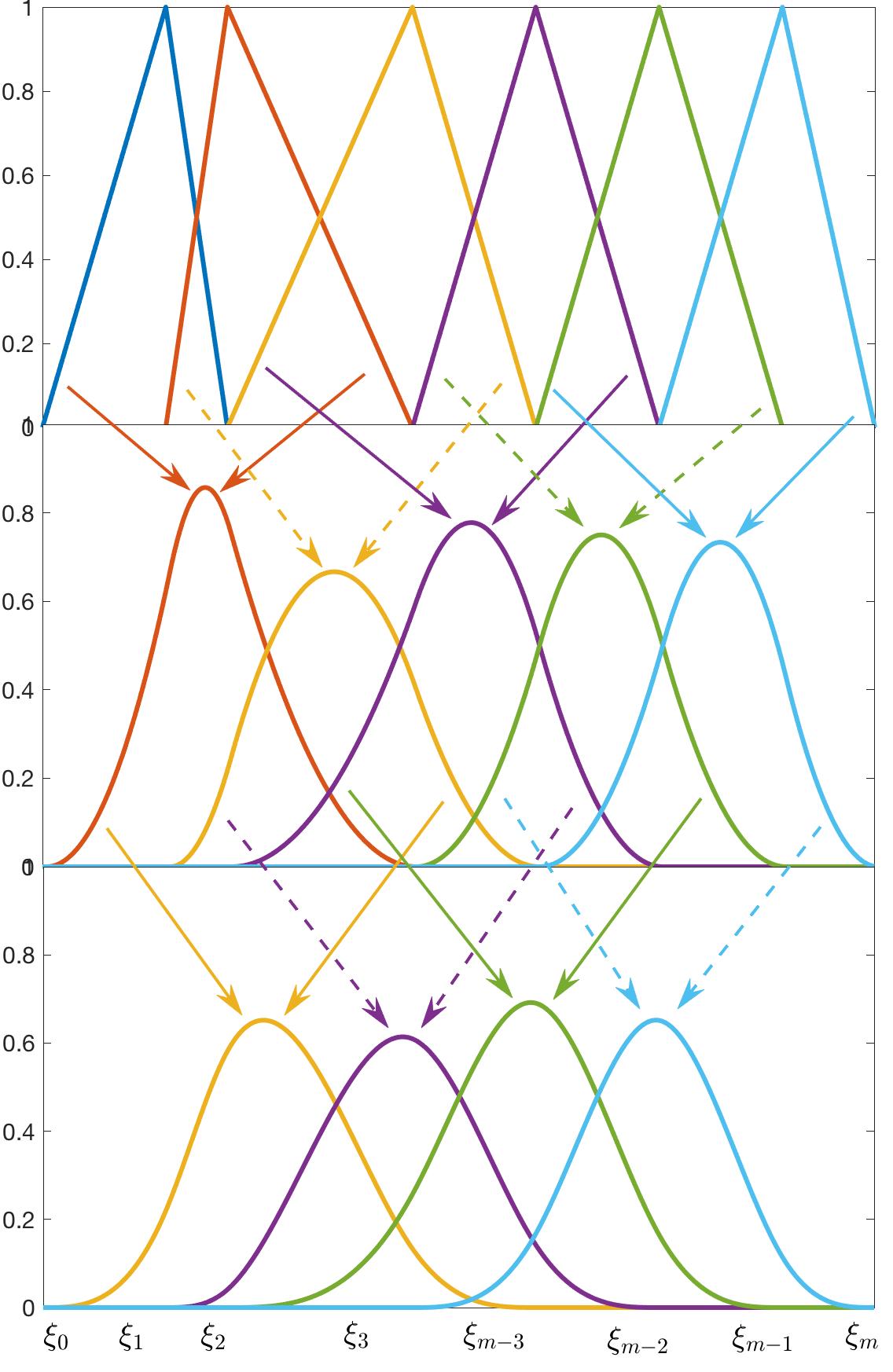}\hspace{-1mm}
  }
  \parbox[b]{0.55\textwidth}{ 
  \includegraphics[width=0.6\textwidth]{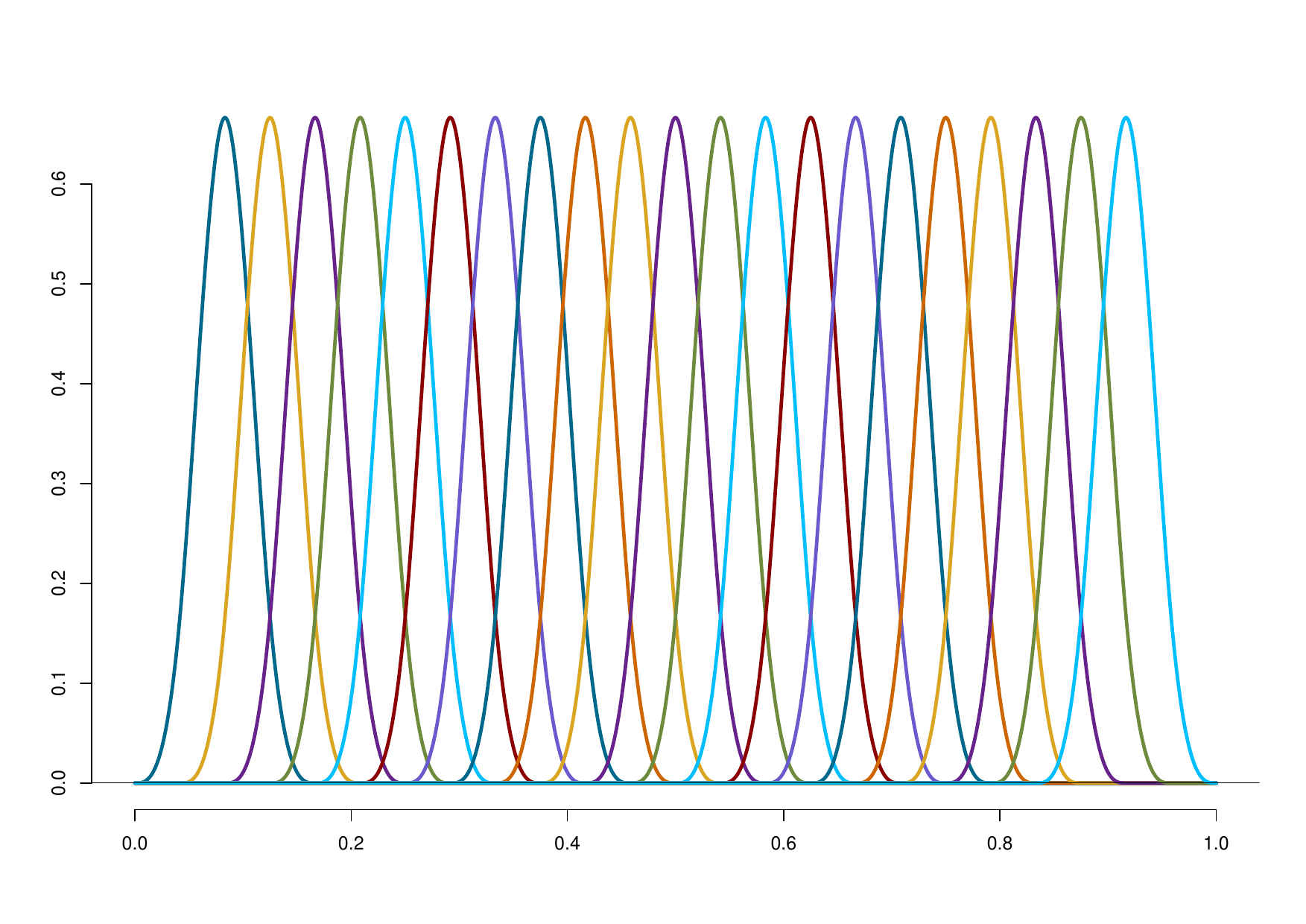}\vspace{-7mm}
  \includegraphics[width=0.6\textwidth]{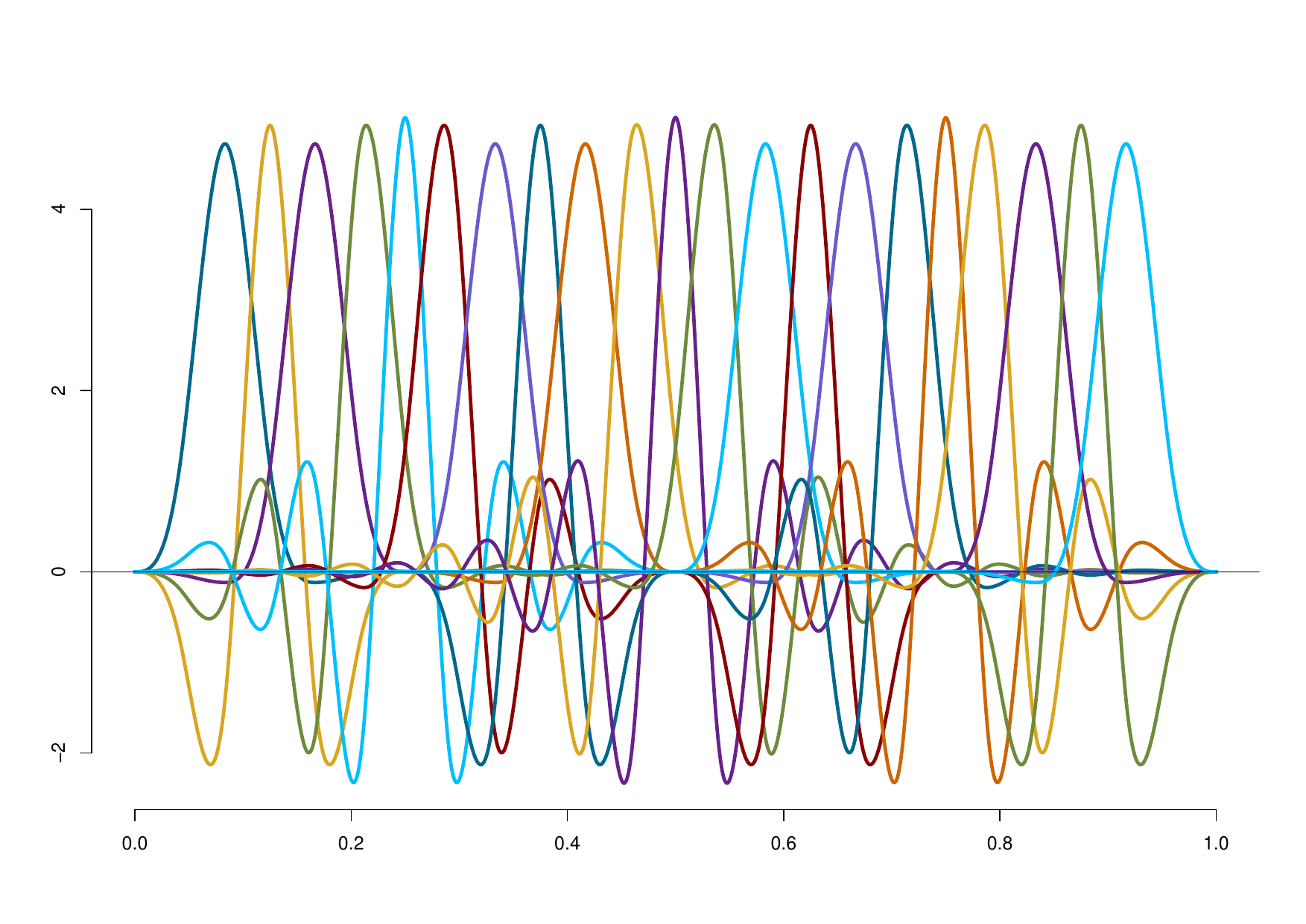}
  }
  \caption{{\it Left:}\, The recursion in definition of the $B$-splines, the first order splines {\it (top)}, the second order spline {\it (middle)},  and the third order spline {\it (bottom)}; 
  {\it Right:}\, The third order $B$-Splines on equidistant grid {\it (top)} and the splinet build of them {\it (bottom)}.}
  \label{fig:RecSp}
\end{figure}

\subsection{$B$-splines}
\label{sec:basics}
This section unifies the notation and provides the most fundamental facts about the $B$-splines. 
The most convenient way to define the $B$-splines on the knots $\boldsymbol \xi =\left(\xi_0,\dots, \xi_{n+1} \right)$, $n=0,1,\dots$ is through the splines with the boundary conditions and using the recurrence on the order. 
Namely, once the $B$-splines of a certain order are defined, then the $B$-splines of the next order are easily expressed by their `less one' order counterparts. 
In the process, the number of the splines decreases by one and the number of the initial conditions (derivatives equal to zero) increases by one at each endpoint. 
We keep the notation $B^{\boldsymbol \xi}_{l,k}$, for the $l$th $B$-spline of the order $k$, $l=0,\dots ,n-k$. 
For the zero order splines, the $B$-spline basis is made of indicator functions 
\begin{equation}
\label{eq:indi}
B^{\boldsymbol \xi}_{l,0}=\mathbb I_{(\xi_{l},\xi_{l+1}]}, ~~~l=0,\dots n,
\end{equation}
 for the total of $n+1$-elements and zero initial conditions. 
Clearly, the space of zero order splines (piecewise constant functions) is $n+1$ dimensional so the so-defined zero order $B$-splines constitute an orthogonal basis.

The following recursion relation leads to the $B$-splines of an arbitrary order $k \le n$.
Suppose now that we have defined $B^{\boldsymbol \xi}_{l,k-1}$, $l=0,\dots,n-k+1$. 
The $B$-splines of order $k$ are defined, for $l=0,\dots,n-k$, by
\begin{equation}
\label{eq:recspline}
B_{l,k}^{\boldsymbol \xi }(x)
=
 \frac{ x- {\xi_{l}}
  }{
  {\xi_{l+k}}-{\xi_{l}} 
  } 
 B_{l,k-1}^{\boldsymbol \xi}(x)+
  \frac{{\xi_{l+1+k}}-x}{
 {\xi_{l+1+k}}-{\xi_{l+1}} 
 } 
 B_{l+1,k-1}^{\boldsymbol \xi}(x).
\end{equation}

It is also important to notice that the above evaluations need to be performed only over the joint support of the splines involved in the recurrence relation.
The recurrent structure of the support is as follows. 
For zero order splines, the support of $B_{l,0}^{\boldsymbol \xi}$ is clearly $[\xi_l,\xi_{l+1}]$, $l=0,\dots, n$. 
If the supports of $B_{l,k-1}^{\boldsymbol \xi}$'s are $[\xi_l,\xi_{l+k}]$, $l=0,\dots,n-k-1$, then the support of $B_{l,k}^{\boldsymbol \xi}$ is the joint support of $B_{l,k-1}^{\boldsymbol \xi}$ and $B_{l+1,k-1}^{\boldsymbol \xi}$, which is $[\xi_l,\xi_{l+1+k}]$,  $l=0,\dots,n-k$.
In order to translate these recursive relations to the relations between the matrices of the derivatives at the knots, we need the following result on the derivatives of the $B$-splines. 
This result follows from \ref{eq:recspline} and  the graphical illustration of the recurrence is presented in Fig.~\ref{fig:RecSp}-{\it (Left)}. 
 
\begin{proposition}
\label{prop:dersp}
For $i= 0,\dots, k$ and $l=0,\dots,n-k+1$:
\begin{multline}
\label{eq:recder}
\frac{d^iB_{l,k}^{\boldsymbol \xi}}{dx^i}(x)
=
\frac{i}{\xi_{l+k}-\xi_l}\frac{d^{i-1}B_{l,k-1}^{\boldsymbol \xi}}{dx^{i-1}}(x)+\frac{ x- {\xi_{l}}
  }{
  {\xi_{l+k}}-{\xi_{l}} 
  } 
\frac{d^{i}B_{l,k-1}^{\boldsymbol \xi}}{dx^{i}}(x)
+\\
+
\frac{i}{\xi_{l+1}-\xi_{l+k+1}}\frac{d^{i-1}B_{l+1,k-1}^{\boldsymbol \xi}}{dx^{i-1}}(x)+\frac{ {\xi_{l+k+1}-x}
  }{
  {\xi_{l+k+1}}-{\xi_{l+1}} 
  } 
\frac{d^{i}B_{l+1,k-1}^{\boldsymbol \xi}}{dx^{i}}(x).
\end{multline}
The support of ${d^iB_{l,k}^{\boldsymbol \xi}}/{dx^i}$ is $[\xi_{l},\xi_{l+k+1}]$ and
if $i=k$, then $d^{i}B_{l+1,k-1}^{\boldsymbol \xi}/dx^{i}\equiv 0$.
\end{proposition}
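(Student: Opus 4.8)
The plan is to derive \eqref{eq:recder} by differentiating the defining recursion \eqref{eq:recspline} exactly $i$ times with the Leibniz rule, and to obtain the support statement and the $i=k$ degeneracy from elementary facts about piecewise polynomials.

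First I would rewrite \eqref{eq:recspline} as $B_{l,k}^{\boldsymbol\xi}=f\cdot B_{l,k-1}^{\boldsymbol\xi}+h\cdot B_{l+1,k-1}^{\boldsymbol\xi}$, where the two coefficients $f(x)=(x-\xi_l)/(\xi_{l+k}-\xi_l)$ and $h(x)=(\xi_{l+k+1}-x)/(\xi_{l+k+1}-\xi_{l+1})$ are \emph{affine} in $x$. Applying the general Leibniz rule $(fg)^{(i)}=\sum_{j=0}^{i}\binom{i}{j}f^{(j)}g^{(i-j)}$ to each product and using that $f^{(j)}=h^{(j)}=0$ for $j\ge 2$, only the terms $j=0$ and $j=1$ remain, with coefficients $\binom{i}{0}=1$ and $\binom{i}{1}=i$; for $i=0$ the $j=1$ term is absent and the identity collapses back to \eqref{eq:recspline}, as it should. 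Since $f'(x)=1/(\xi_{l+k}-\xi_l)$ and $h'(x)=-1/(\xi_{l+k+1}-\xi_{l+1})=1/(\xi_{l+1}-\xi_{l+k+1})$, substitution produces precisely the four terms on the right-hand side of \eqref{eq:recder}. This is the whole of the computation and presents no real difficulty.

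For the support, the inclusion of the support of $d^iB_{l,k}^{\boldsymbol\xi}/dx^i$ in $[\xi_l,\xi_{l+k+1}]$ is immediate, since differentiation cannot enlarge a support and the support of $B_{l,k}^{\boldsymbol\xi}$ itself is $[\xi_l,\xi_{l+k+1}]$ by the recursion for the supports recalled just before the statement. For the opposite inclusion I would invoke the zero boundary conditions carried by $B_{l,k}^{\boldsymbol\xi}$ at the ends of its support: on the first between-knots subinterval the function is a degree-$k$ polynomial vanishing with its first $k-1$ derivatives at $\xi_l$, hence of the form $c\,(x-\xi_l)^k/k!$ with $c=d^kB_{l,k}^{\boldsymbol\xi}/dx^k$, and $c\neq 0$ since $\xi_l$ is genuinely the left endpoint of the support (equivalently, $B$-splines are strictly positive on the interior of their support); thus $d^iB_{l,k}^{\boldsymbol\xi}/dx^i$ equals $c\,(x-\xi_l)^{k-i}/(k-i)!$ on that subinterval and so is nonzero for $x$ just to the right of $\xi_l$, and symmetrically at $\xi_{l+k+1}$. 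Hence the support is exactly $[\xi_l,\xi_{l+k+1}]$ for every $i=0,\dots,k$.

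Lastly, the assertion $d^iB_{l+1,k-1}^{\boldsymbol\xi}/dx^i\equiv 0$ for $i=k$ is just the observation that a spline of order $k-1$ is, on each between-knots interval, a polynomial of degree at most $k-1$, so its $k$th derivative vanishes identically; the same holds for $d^kB_{l,k-1}^{\boldsymbol\xi}/dx^k$. This is why, at $i=k$, the two terms in \eqref{eq:recder} involving the $k$th derivatives of the lower-order $B$-splines disappear, leaving only the terms with their $(k-1)$st derivatives, which are piecewise constant. The only mildly delicate point in the argument is establishing equality (rather than mere containment) of the supports, and even that reduces to the leading-order behaviour at the two endpoints dictated by the zero boundary conditions.
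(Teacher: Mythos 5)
Your Leibniz-rule derivation of (\ref{eq:recder}) from (\ref{eq:recspline}) is exactly the route the paper intends (the paper states the result follows from (\ref{eq:recspline}) and omits the proof), and the computation is correct: the two coefficients are affine, so only the $j=0,1$ terms of the Leibniz sum survive, producing the four terms with the factor $i$; your remark that for $i=k$ the $k$th derivatives of the order-$(k-1)$ splines vanish is also right, with the tacit understanding (consistent with the paper's one-sided/symmetric conventions) that for $i=k-1,k$ the identity is read between knots, since order-$(k-1)$ $B$-splines are only $C^{k-2}$.

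The step that does not go through as written is the \emph{equality} of supports. Containment in $[\xi_l,\xi_{l+k+1}]$ together with nonvanishing of $d^iB^{\boldsymbol\xi}_{l,k}/dx^i$ just to the right of $\xi_l$ and just to the left of $\xi_{l+k+1}$ does not force the support to be the whole interval: a priori the $i$th derivative could vanish identically on an interior knot interval, leaving a disconnected support, and your endpoint argument only settles the first and last pieces (positivity settles $i=0$). What is missing is that every polynomial piece of $B^{\boldsymbol\xi}_{l,k}$ has exact degree $k$, equivalently that $d^kB^{\boldsymbol\xi}_{l,k}/dx^k$ is a nonzero constant on each of the $k+1$ subintervals of the support. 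This is a short induction on $k$ using (\ref{eq:recder}) with $i=k$: on the $j$th subinterval $(\xi_{l+j},\xi_{l+j+1})$ the piecewise constant $d^{k-1}B^{\boldsymbol\xi}_{l,k-1}/dx^{k-1}$ has sign $(-1)^j$ and $d^{k-1}B^{\boldsymbol\xi}_{l+1,k-1}/dx^{k-1}$ has sign $(-1)^{j-1}$, while the prefactors $k/(\xi_{l+k}-\xi_l)>0$ and $k/(\xi_{l+1}-\xi_{l+k+1})<0$ have opposite signs, so the two contributions carry the same sign $(-1)^j$ and cannot cancel. Once each piece has exact degree $k$, every derivative of order $i\le k$ is a nonzero polynomial on every piece, and the support equality you claim follows for all $i$.
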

\begin{proof}[Proof of \ref{prop:dersp}]
To see the above, we notice that \ref{eq:indi} coincides with \ref{eq:recder} in the case of $i=0$ (the undefined term ${d^{-1}B_{l,k-1}^{\boldsymbol \xi}}/{dx^{-1}}$ can be neglected since it is multiplied by $0$, so one can define it, for example, equal to zero). 
For the first derivative, i.e. $i=1$, we have
\begin{multline*}
\frac{dB_{l,k}^{\boldsymbol \xi}}{dx}(x)
=
\frac{1}{\xi_{l+k+1}-\xi_l}B_{l,k-1}^{\boldsymbol \xi}(x)+\frac{ x- {\xi_{l}}
  }{
  {\xi_{l+k+1}}-{\xi_{l}} 
  } 
\frac{dB_{l,k-1}^{\boldsymbol \xi}}{dx}(x)
+\\
+
\frac{1}{\xi_{l+1}-\xi_{l+k+1}}B_{l+1,k-1}^{\boldsymbol \xi}(x)+\frac{ {\xi_{l+k+1}-x}
  }{
  {\xi_{l+k+1}}-{\xi_{l+1}} 
  } 
\frac{dB_{l+1,k-1}^{\boldsymbol \xi}}{dx}(x).
\end{multline*}
We note that if $k=1$, then ${dB_{l+1,k-1}^{\boldsymbol \xi}}/{dx}\equiv 0$.
Then a simple induction argument leads to  \ref{eq:recder}. 
\end{proof}

Let consider the zero$^{\rm th}$ and first order $B$-splines with the boundary conditions  $B_{0,l}^{\boldsymbol \xi}$,  $B_{1,r}^{\boldsymbol \xi}$, where $l=0,\dots, n$, $r=0,\dots, n-1$. 
Then the corresponding $(n+2)\times 1$ and $(n+2)\times 2$ matrices are
\begin{equation*}
\mathbf S^{(0,l)}=\begin{bmatrix} 0 \\ \vdots \\ 0 \\ 1 \makebox[0in]{~~~\hspace{14mm}$\leftarrow l+1$} \\ 0 \\ \vdots\\  0
\end{bmatrix} \hspace{1.2cm} ,~~l\le n, \hspace{3mm} 
\mathbf S^{(1,r)}=\begin{bmatrix} 
0 & 0  \\
 \vdots & \vdots \\
  0 &  \frac{1}{\xi_{r+1}-\xi_{r}}\\ 
  1 & \frac{-1}{\xi_{r+2}-\xi_{r+1}} \makebox[0in]{~~~\hspace{28mm}$\leftarrow r+2$, $r < n$.} \\ 
  0  &  0\\
   \vdots & 
  \vdots \\  0 & 0
\end{bmatrix}
\end{equation*} 

Using the recurrent relation \ref{eq:recder} between the derivatives of $B$-splines, one can generalize the recurrence between matrix representation of the $B$-splines to the arbitrary order of splines. 
Using one sided representation of the splines, we have 
\begin{multline}
\label{eq:matrec}
\mathbf S_{\cdot j}^{(k,l)}=
\frac{1}{\xi_{l+k}-\xi_l}\left({j}\cdot \mathbf S_{\cdot j-1}^{(k-1,l)} - \boldsymbol \Lambda_l  \mathbf S_{\cdot j}^{(k-1,l)} \right)
+\\
+
\frac{1}{\xi_{l+1}-\xi_{l+k+1}}\left({j}\cdot \mathbf S_{\cdot j-1}^{(k-1,l+1)} - \boldsymbol \Lambda_{l+k+1}  \mathbf S_{\cdot j}^{(k-1,l+1)} \right),
\end{multline} 
where $l=0,\dots, n-k$, $j=0,\dots , k$ and the diagonal $(n+1)\times (n+1)$ matrices $\boldsymbol \Lambda_l$'s have $(\xi_0-\xi_l , \dots , \xi_n-\xi_l)$ on the diagonal.
Here we assume that if $j=k$, then $\mathbf S_{\cdot j}^{(k-1,l)}$ is a column made of zeros as the $k^{\rm th}$ derivatives of the $(k-1)^{\rm th}$ order spline is always zero. 

The above algebraic relation is implemented in the package.
To construct the $B$-spline basis of order {\tt k}, one uses 
\begin{verbatim}
so = splinet(xi, k); Bsplines=so$bs,
\end{verbatim}
where {\tt splinet()} generates three different types of basis and organizing them as a list. 
The element in the list labeled $\tt bs$ is always the {\tt Splinets}-object corresponding to the $B$-spline basis.

The string-flag {\tt  type} in a {\tt Splinets}-object indicates if the object is a basis, with {\tt  type="bs"} indicating that it is a $B$-splines basis, i.e. in the above example {\tt Bsplines@type="bs"}. 
An example of the result is shown in Fig.~\ref{fig:RecSp}~{\it (Right-top)}. 
There the equidistant case is presented and it should be noted that the algorithm used accounts for the additional efficiency that this case yields in the computations. 


\begin{figure}[t!]
\begin{center}
  \includegraphics[width=0.8\textwidth]{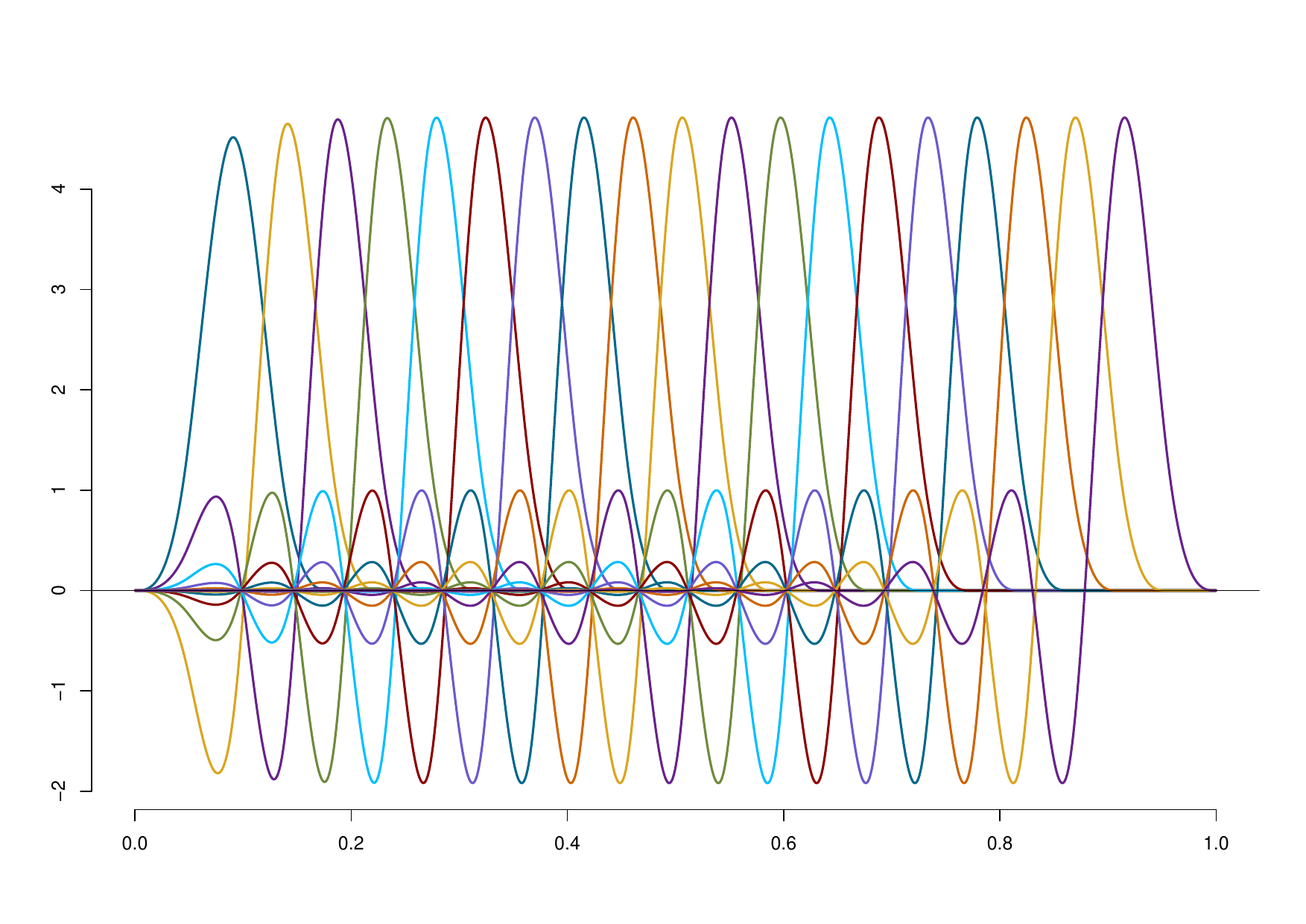}\vspace{-8mm}
  
      \includegraphics[width=0.8\textwidth]{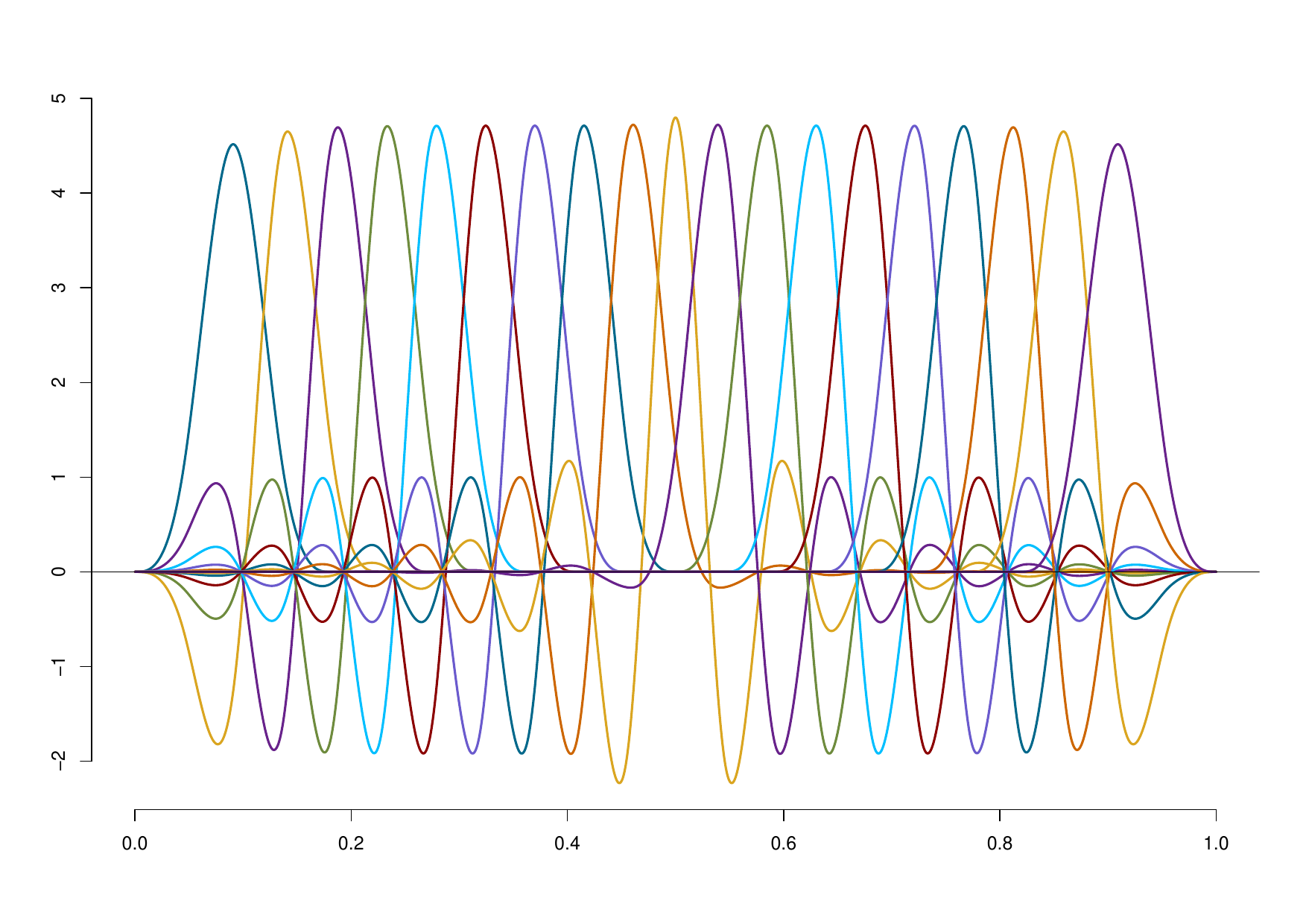}\vspace{-8mm}
  
\includegraphics[width=0.8\textwidth]{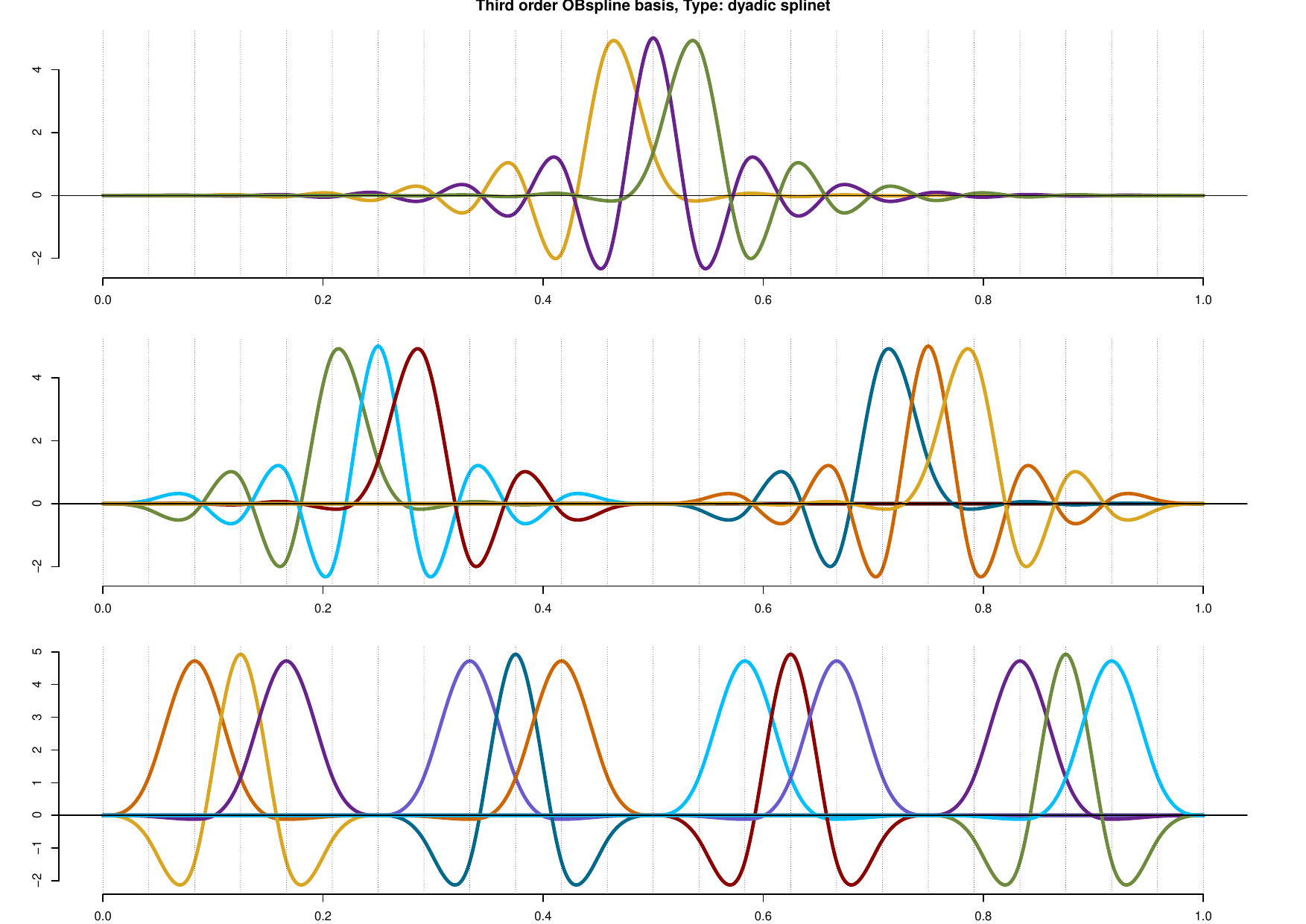} 
\end{center}   
    \caption{{\it Top-Middle:}\, The one-sided  {\it (top)} and symmetrized Gram-Schmidt orthogonalization {\it (middle)} for the $B$-splines from Fig.~\ref{fig:RecSp}; 
  {\it Bottom:}\, The splinet on a dyadic net.}
  \label{fig:Orth}
\end{figure}

 \subsection{Orthogonal spline bases}
The orthonormalized bases implemented in this package are obtained by one of the following three orthogonalization procedures applied to $B$-splines. 
The first one is simply the Gram-Schmidt orthogonalization performed on the $B$-splines ordered by their locations, the second one is a symmetric (with respect to the knot locations) version of the Gram-Schmidt, and, finally, the dyadic orthogonalization into a {\em splinet} which is our preferred method. All the methods have been discussed in detail in \cite{Liu2019SplinetsE}. 
In the object representation of collections of splines, i.e. in the {\tt Splinets}-class, the field {\tt type} specifies which of the orthonormal basis one deals with. 
The function  {\tt splinet()} is generating the proper basis with the default form 
\begin{verbatim}
so=splinet(xi); Bsplines=so$bs; Splinet=so$spnt
\end{verbatim}
and returning a list of two {\tt Splinets} objects, {\tt so\$bs} and {\tt so\$spnt} build over the ordered knots {\tt xi}.
The first object represents the basis made of the standard cubic $B$-splines and thus not orthogonal. 
The second one represents the recommended orthonormal basis referred to as a cubic splinet. 
This is also represented in the field {\tt type} which can be either {\tt dspnt} or {\tt spnt}, i.e. depending if the system is fully dyadic or not, respectively. 
The explanation of the dyadic structure of a splinet is given below. 
In Fig.~\ref{fig:RecSp}~{\it (Left-bottom)}, the splinet obtained from the equally spaced $B$-splines is presented. 

It is important to point out that the main computational engines of the orthogonalization processes are implemented to work on a generic gramian matrix $\mathbf H$ and returning $\mathbf P$ such that 
\begin{equation}
\label{eq:diag}
\mathbf I = \mathbf P^\top \mathbf H \mathbf P.
\end{equation}
These algorithms do not explicitly reference splines and can be applied in other contexts in which such diagonalization may deem important. 
They are available inside the package as the auxiliary functions but there are not explicitly referenced in the documentation. \vspace{1mm}

\noindent{\it One-sided orthogonalization.}
The function {\tt splinet} can also generate other than the splinet types of orthogonal spline bases.
One of them is  the one-sided $OS$-splines, {\tt type="gsob"},  which are obtained by the classical Gram-Schmidt orthogonalization. 
If the $B$-splines have been obtained earlier in {\tt so\$bs}, then the following command will produce the desired outcome
\begin{verbatim}
so=splinet(xi, Bsplines=so$bs, type='gsob'); GramSchmidt=so$os
\end{verbatim}

The one-sided orthogonalization has two flaws, firstly, it has a fairly large the total support set and, secondly, it is asymmetric even for the equally spaced case, as seen in Fig.~\ref{fig:Orth}~{\it (Top)}. 
The matrix $\mathbf P$ in (\ref{eq:diag}) has the triangular form and thus 50\% of the entries are non-zero.  

\vspace{1mm}

\noindent{\it Symmetric approach.}
The symmetric orthogonalization, {\tt type="twob"},  alleviates some of the flaws of the one-sided case. 
It utilizes a symmetrized version of the Gram-Schmidt orthogonalization and originally was proposed in \cite{Redd}. 
The two-sided orthogonalization addresses both the flaws of the one-sided basis: it has a smaller total support set, and it is symmetric with respect to the center, as seen in Fig.~\ref{fig:Orth}~{\it (Middle)}. 
The matrix $\mathbf P$ in (\ref{eq:diag}) has two equally sized rectangular blocks and thus its more sparse than the original by having only 25\% of non-zero terms.  
\vspace{1mm}

While the symmetrized orthogonalizaton can be viewed as an improvement over the one-sided orthogonalization, one can do even better as far as the total support size and the sparsity of the matrix $\mathbf P$ are concerned. 
Another orthogonalization procedure which utilizes the previous two approaches in a  `telescopic/dyadic' manner is far more optimal. 
This is the orthonormalization of  the $B$-splines that is promoted in the package and in this work. 
The next subsection is devoted to description of the functionality of the splinets in their {\tt Splinets}-implementation.

\subsection{ Splinets -- optimal orthogonalization.}
The dyadic algorithm for orthogonalization of the $B$-splines produces $OB$-splines that are residing only over a slightly bigger support (factor of $\log n$, where $n$ is the number of knots) than the total support size of the original $B$-splines. 
The obtained splines are symmetric with respect to the center but extend the symmetry into a dyadic structure based on the support sets. 
Finally, the sparsity of the matrix $P$ is dramatically improved over the two previous orthogonalization methods as seen in Fig.~\ref{fig:Sparsity}~{\it (Right)}, where the sel-similar fractal structure of the non-zero matrix entries  is presented. 
For a given set of knots in {\tt xi}, the splinet is evaluated simply by {\tt so = splinet(xi)} which returns the  {\tt Splinets}-object  {\tt so\$os} containing the splinet. 
The graphical presentation of the splinet on the dyadic grid as shown in Fig.~\ref{fig:Orth}~{\it Bottom} can be simply obtained by {\tt plot(so\$os)}. 
If one prefers to have a graph without dyadic structure it can be obtained by {\tt plot(so\$os,type="simple")} as seen in Fig.~\ref{fig:RecSp}~{\it (Right-Bottom)}.

The algorithm is most naturally described for the dyadic structure of knots that assumes that for some positive integer $N$ we have $k2^N-1 = n$, where $k$ is the order of the splines and $n$ is the number of the internal knots, i.e. we do not count the endpoints.
The resulting $OB$-splines are then located on a dyadic net with the $N$ levels featuring increasing support set sizes and  a decreasing number of basis element.  
The $OB$-splines are also grouped into $k$-tuples of the neighboring splines. 
All these features are best seen in Fig.~\ref{fig:Orth}, {\it Bottom}. 
We observe that each of the $OB$-spline in the splinet inherits its location from the corresponding $B$-spline.
The location is naturally represented by the middle knot in their support if the number of the knots in the support is odd, or by the average of the two middle knots if that number is even. 
This locations can be used to present any spline basis on a dyadic-net graph. 
In fact, the algorithms are implemented in such a way that any set of knots leads to a splinet that can be represented on a dyadic net which may be incomplete if the number of knots does not satisfy the dyadic case restriction. 
 Fig.~\ref{fig:Sparsity}~{\it (Center/Bottom-Left)} shows an example of such situation. 
 
As mentioned, all fundamental algorithms carrying computational burden of orthonormalization are operating on matrices.
Their computational efficiency can be improved if these generic algorithms are implemented in a lower-level language, such as the C-language, and compiled to the machine code.
This approach has not been yet taken but it will be considered in future versions of the package. 
 
 The special case of equally spaced knots correspond to  $\mathbf H$ being a Toeplitz matrix.
 In this case certain parts of the algorithms can be significantly accelerated  since the evaluation has to be performed only for one $k$-tuple instead of $2^{N-l}-1$ of them at each support level $l$ in the dyadic net. 
 In the package, this has been implemented in the fully dyadic equidistance knots case but not for the non-dyadic case.
  To utilize this efficiency it may be advantageous for a large number of equally spaced knots to choose them so that their number satisfies the fully dyadic condition because the gained efficiency may compensate additional computations resulting from the larger number of  knots. 
 Alternatively, one can decompose a non-dyadic structure into smaller dyadic structures and perform the orthogonalization on them. 
 The latter approach is illustrated in the next section to demonstrate functionality of the {\tt Splinets}-package. 

\begin{figure}[t!]
 \parbox[b]{0.5\textwidth}{
  \includegraphics[width=0.45\textwidth]{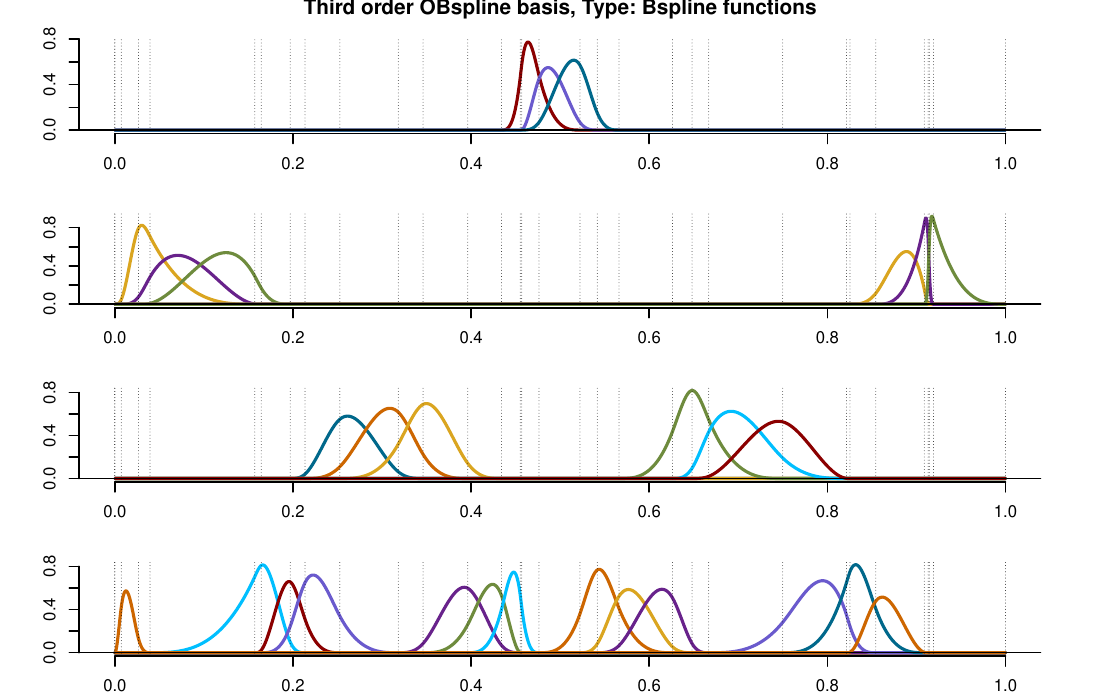} 
 \includegraphics[width=0.45\textwidth]{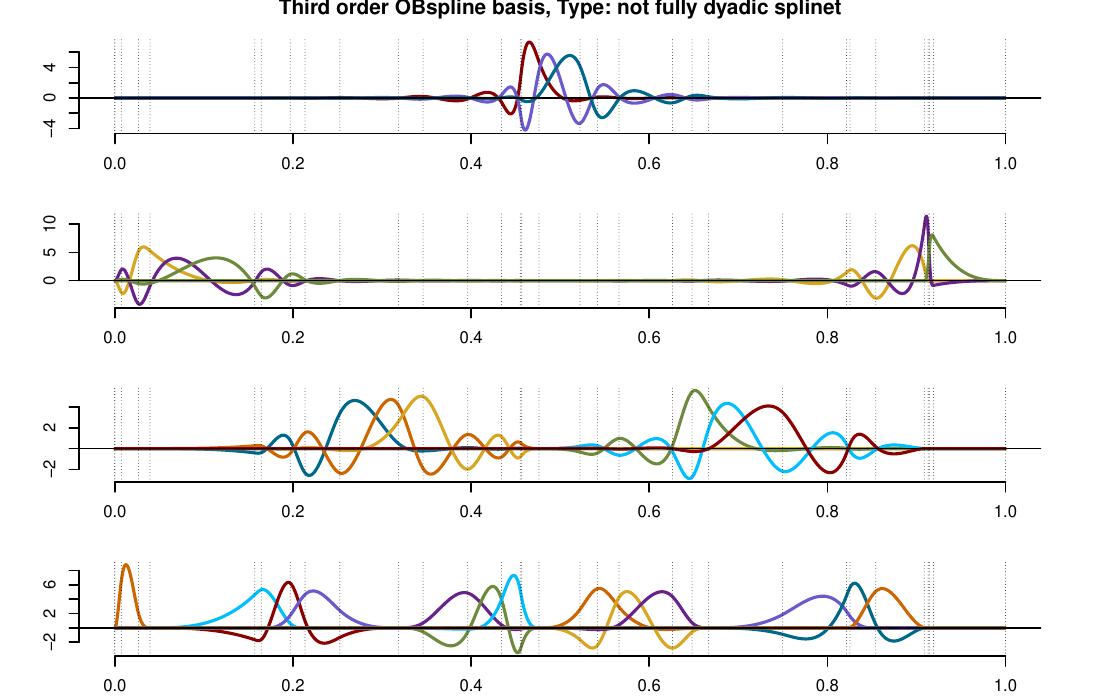} 
 }
\hspace{-7mm}  \includegraphics[width=0.55\textwidth]{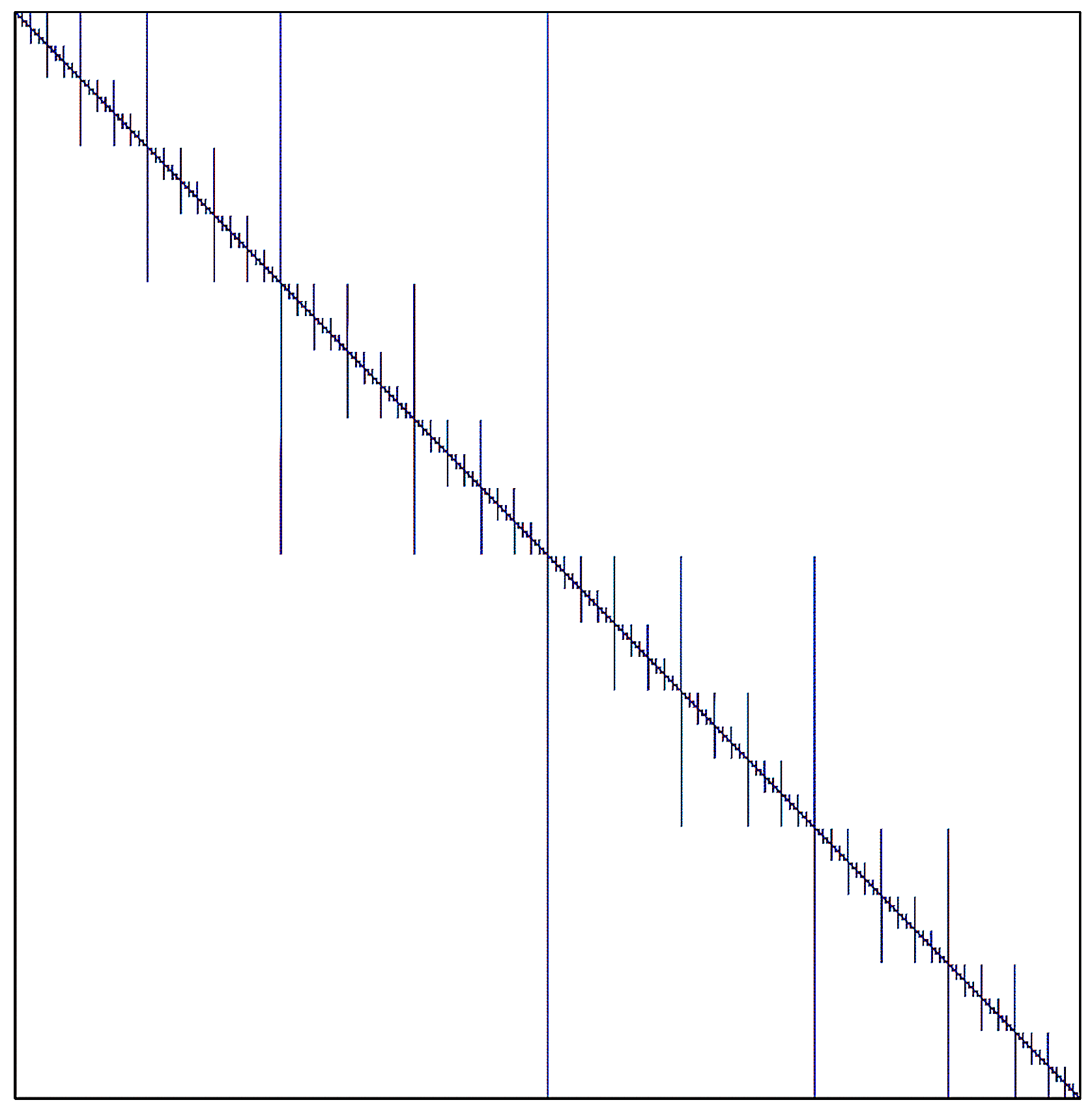} 
     
    \caption{ 
    {\it Left:}\, A not-fully dyadic and not-equaly spaced case. Both the $B$-splines (top) and the  splinet (bottom) are presented on a dyadic net. 
    {\it Right:}\, A  self-similar structure of the non-zero entries of $\mathbf P$ corresponding to the orthonormalization of the $B$-splines leading to a splinet. 
    The case of the third order $B$-splines with equally spaced knots leading to  a 1533 dimensional space.}
  \label{fig:Sparsity}
\end{figure}

\section{Orthogonal projection to a space of splines}
The spaces of splines are finite dimensional spaces of the Hilbert space of all square integrable functions and thus any such a function can be projected in the orthogonal fashion into the linear space of splines spanned over a particular set of knots. 
Any functional data analysis typically begins with projecting the data to a functional finite dimensional subspace -- the projection becomes a fundamental operation for carrying out statistical data analysis. 
One can also perform a projection to smooth the data and there a plethora of methods to target this goal. 
In the package, we have implemented orthogonal projection in the function {\tt project()}. 
Since actual functional data can be represented in a variety of ways, the projection itself depends on the input format and, more specifically, the way the inner product of the input with a spline is defined.  
We consider two type of inputs: {\tt Splinets}-objects and columns of pairs representing arguments and values of a discretization of functional data.

Independently of the input, the output of the function {\tt project()} is a list, say {\tt onsp}, made of the three components:
\begin{description}
\item{\tt onsp\$coeff} -- the matrix of coefficients of the decomposition in the selected basis,
\item{\tt onsp\$basis} -- the {\tt Splinets}-object representing the selected basis,
\item{\tt onsp\$sp} -- the {\tt Splinets}-object representing the projection of the input in the projection spline space.

\end{description}
Additional information, such as the knots, the order, the type of the basis, can be retrieved easily from the second component. 
Many of the algebraic operations on the splines are more conveniently performed on the matrix of coefficients of their spline basis representations, than on the {\tt Splinets}-objects themselves. 
The coefficient matrix {\tt onsp\$coeff} can be utilized for such computations and the corresponding linear combination of {\tt onsp\$basis} can be used whenever the functional form of the result is needed.

\noindent{\it Basis decomposition.} The simplest projection obtained through {\tt project()} is not, strictly speaking, a projection but rather decomposition of  a {\tt Splinets}-object to coefficients in the given basis. 
If {\tt sp} is a {\tt Splinets}-object, then 
\begin{verbatim}
bdsp=project(sp); bdsp2=project(sp,type='bs'); 
bdsp3=project(sp,type='gsob'); bdsp4=project(sp,type='twob'); 
\end{verbatim}
have as its main output the matrices of coefficients $a_{ji}$,  such that the $j^{\rm th}$ input-spline {\tt sp} has the form
$$
\sum_{i=1}^{n-k+1} a_{ji} OB_{i},
$$
where $j$ indexes the input splines, $n$ is the number of the internal knots, $k$ is the smoothness order, and $OB_{i}$ is the selected basis of splines controlled by the input {\tt type} (the default is the splinet built on the same knots as the input spline). 
The possible choices of the bases are the splinet, the one- and two-sided orthonormal bases, or the $B$-splines, all  built on the same knots as the input spline.

\noindent{\it Projecting splines.} 
The projection of {\tt Splinets}-objects over a given set of knots to the space of splines over a different set of knots is obtained as the orthogonal projection.
Namely, if $S$ is  the input {\tt Splinets}-object then the output is denoted as $\mathbf P S$ where $\mathbf P$ is the orthogonal projection to the space spanned by the spline space build by the second set of knots
\begin{equation}\label{projection}
\mathbf P S =\sum_{i=1}^{n-k+1} a_{ji} OB_{i}, \,\, (S-\mathbf P S) \bot \,\mathbf P S.
\end{equation}
This is an extension of the previous case since the output functions may belong to a different space than the input functions.  
The result is obtained by embedding both the input splines and the projection space to the space of splines that contains both and evaluating the inner products between functions in this space. 
The space of splines that contains both is build over the union of the two set of knots and uses the package function {\tt refine()} for that purpose.
 The following code will lead to the result if {\tt knots} are different from {\tt sp@knots}
\begin{verbatim}
bdsp=project(sp,knots); bdsp2=project(sp,knots,type='bs'); 
bdsp3=project(sp,knots,type='gsob'); bdsp4=project(sp,knots,type='twob'); 
\end{verbatim}
The results are represented in the spline space build over {\tt knots} and thus the {\tt Splinets}-object in the output representing the projection spline satisfies {\tt bdsp\$bs@knots=knots}. 

\noindent{\it Projecting discretized functional data.}
The function {\tt project()} works also when the input is a discretization of some continuous argument functional data.
In this case, the input is a matrix  having in the first column a set of arguments and in the remaining ones the corresponding values of a sample of functional data. 
The input data are considered to be a piecewise constant functions with the value over two subsequent arguments equal to the value in the input corresponding
to the left-hand-side argument.
In this way, the discretized data can be viewed as functions and their inner products with any spline are well defined. 
In the package, this specific inner product can be obtained by utilizing the indefinite integral implemented in the function {\tt integra()}. 
Consequently, the projection $\mathbf P S$ of the functional data in $S$ satisfies (\ref{projection}), if $S$ represents the data as a  piece-wise constant function . 
In Figure~\ref{fig:Projectiion}, we illustrate the output from {\tt project()} for a sample of ten functional data.

\begin{figure}[t!]
\begin{center}
\raisebox{-4mm}{  \includegraphics[width=0.4\textwidth]{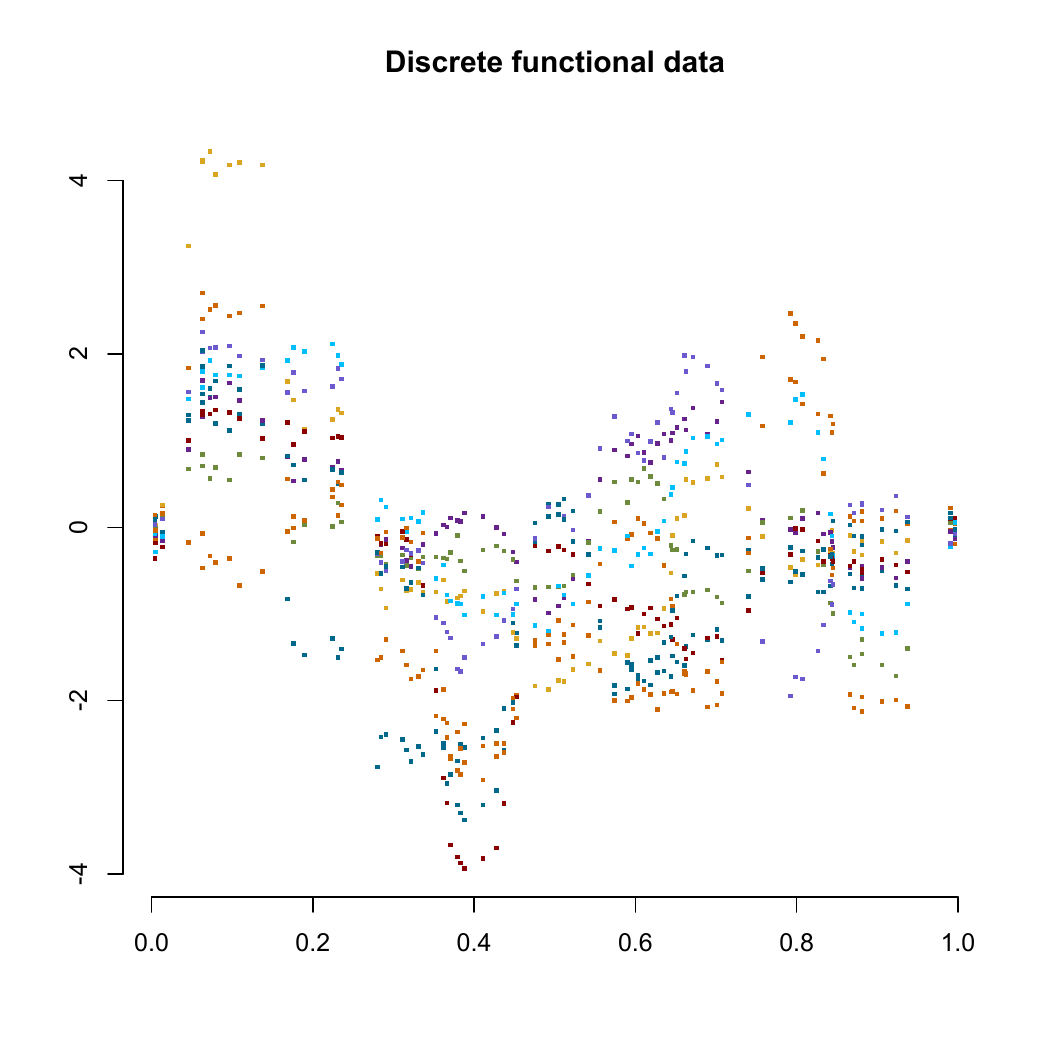} }
 \includegraphics[width=0.55\textwidth,height=0.35\textwidth]{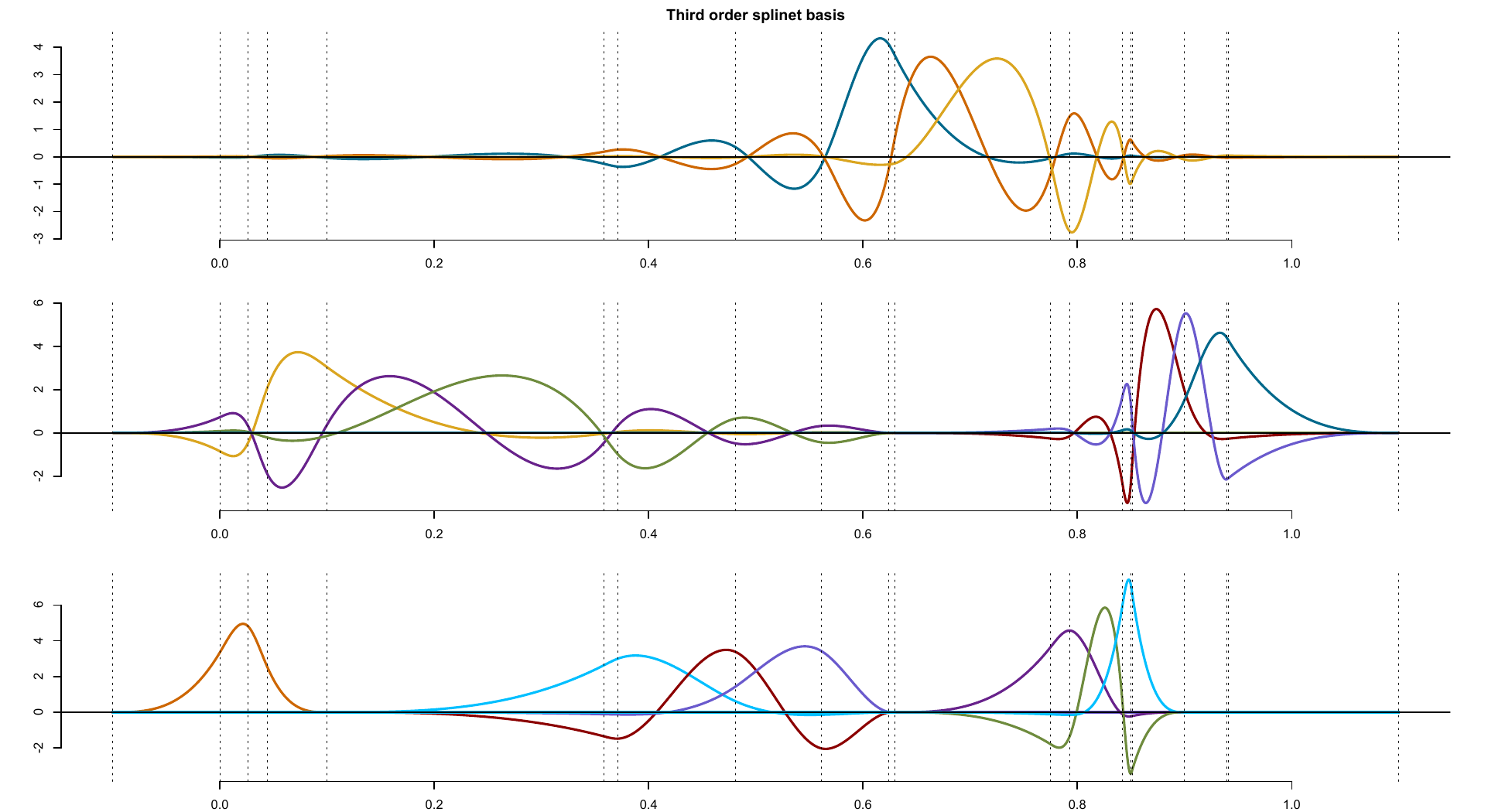} \\
\raisebox{-4mm}{     \includegraphics[width=0.4\textwidth]{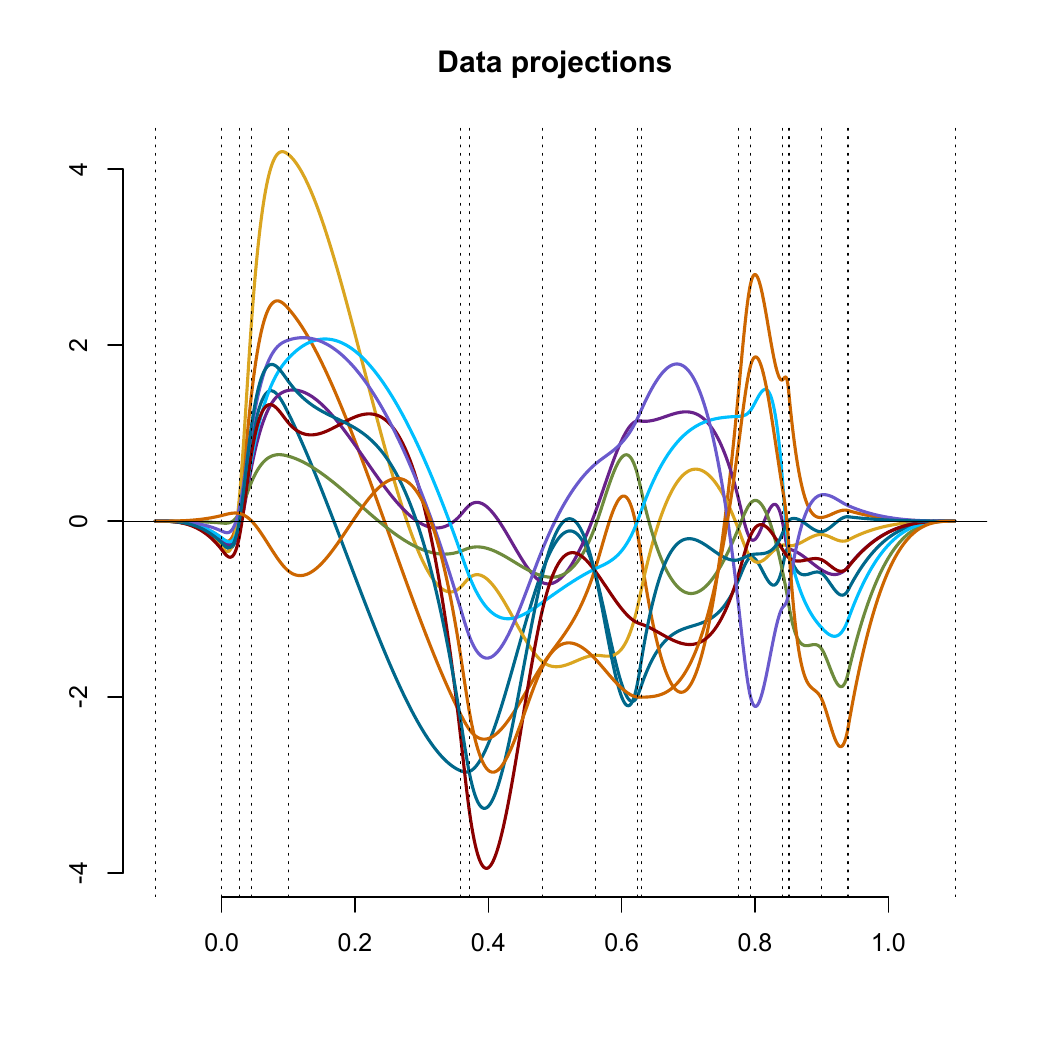} }
      \includegraphics[width=0.55\textwidth,height=0.35\textwidth]{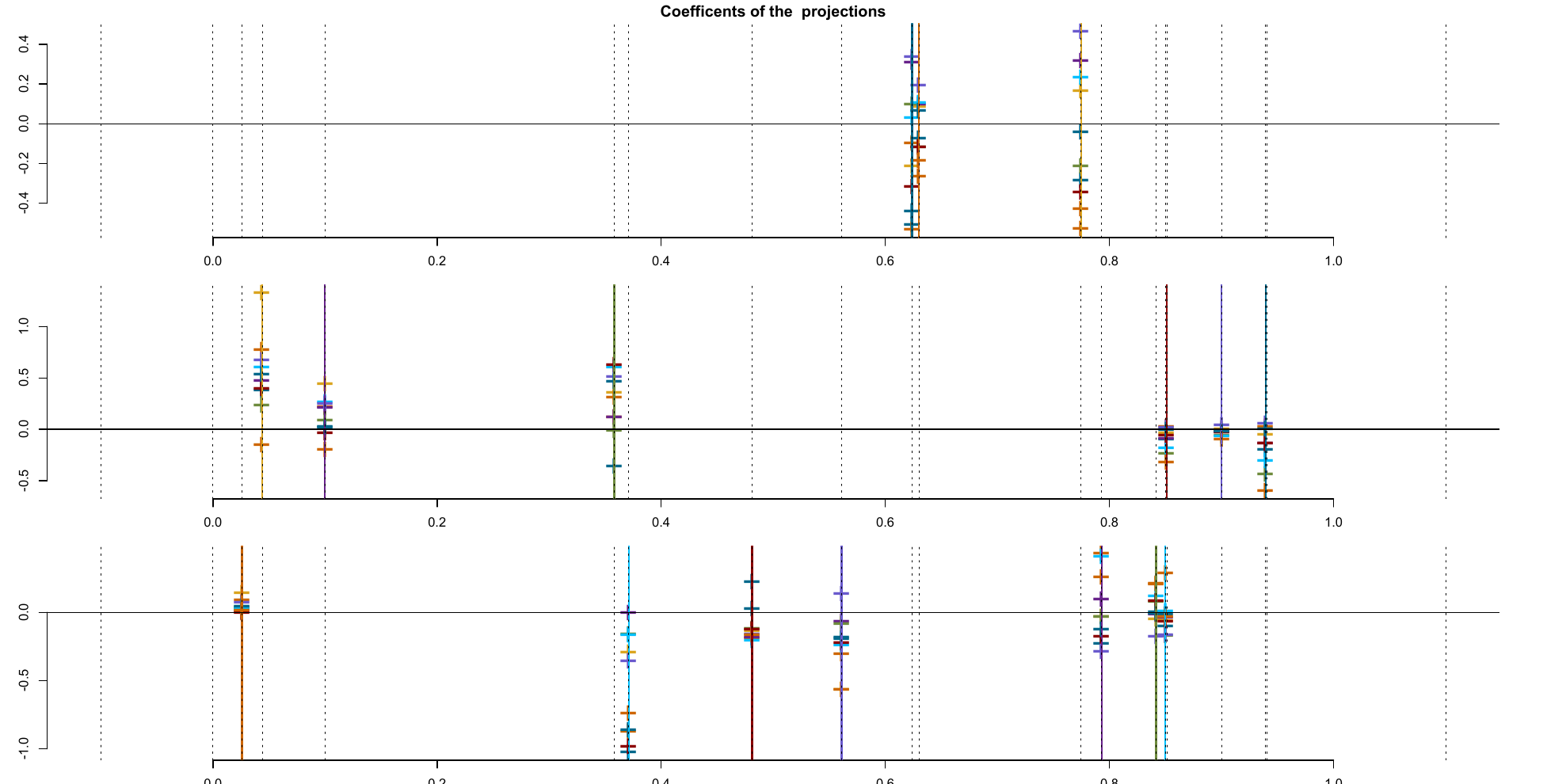} \vspace{-0.75cm}

 \end{center} 
    \caption{ 
    {\it Left-Top:}\, The original data $S_j$ $j=1,\dots, 10$. {\it Right-Top:}\,  The splinet $\{OB_i\}_{i=1}^{n-k+1}$ of the space to which the data are projected.  
    {\it Left-Bottom}\, The projection splines $\mathbf P S_j$. 
    {\it Right-Bottom}\, The projection coefficients $(a_{ji})_{j,i=1}^{10,n-k+1}$ for the data corresponding to the basis elements in the splinet. 
    The vertical lines represent the knot locations in the projection space. 
    }
    \label{fig:Projectiion}
\end{figure}

\section{Example -- an alternative orthonormal basis in the non-dyadic case}
When the number of internal and equally spaced knots satisfies the relation  $n=k2^N-1$ for some natural $N$, the splinet build upon them is computed faster than for irregularly spaced knots.
The reason is that for each support level in the splinet the matrices of the derivatives for the $k$-tuples are identical and need to be computed only once per level, compare Fig.~\ref{fig:Orth}~{\it Bottom} and Fig.~\ref{fig:Sparsity}~{\it Left-Bottom}.
However, for the non-dyadic case the symmetries break and the efficiency is not implemented in the current algorithm since it is difficult to track where these efficiencies occur. 
One can use an alternative approach to build bases in the non-dyadic case by extracting the maximal dyadic sub-splinets and orthogonalizing the remaining $B$-splines with respect to them. 
In our first example of utilizing the package, we illustrate how such an orthogonalization can be performed on a non-dyadic $B$-splines of  the first order ($k=1$).
We consider the equally spaced knots case although the identical approach applies for arbitrarily spaced knots, except for these the above mentioned efficiency in building the dyadic sub-splinets does not take place.

We take $n=43$ equally spaced internal knots, which means that this is a not fully dyadic case: $2^5-1<43<2^6-1$.
The following code builds the normalized $B$-splines (and also the splinet implemented in the package) 
\begin{verbatim}
k=1; n = 43; xi = seq(0, 1, length.out = n+2)
spts=splinet(xi,order = k, norm = T); Bsp=spts$bs; OBsp0=spts$os
\end{verbatim}
The largest dyadic sub-splinet that can be build from a subset of subsequent knots is for $n_1=2^5-1=31$ internal knots. 
Removing these knots plus the left-hand-side (LHS) end from the original knots leaves us with $43-32=11$ internal knots. 
We can repeat the process on the remaining knots with the dyadic sub-splinet build over $n_2=2^3-1=7$ internal knots leaving us with $11-7-1=3$ internal knots.
Over these final three knots we can built the dyadic splinet since $n_3= 2^2 - 1=3$.
This three splinets are adjacent and cover the whole range of the knots.
We group  all 45 knots as follows to build the dyadic splinet over each group
$$
(1:33),\,(33:42),\, (42:45).
$$
This is done in the following code in which the matrices of derivatives are only evaluated for the largest sub-splinet and the remaining 
subsplinets inherit the respective matrices.

\begin{lstlisting}
n1=31; n2=7; n3=3 
Bsp1=subsample(Bsp,1:n1); Bsp1@knots=xi[1:(n1+2)]; Bsp1@type='bs'  

sp1=splinet(Bsplines = Bsp1) 

OBsp=Bsp

OBsp@der[1:n1]=sp1$os@der; OBsp@supp[1:n1]=sp1$os@supp; OBsp@type='sp'

OBsp@der[(n1+2):(n1+n2+1)]=OBsp@der[1:n2] 
for(j in 1:n2 ) OBsp@supp[[(n1+1+j)]]=n1+1+OBsp@supp[[j]]

OBsp@der[(n1+n2+3):(n1+n2+n3+2)]=OBsp@der[1:n3]
for(j in 1:n3) OBsp@supp[[(n1+n2+2+j)]]=n1+n2+2+OBsp@supp[[j]]
\end{lstlisting}

This leaves us with only two $B$-splines labeled by 32 and 40, each overlapping two adjacent knot groups, that have not been yet accounted in the orthogonalization.
They are orgthogonalized with respect to the dyadic subsplinets (from the right to the left). 
We present the code to demonstrate the features of the package.
\begin{lstlisting}[firstnumber=15]
nt=seq2net(n1,k);  nt2=seq2net(n2,k); nt3=seq2net(n3,k) 

for(i in 1:length(nt2))
for(j in 1:length(nt2[[i]]))nt2[[i]][[j]]=nt2[[i]][[j]]+n1+1

for(i in 1:length(nt3))
for(j in 1:length(nt3[[i]]))nt3[[i]][[j]]=nt3[[i]][[j]]+n1+n2+2

S1=subsample(OBsp,n1+1); S2=subsample(OBsp,n1+n2+2)

S1L=subsample(OBsp,nt[[1]][[1]]); S1R=subsample(OBsp,nt2[[length(nt2)]][[1]]) 
for(i in 2:length(nt)) S1L=gather(S1L,subsample(OBsp,nt[[i]][[length(nt[[i]])]]))
for(i in (length(nt2)-1):1) S1R=gather(S1R,subsample(OBsp,nt2[[i]][[1]]))

S2R=subsample(OBsp,c(nt3[[2]][[1]],nt3[[1]][[1]]))
S2R=subsample(OBsp,c(nt3[[2]][[1]],nt3[[1]][[1]]))
indL=vector(); for(i in 1:length(nt2))indL=c(indL,nt2[[i]][[length(nt2[[i]])]])
S2L=subsample(OBsp,indL)

S2LR=gather(S2L,S2R)

A=matrix(c(1,-InnPr[3:5],-InnPr[5:4]),ncol=6)
sc=1/sqrt(1-A[1,2:dim(A)[2]]%*%t(A[1,2:dim(A)[2]])); A=sc[1,1]*A

OS2=lincomb(gather(S2,S2LR),A)
OBsp@der[n1+n2+2]=OS2@der; OBsp@supp[n1+n2+2]=OS2@supp

S1LR=gather(S1L,S1R); S1LR=gather(S1LR,OS2)

InP=gramian(S1,Sp2 = OS2)

A=matrix(c(1,-InnPr,-InnPr[5:3],InP),ncol=10)
sc=1/sqrt(1-A[1,2:dim(A)[2]]%*%t(A[1,2:dim(A)[2]])); A=sc[1,1]*A

OS1=lincomb(gather(S1,S1LR),A)
OBsp@der[n1+1]=OS1@der; OBsp@supp[n1+1]=OS1@supp

\end{lstlisting}

We note that the support level for each of these two splines is one above the largest level of the dyadic subsplinets that it overlaps as shown in Figure~\ref{fig:Non-dyadic}~{\it (Left-Top)}. 
The above code yields two different orthogonalization of the $B$-splines, the implemented in the package given in {\tt OBsp0} and the alternative one given in {\tt OBsp}. 
The two bases differ as seen in Figure~\ref{fig:Non-dyadic}.
The presented here alternative way of orthogonalizing the $B$-splines in a non-fully dyadic case will be considered in the full generality in the future releases of the package.

\begin{figure}[t!]

    \includegraphics[width=0.495\textwidth, height=0.42\textwidth]{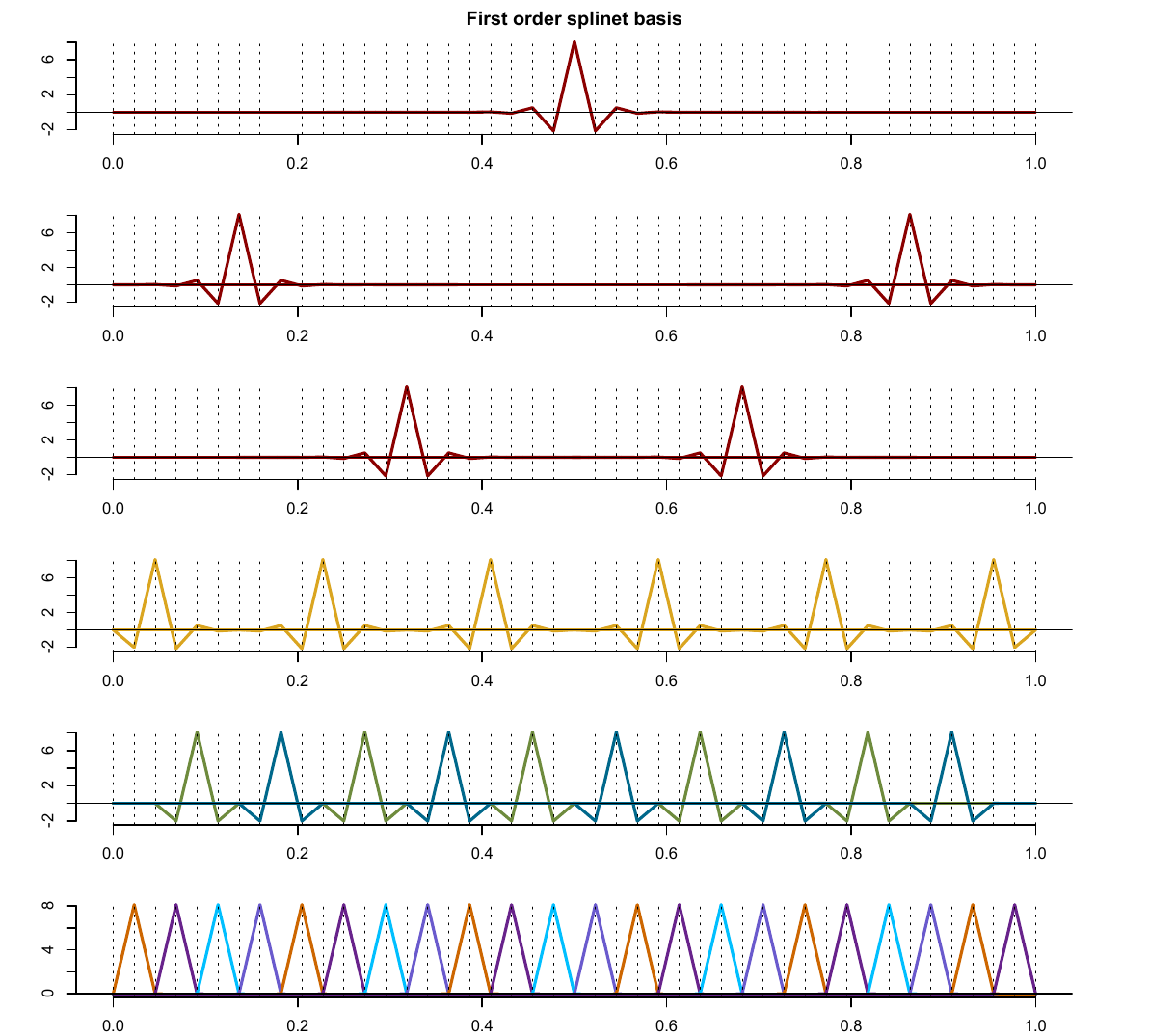}
   \includegraphics[width=0.495\textwidth, height=0.42\textwidth]{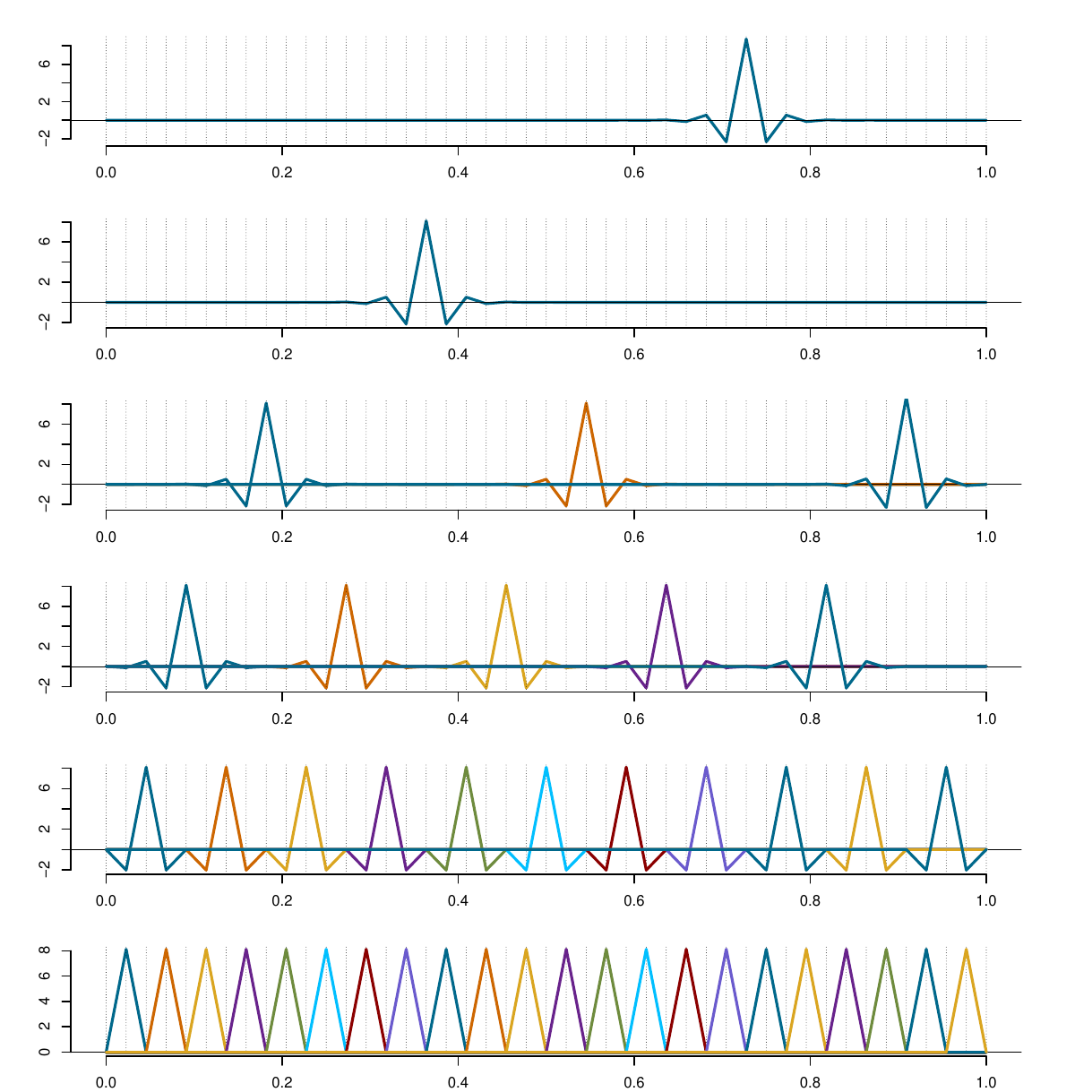} 

    \caption{{\it Top:}\, The splinets for the non-dyadic first order case: the implementation in the package {(\it left)} and the one build in the example {(\it right}). 
   }
  \label{fig:Non-dyadic}
\end{figure}

\section{Example -- simple functional data analysis with {\tt Splinets}}
We conclude this work with an example of functional data analysis utilizing the tools implemented in the package. 
The functional data are simulations from a rather complicated stochastic model, the so-called Slepian model that was developed in \cite{Podgorski:2015aa}. 
A computationally intense model involving a high-dimensional Gibbs sampler was developed for responses of a truck to a non-Gaussian loads at a transient (a high impact event on the road profile).
A functional data sample of size 1000 is presented in Figure~\ref{fig:TruckTire}~{\it (Top-left)}. 
The data are discretized over 4095 equally spaced arguments.
Here, instead of the complicated model, we want to fit a linear functional data model based on the Karhunen-Lo$\rm \grave{e}$ve decomposition of the process that is residual to the mean function.

\begin{figure}[t!]

    \includegraphics[width=0.495\textwidth, height=0.42\textwidth]{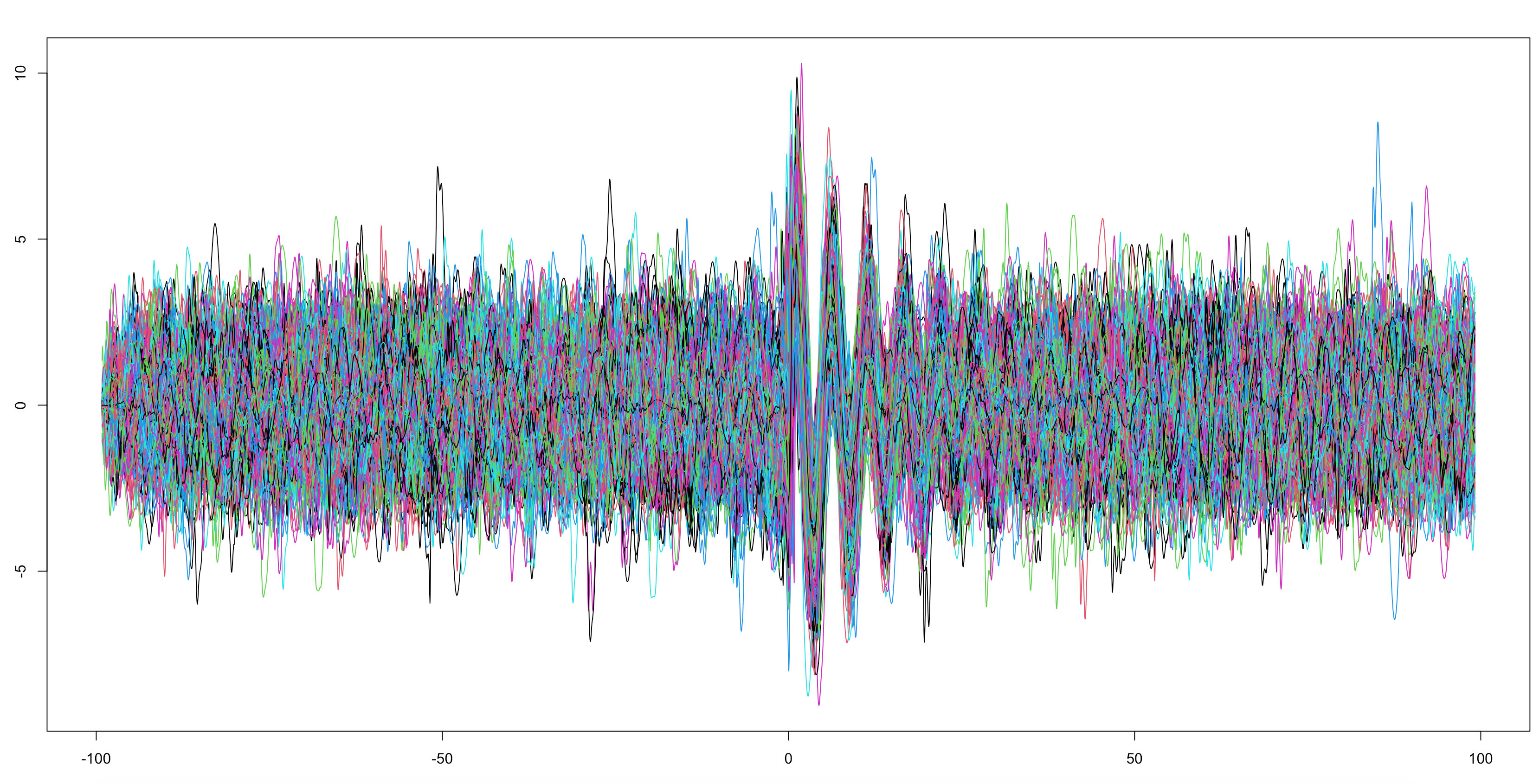}  \includegraphics[width=0.5\textwidth]{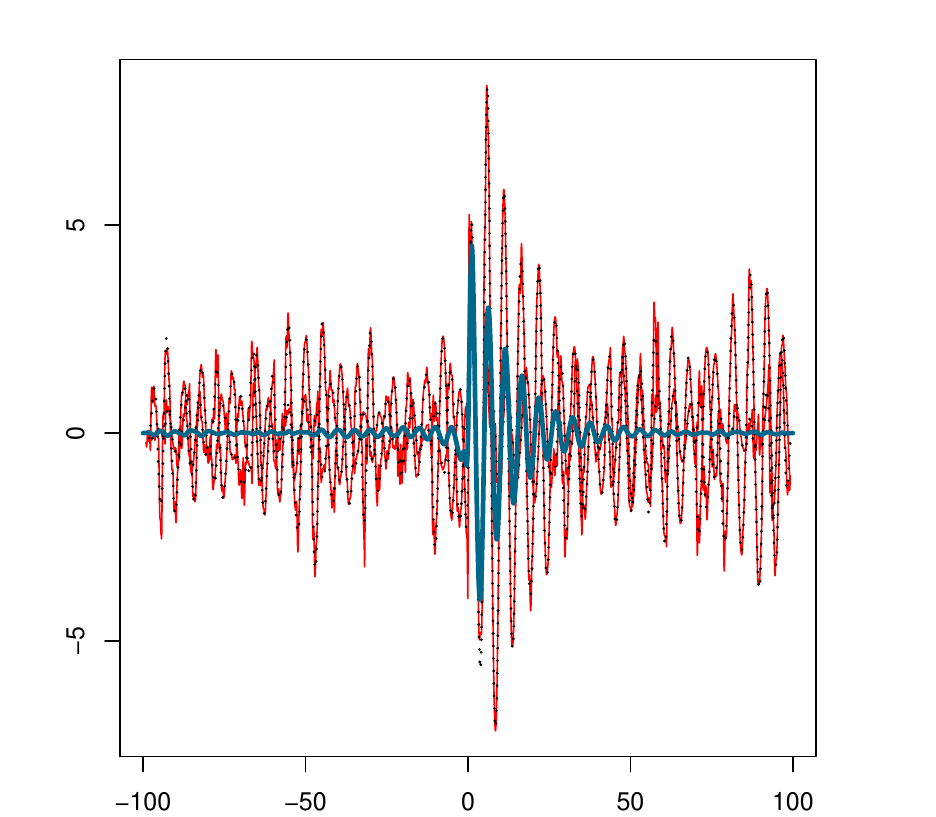}\vspace{-3mm}\\
     \raisebox{6mm}{\includegraphics[width=0.5\textwidth]{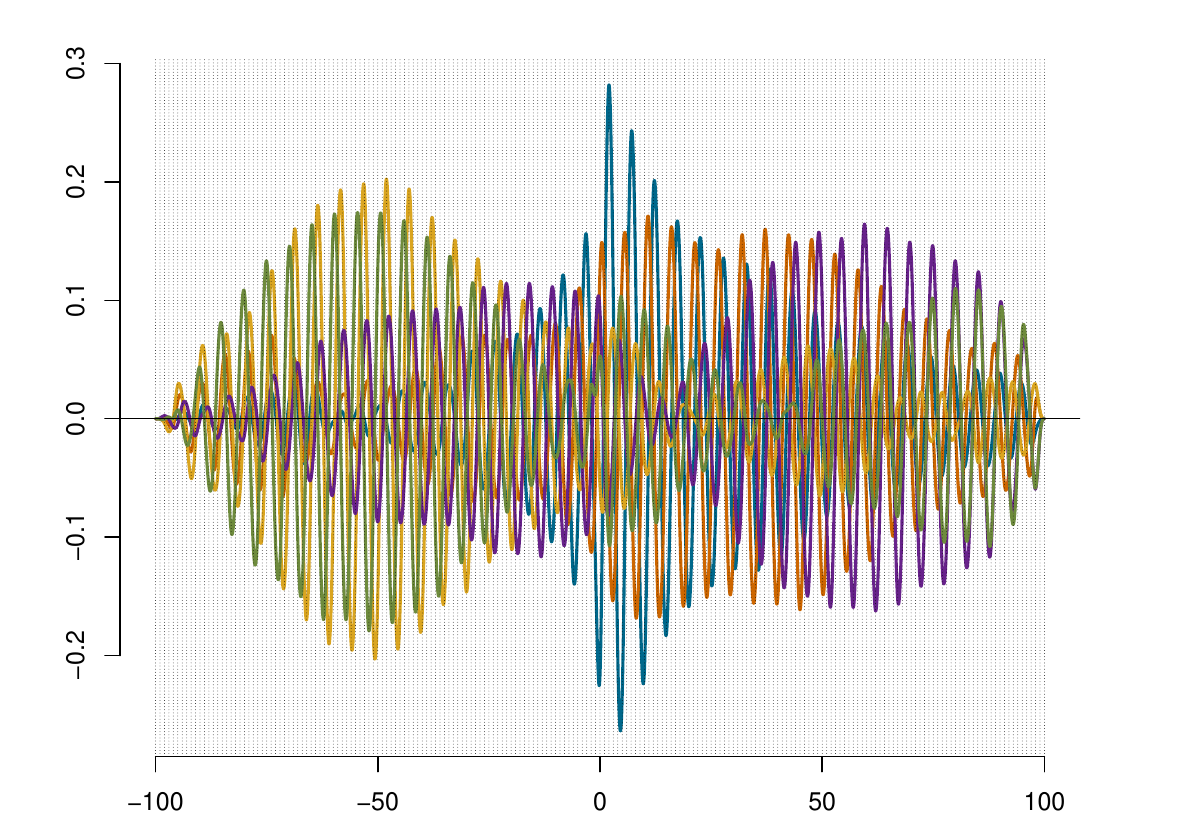}} \includegraphics[width=0.495\textwidth, height=0.42\textwidth]{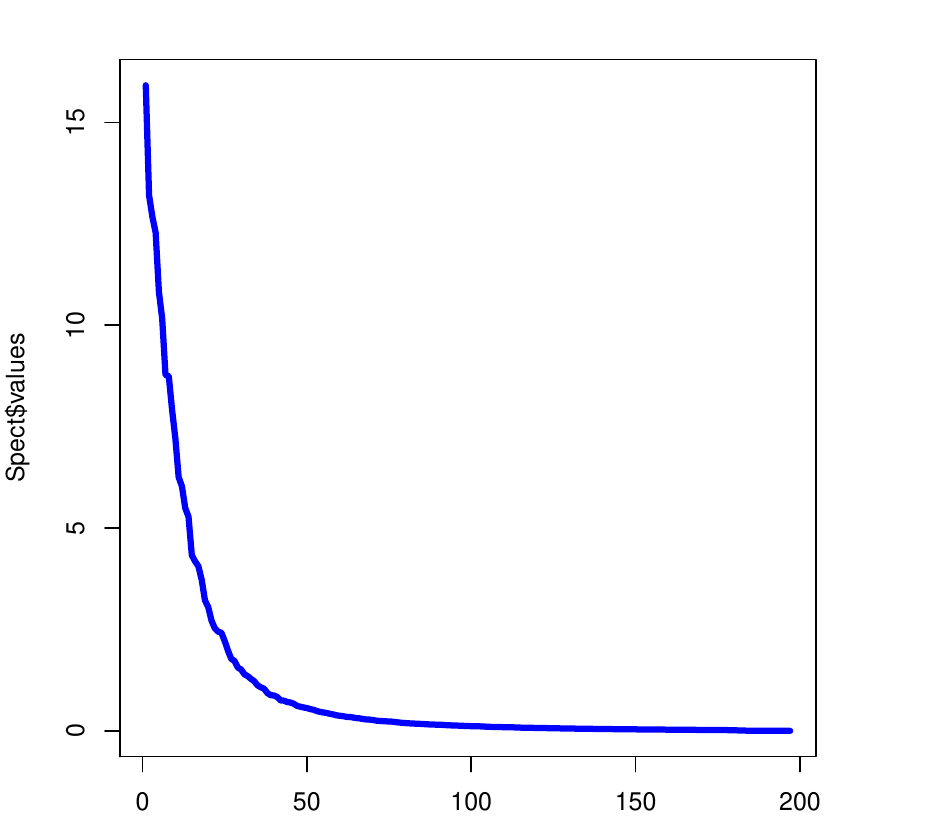} \vspace{-6mm}\\
     \includegraphics[width=0.495\textwidth, height=0.42\textwidth]{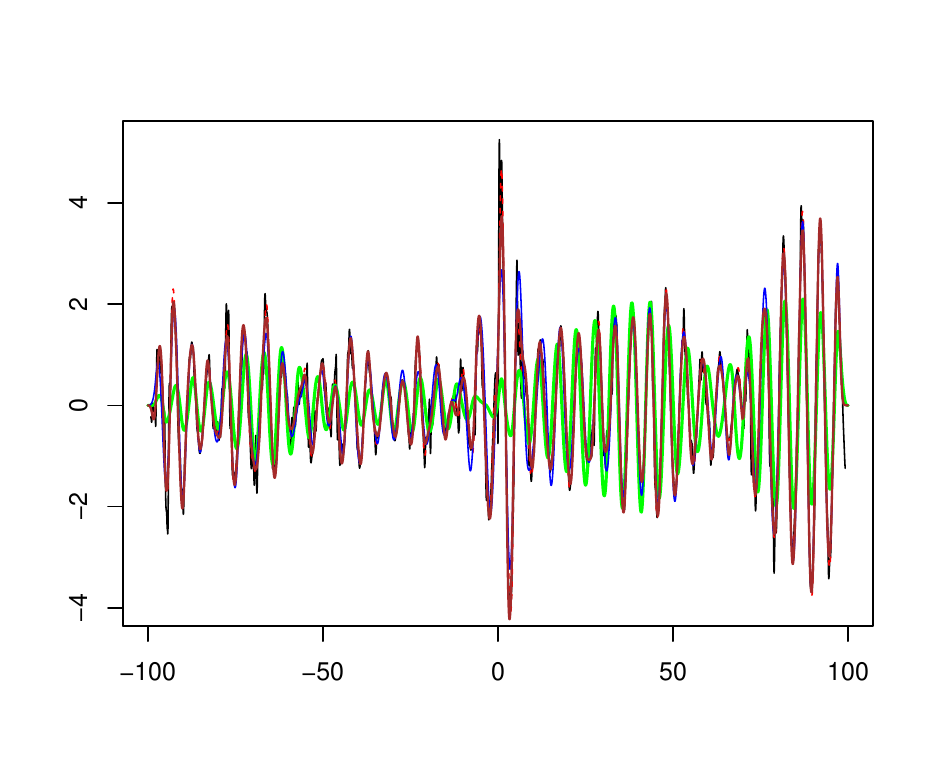} \includegraphics[width=0.495\textwidth, height=0.42\textwidth]{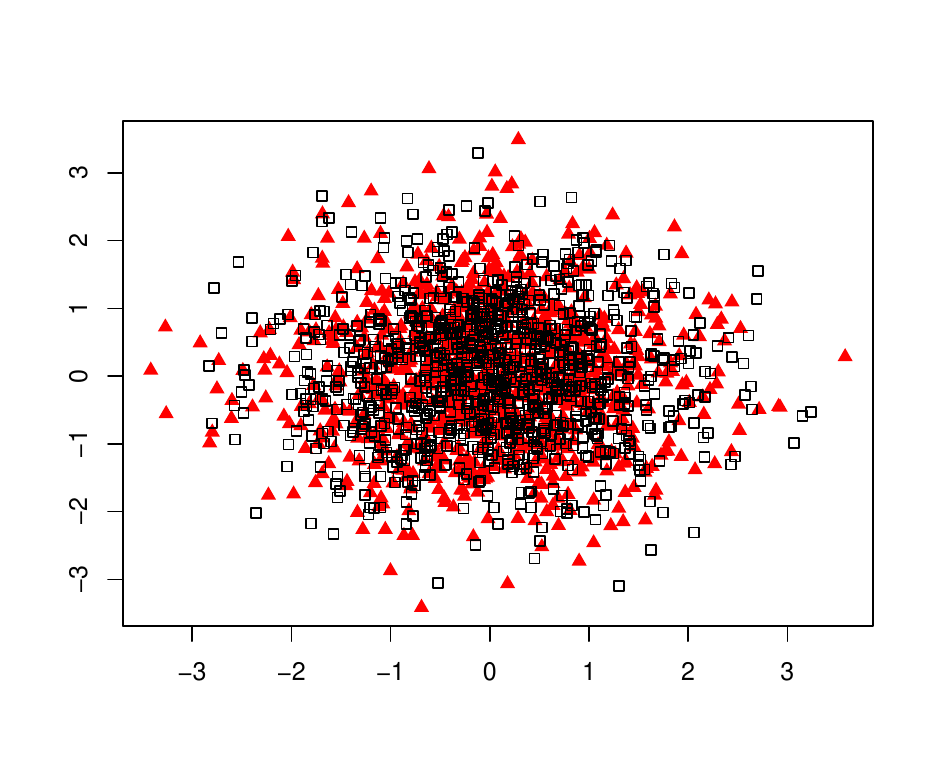}   \vspace{-5mm} 
    \caption{{\it Top:}\, 1000 truck responses to a transient in a road profile {\it (left)}, the mean truck response (thick line) and a single functional data (solid line) and its spline representation (dotted lines)  {\it (right)}.
   {\it Middle:}  The five eigenfunctions {\it (left)} corresponding to the first five largest eigenvalues and the ordered eigenvalues {\it right}. 
   {\it Bottom-left:} A single functional datum (thin-solid line), its reconstruction using full space of splines (dashed-line), and three reconstruction using the 10, 50, and 100 dimensional subspaces
   spanned by the eigenfunctions with the corresponding largest eigenvalues. {\it Bottom-right:} The scatter plot of standardized coefficients in the  Karhunen-Lo$\rm \grave{e}$ve decomposition for the first two eigenvalues (triangles) compared with a scatter plot of a standard normal bivariate distribution (squares).
   }
  \label{fig:TruckTire}
\end{figure}

In the first step, we represent data as splines using {\tt project()} to project data into splines $X_i(t)$, $i=1,\dots, 1000$ of the third order with 201 equally spaced knots.
The first five functional data  together with the mean spline $\mu(t)$ are presented in Figure~\ref{fig:TruckTire}~{\it (Top-right)} and the code applied to the {\it matrix} representation of the data in {\tt Truck} that performs these evaluation is 
\begin{lstlisting}
knots=seq(-100,100, by=1)
TruckProj=project(Truck,knots)
MeanTruck=matrix(colMeans(TruckProj$coeff),ncol=dim(TruckProj$coeff)[2])
MeanTruckSp=lincomb(TruckProj$basis,MeanTruck)
\end{lstlisting}
In the next step, we perform the spectral decomposition of $X_i(t)$ by estimating the eigenvalues $\lambda_i$'s and the corresponding eigenfunctions $f_i(t)$.
 This allows for representing the data according to the Karhunen-Lo$\rm \grave{e}$ve decomposition
$$
X(t)=\mu(t) +\sum_{i=1}^\infty \sqrt{\lambda_i} Z_i f_i(t),
$$
where $Z_i$ are uncorrelated random variables with the mean zero and variance one. 
The following code performs estimation of the spectral decomposition and reconstruct a single functional data point using 10, 50, and 100 eigenfunctions, while in Figure~\ref{fig:TruckTire}~{\it (Middle-left)} we present the estimated eigenvalues in the decreasing order and the first five eigenfuctions are presented in Figure~\ref{fig:TruckTire}~{\it (Middle-right)}.

\begin{lstlisting}[firstnumber=5]
Sigma=cov(TruckProj$coeff)
Spect=eigen(Sigma,symmetric = T)
EigenFunct=lincomb(TruckProj$basis,t(Spect$vec))

C=TruckProj$coeff%*%Spect$vec
D1R10=lincomb(EgnFncts10,C[1,1:10,drop=F])
D1R50=lincomb(EgnFncts50,C[1,1:50,drop=F])
D1R100=lincomb(EgnFncts100,C[1,1:100,drop=F])
\end{lstlisting}

Finally, the distributional properties of $Z_i$'s variables can be investigated by considering the columns of $1000 \times 197$ matrix {\tt C} standardized by the means and the square roots of the eigenvalues as computed in the code
\begin{lstlisting}[firstnumber=13]
Z1=(C[,1]-matrix(MeanTruck%*%Spect$vec[,1],nrow=1000))/sqrt(Spect$values[1])
Z2=(C[,2]-matrix(MeanTruck%*%Spect$vec[,2],nrow=1000))/sqrt(Spect$values[2])
\end{lstlisting}

A scatter plot of {\tt Z1} and {\tt Z2} is shown in Figure~\ref{fig:TruckTire}~{\it (Bottom-right)} together with a scatter plot of simulated values from the bivariate standard normal variable. 
We observe that coefficients obtained from the data resemble the ones for the standard normal although the former appears to be slightly more leptokurtic.

\section{Acknowledgement} 
The author would like to thank Xijia Liu who is the co-author of the {\tt Splinets}  package for helpful discussions on some parts of the paper. 

\address{Krzysztof Podg\'orski\\
  Department of Statistics\\
   Lund University\\
   Sweden\\
  \email{Krzysztof.Podgorski@stat.lu.se}}

\end{article}

\end{document}